\title{One-clock synthesis problems} 
\author{S{\l}awomir Lasota}{University of Warsaw}{}{https://orcid.org/0000-0001-8674-4470}{Partially supported by ERC grant INFSYS, agreement no. 950398 and 
by the NCN grant 2024/55/B/ST6/01674.}
\author{Mathieu Lehaut}{University of Warsaw}{}{https://orcid.org/0000-0002-6205-0682}{Partially supported by the NCN grant 2021/41/B/ST6/00535.}
\author{Julie Parreaux}{Univ Rennes, Inria, CNRS, IRISA}{}{https://orcid.org/0009-0009-2744-780X}{Partially supported by the ANR project BisoUS (ANR-22-CE48-0012).}
\author{Rados{\l}aw Pi{\'o}rkowski}{University of Oxford}{}{https://orcid.org/0000-0002-9643-182X}{}
\authorrunning{S.~Lasota, M.~Lehaut, J.~Parreaux, R.~Pi{\'o}rkowski} 
\keywords{timed automata, register automata, \Buchi-Landweber games, Church synthesis problem, reactive synthesis problem} 
\newcommand{\pspace}{\textsc{PSpace}\xspace}
\newcommand{\para}[1]{\vspace{-2mm}\subparagraph*{#1.}}
\newcommand{\Buchi}{B\"uchi\xspace}
\newcommand{\wlogg}{w.l.o.g.\xspace}
\newcommand{\tuple}[1]{\ensuremath{\langle #1 \rangle}\xspace}
\newcommand{\R}{\mathbb{R}}
\newcommand{\N}{\mathbb{N}}
\newcommand{\Npos}{\mathbb{N}_{>0}}
\newcommand{\Q}{\mathbb{Q}}
\newcommand{\Time}{\mathbb{T}}
\newcommand{\Qpos}{\Q_{\geq 0}}
\newcommand{\Rpos}{\R_{\geq 0}}
\newcommand{\rresNTA}{\ensuremath{\text{reach-}\textsc{NTA}^{\text{res}}\xspace}}
\newcommand{\resNTA}{\ensuremath{\textsc{NTA}^{\text{res}}}}
\newcommand{\rNTA}{\ensuremath{\text{reach-}\textsc{NTA}}\xspace}
\newcommand{\NTA}{\ensuremath{\textsc{NTA}}\xspace}
\newcommand{\NRA}{\ensuremath{\textsc{NRA}}\xspace}
\newcommand{\DTA}{\ensuremath{\textsc{DTA}}\xspace}
\newcommand{\DRA}{\ensuremath{\textsc{DRA}}\xspace}
\newcommand{\DFA}{\ensuremath{\textsc{DFA}}\xspace}
\newcommand{\A}{\mathcal{A}}
\newcommand{\C}{\mathcal{C}}
\newcommand{\Cl}{\mathcal{X}}
\newcommand{\val}{\nu}
\newcommand{\Gu}{\mathrm{Guard}}
\newcommand{\wordToVal}{\mathrm{val}}
\newcommand{\Locs}{L}
\newcommand{\loc}{\ell}
\newcommand{\LocsI}{\Locs_i}
\newcommand{\LocsF}{\Locs_f}
\newcommand{\Trans}{\Delta}
\newcommand{\NTAEx}{\ensuremath{\tuple{\Locs, \Cl, \Sigma, \LocsI, \LocsF, \Trans}}}
\newcommand{\Lang}{\mathcal{L}}
\newcommand{\zeroval}{{0}^\Counter}
\newcommand{\M}{\mathcal{M}}
\newcommand{\inc}{\mathrm{inc}}
\newcommand{\dec}{\mathrm{dec}}
\newcommand{\zt}{\mathrm{zt}}
\newcommand{\ztest}{\mkern3mu{\sequals\mkern-1mu{0}{?}}}
\newcommand{\increment}{\mkern2mu{\splus\mkern-1mu\splus}\mkern1mu}
\newcommand{\decrement}{\mkern2mu{\sminus\mkern-1mu\sminus}\mkern1mu}
\newcommand{\op}{\mathrm{op}}
\newcommand{\Counter}{\mathcal{C}}
\newcommand{\counter}{c}
\newcommand{\States}{S}
\newcommand{\state}{s}
\newcommand{\Instructions}{\mathcal{I}}
\newcommand{\instruction}{I}
\newcommand{\MRuns}{\mathrm{Runs}}
\newcommand{\run}{\rho}
\renewcommand{\Game}{\mathcal{G}}
\newcommand{\Subject}{\ensuremath{\mathrm{Agent}}\xspace}
\newcommand{\Owner}{\ensuremath{\mathrm{Owner}}\xspace}
\newcommand{\subject}{agent\xspace}
\newcommand{\owner}{owner\xspace}
\newcommand{\Timer}{\ensuremath{\mathrm{Timer}}\xspace}
\newcommand{\Monitor}{\ensuremath{\mathrm{Monitor}}\xspace}
\newcommand{\TimerAlphabet}{\mathrm{T}}
\newcommand{\MonitorAlphabet}{\mathrm{M}}
\newcommand{\Controller}{\A}
\newcommand{\strategy}{\sigma}
\newcommand{\strategyT}{\strategy_\mathrm{T}}
\newcommand{\strategyM}{\strategy_\mathrm{M}}
\newcommand{\Win}{W_\M} 
\newcommand{\Reach}{\mathrm{Reach}}
\newcommand{\ok}{\ensuremath{\cmark}\xspace}
\newcommand{\err}{\ensuremath{\xmark}\xspace}
\newcommand{\Ok}{\ensuremath{\cmark}\xspace}
\newcommand{\errR}{\ensuremath{\xmark}_\mathrm{R}\xspace}
\newcommand{\errA}{\ensuremath{\xmark}_\mathrm{A}\xspace}
\newcommand{\errB}{\ensuremath{\xmark}_\mathrm{B}\xspace}
\newcommand{\errC}{\ensuremath{\xmark}_\mathrm{C}\xspace}
\newcommand{\RegI}{\mathrm{Reg}}
\newcommand{\RegIM}{\RegI_{\M}}
\newcommand{\RegIMT}{\RegI_{\M}^\Time}
\newcommand{\RegIMTloc}{\RegI_{\M}^{\Time,\ell}}
\newcommand{\RegII}{\mathrm{Reg}_{\M}}
\newcommand{\AM}{A_{\M}^{\Time}}
\newcommand{\BM}{B_{\M}^{\Time}}
\newcommand{\CM}{C_{\M}^{\Time}}
\newcommand{\AMloc}{A_{\M}^{\Time,\ell}}
\newcommand{\BMloc}{B_{\M}^{\Time,\ell}}
\newcommand{\CMloc}{C_{\M}^{\Time,\ell}}
\newcommand{\ErrLang}{\mathrm{Err}_\M}
\newcommand{\NotErrLang}{\mathrm{Ok}_\M}
\newcommand{\ErrLangOne}{\mathrm{Err}^{\mathrm{R1}}_\M}
\newcommand{\NotErrLangOne}{\mathrm{Ok}^{\mathrm{R1}}_\M}
\newcommand{\ErrLangTwo}{\mathrm{Err}^{\mathrm{R2}}_\M}
\newcommand{\NotErrLangTwo}{\mathrm{Ok}^{\mathrm{R2}}_\M}
\newcommand{\pool}{\ensuremath{\square}}
\newcommand{\FracPart}{\mathbb{F}}
\newcommand{\fracpart}[1]{\mathrm{frac}(#1)}
\tikzset{every loop/.style={looseness=7}, >=latex}
\tikzset{every picture/.style={>=latex}}
\tikzstyle{PlayerMin}=[draw,circle,minimum size=4mm,inner sep=1.5pt]
\tikzstyle{PlayerMax}=[draw,rectangle,minimum size=7mm,inner sep=1.5pt]
\tikzstyle{Player}=[draw,diamond,minimum size=7mm,inner sep=1.5pt]
\tikzstyle{target}=[circle, minimum size=1mm,inner sep=-2pt]
\tikzstyle{PlayerMinmin}=[draw,circle, minimum size=1.5mm,inner sep=0pt]
\tikzstyle{PlayerMaxmin}=[draw,rectangle,minimum size=1.5mm,inner sep=0pt]
\tikzstyle{PlayerMinsmall}=[draw,circle, minimum size=3mm,inner sep=0pt]
\tikzstyle{PlayerMaxsmall}=[draw,rectangle,minimum size=3mm,inner sep=0pt]
\tikzstyle{leaf}=[draw,diamond,minimum size=7mm,inner sep=1.5pt]
\tikzstyle{strat} =[minimum width=0.1cm,line width=0.01mm,draw=none]
\tikzstyle{vecArrow} = [decoration={markings,mark=at position
\newcommand{\winContMonitor}{\mathrm{winCont}^{\mathrm{M}}_\A}
\newcommand{\intsize}{\mathrm{size}}
\newcommand{\ValEnc}{\mathrm{ValEnc}}
\newcommand{\RunEnc}{\mathrm{RunEnc}}
\newcommand{\enc}{\mathrm{enc}}
\newcommand{\sem}[1]{\llbracket#1\rrbracket}
\newcommand{\untimed}{\mathrm{untime}}
\newcommand{\projA}{\mathrm{proj}_{\TimerAlphabet,\Time}}
\newcommand{\projAA}{\mathrm{proj}_\TimerAlphabet}
\newcommand{\projB}{\mathrm{proj}_\MonitorAlphabet}
\theoremstyle{definition}
\newcommand{\cmark}{\mathord{\raisebox{-0.18pt}{\includegraphics[height=1.04\fontcharht\font`A]{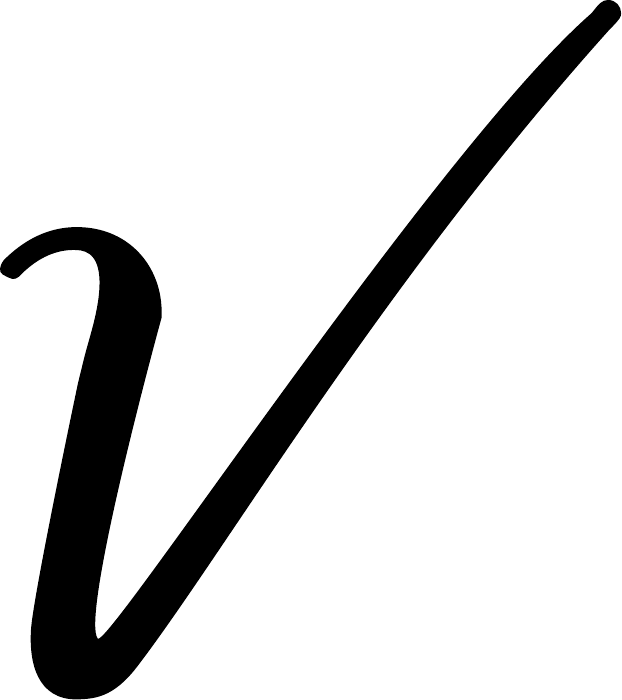}}}}
\newcommand{\xmark}{\mathord{\raisebox{-0.16pt}{\includegraphics[height=1.04\fontcharht\font`A]{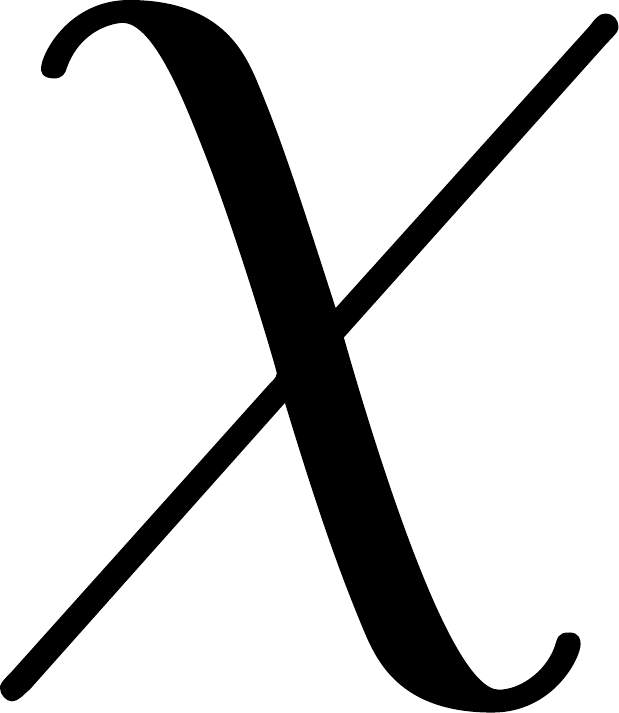}}}}
\newcommand{\sminus}{\mathop{\raisebox{-0.16pt}{\includegraphics[height=1.04\fontcharht\font`A]{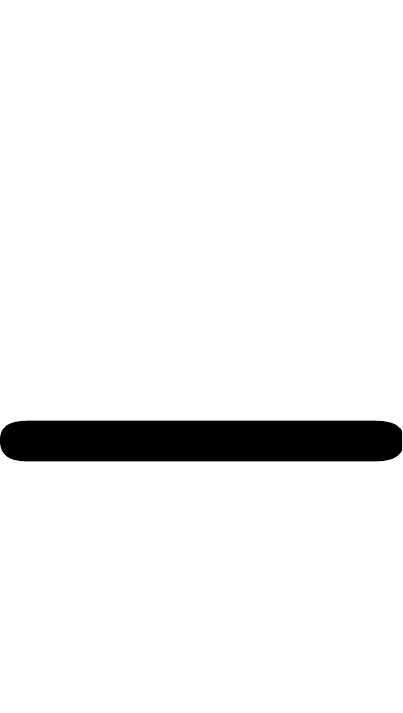}}}}
\newcommand{\splus}{\mathop{\raisebox{-0.16pt}{\includegraphics[height=1.04\fontcharht\font`A]{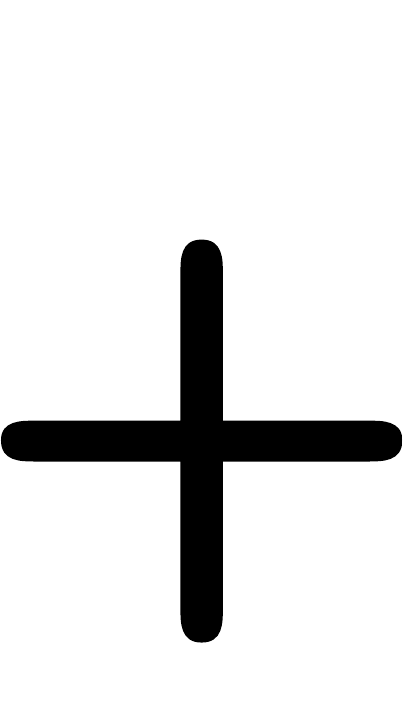}}}}
\newcommand{\sequals}{\mathop{\raisebox{-0.16pt}{\includegraphics[height=1.04\fontcharht\font`A]{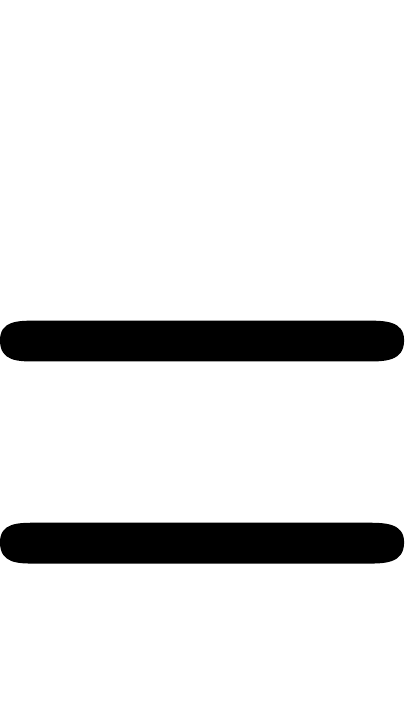}}}}
\newcommand{\suchthat}{\suchthatSymbol\PackageWarning{Radeks Macro}{Command suchthat used outside of matching PairedDelimiter was used on input line \the\inputlineno.}}
\newcommand\suchthatSymbol[1][]{\nonscript\:#1\vert\allowbreak\nonscript\:\mathopen{}}
\DeclarePairedDelimiterX{\setInner}[1]\{\}{\renewcommand\suchthat{\suchthatSymbol[\delimsize]}#1}
\NewDocumentCommand{\set}{ O{a} m }{\ifx#1a\setInner*{#2}\else\ifx#1b{\{#2\}}\else\setInner[#1]{#2}\fi\fi}
\newcolumntype{C}[1]{>{\centering\arraybackslash}p{#1}}
\newcolumntype{M}[1]{>{\centering\arraybackslash}m{#1}}
\definecolor{my-purple-the-lightest}{HTML}{fbf7ff}
\definecolor{my-purple-very-light}{HTML}{f1ebf7}
\definecolor{my-purple-very-light-half}{HTML}{6937A5}
\tikzset{myPurpleVeryLightOverlay/.style={fill=my-purple-very-light-half,fill opacity=0.1,text opacity=1}}
\definecolor{my-purple-light}{HTML}{d5c0ea}
\definecolor{my-purple}{HTML}{6700CE}
\definecolor{my-purple-dark}{HTML}{3C0078}
\definecolor{my-purple-very-dark}{HTML}{1c0038}
\definecolor{my-grey-the-lightest}{HTML}{FAFAFA}
\definecolor{my-grey-very-light}{HTML}{f2f2f2}
\definecolor{my-grey-very-light-half}{HTML}{7D7D7D}
\tikzset{myGreyVeryLightOverlay/.style={fill=my-grey-very-light-half,fill opacity=0.1,text opacity=1}}
\definecolor{my-grey-light}{HTML}{E3E3E3}
\definecolor{my-grey}{HTML}{D4D4D4}
\definecolor{my-grey-dark}{HTML}{777777}
\definecolor{my-grey-very-dark}{HTML}{404040}
\definecolor{my-grey-the-darkest}{HTML}{111111}
\definecolor{my-yellow-the-lightest}{HTML}{fff6d9}
\definecolor{my-yellow-very-light}{HTML}{ffedb3}
\definecolor{my-yellow-light}{HTML}{ffdf78}
\definecolor{my-yellow}{HTML}{fcc512}
\definecolor{c1}{HTML}{00C5DF}
\definecolor{c2}{HTML}{FF2EB4}
\definecolor{c3}{HTML}{86CF00}
\definecolor{c4}{HTML}{FF6B00}
\definecolor{c5}{HTML}{AD30FF}
\tikzstyle{overbrace text style}=[font=\small, above, pos=.5, yshift=2mm]
\tikzstyle{overbrace style}=[decorate,decoration={brace,raise=1mm,amplitude=3pt}]
\tikzstyle{underbrace style}=[decorate,decoration={brace,raise=1mm,amplitude=3pt,mirror}]
\tikzstyle{underbrace text style}=[font=\small, below, pos=.5, yshift=-2mm]
\begin{document}

\maketitle

\begin{abstract}
We study a generalisation of Büchi-Landweber games to the timed setting. 
The winning condition is specified by a non-deterministic timed automaton,
and one of the players can elapse time. We perform a systematic study of synthesis problems in 
all variants of timed games, depending on which player's winning condition 
is specified, and which player's strategy (or controller, a finite-memory strategy) 
is sought. 

As our main result we prove ubiquitous undecidability in all the variants, 
both for strategy and controller synthesis, already for winning 
conditions specified by one-clock automata. This strengthens and generalises 
previously known undecidability results. We also fully characterise those cases 
where finite memory is sufficient to win, namely existence of a strategy
implies existence of a controller.

All our results are stated in the \emph{timed} setting, while analogous results 
hold in the \emph{data} setting where one-clock automata are replaced by
one-register ones. 
\end{abstract}

\newpage

\section{Introduction}
\label{sec:intro}


Nondeterministic timed automata (\NTA)
are one of the most widespread models of real-time reactive systems with a huge literature.
They consist of finite automata extended with real-valued clocks
which can be reset and compared by inequality constraints.
Since the  seminal paper showing  \pspace-completeness of the reachability problem~\cite{Alur1994},
%
\NTA found their  way to the automatic verification of timed systems,
eventually leading to mature tools such as UPPAAL~\cite{Behrmann:2006:UPPAAL4},
UPPAAL Tiga (timed games)~\cite{CassezDavidFleuryLarsenLime:CONCUR:2005},
and PRISM (probabilistic timed automata)~\cite{KwiatkowskaNormanParker:CAV:2011}.
%
%
One of the restrictions of this model is the lack of determinisation and closure under complementation
of \NTA languages, as well as undecidability of universality/inclusion problems~\cite{Alur1994}.
One recovers decidability for a restricted subclass of \NTA with one clock ($\NTA_1$)~\cite{OW04},
and even for alternating timed automata with one clock~\cite{Lasota2008}.

\emph{Deterministic timed automata} (\DTA) form a strict subclass of \NTA
where the next configuration is uniquely determined by the current one and the timed input symbol.
This class enjoys stronger properties,
such as decidable inclusion problems and complementability~\cite{Alur1994},
and it is used in several applications,
such as test generation~\cite{NielsenSkou:STTT:2003},
fault diagnosis~\cite{BouyerChevalierDSouza:FOSSACS:2005},
learning~\cite{VerwerWeerdtWitteveen:Benelearn:2007,TapplerAichernigLarsenLorber:FOMATS:2019};
timed games~\cite{AsarinMaler:HSCC:1999,JurdzinskiTrivedi:ICALP:2007,BrihayeHenzingerPrabhuRaskin:ICALP:2007},
and recognisability of timed languages~\cite{Maler:Pnueli:FOSSACS:04}.

\para{Timed games and synthesis problems}
There are many variants of timed games in the literature,
depending on whether the players must enforce a non-Zeno play,
who controls the elapse of time,
concurrent actions, winning conditions, etc.~\cite{AsarinMaler:HSCC:1999,JurdzinskiTrivedi:ICALP:2007,BrihayeHenzingerPrabhuRaskin:ICALP:2007,Wong-Toi-Hoffmann:CDC:1991,
MalelPnueliSifakis:STACS:1995,
AsarianMalerPnueliSifakis:SSSC:1998,
DsouzaMadhusudan:STACS:2002,
deAlfaroFaellaHenzingerMajumdarStoelinga:CONCUR:2003,RF25,
jenkins2011church}.
Following prior works~\cite{Clemente0P-20,Piorkowski-PhD_22},
we consider asymmetric (only one player can elapse time), infinite-duration
turn-based games that can be considered as
timed generalisation of \Buchi-Landweber games~\cite{BuchiLandweber:AMS:1969}.
There are two players, called \Timer and \Monitor,
who play taking turns in a strictly alternating fashion.
In the $i$th round, \Timer chooses a letter~$a_i$ from a finite alphabet along with a nonnegative
rational time delay~$\tau_i$,
and \Monitor responds with a letter~$b_i$ from a finite alphabet.
Together, the players generate an infinite play
$\pi = (a_1, b_1, \tau_1) \, (a_2, b_2, \tau_2) \cdots$.
The winning set of either \Timer or \Monitor is specified
by a \emph{nondeterministic} timed automaton. For comparison,
the easier special case where the winning set is given by a \emph{deterministic} or \emph{history deterministic}
timed automaton has been previously studied (e.g.,~\cite{DsouzaMadhusudan:STACS:2002,BHLST24}).
Given our undecidability results, we do not consider a more general symmetric variant
in which both players may elapse time.

We investigate the \emph{timed reactive synthesis} problem,
which asks if a given player has a strategy ensuring that every play that conforms to the
strategy is winning for that player.
We distinguish four variants of this problem, depending on which player's
winning set is specified (note the lack of complement closure of \NTA languages)
and for which player the winning strategy is sought.
We also study the \emph{timed Church synthesis} problem, which
asks if a given player has a winning finite-memory strategy (controller).
For \Monitor, a controller is a \DTA whose
transitions output letters from \Monitor's alphabet;
for \Timer, it is a \DFA whose
transitions output letters from \Timer's alphabet along with time delays.

This study of timed generalisations of \Buchi--Landweber games was initiated in~\cite{Clemente0P-20} in the setting where \Timer's winning set is specified by an \NTA, and the goal is to synthesise a controller for \Monitor.
The main result of that work is the decidability of the \emph{resource-bounded} timed Church synthesis problem, in which one seeks a controller using at most~$k$ clocks, for a fixed~$k$.
Subsequently, \cite{Piorkowski-PhD_22} established the undecidability of the unrestricted (resource-unbounded) version, even when \Timer's winning set is given by a two-clock \NTA and \Monitor's controller is sought.
The decidability of this problem for $\NTA_1$ (one-clock \NTA) winning sets remained open, as did the status of other variants and of timed reactive synthesis.
The present paper resolves all of these open questions in the negative.

\para{Contribution}
Given as many as eight different decision problems, one could expect
a complex decidability/complexity landscape. 
As our main technical contribution, we show that this is not the case: all
eight decision problems are undecidable already in the simplest case
where the winning sets are specified by $\NTA_1$.
These undecidability results significantly strengthen and generalise 
previously known lower bounds.
They also demonstrate that restricting to $\NTA_1$ does not lead to recovery
of decidability of synthesis (game solving) problems, 
similarly as restricting to $\NTA_1$ does not yield decidability of language
universality and related problems over infinite words
(as opposed to universality and related problems for $\NTA_1$
languages of finite timed words~\cite{OW04}).

Proving undecidability for eight problems required four reductions.
The cases where \Monitor's winning set is specified by an $\NTA_1$ are easily shown
undecidable by reductions
from the $\NTA_1$ universality and sampled universality problems~\cite{Abdulla2007,Lasota2008}.
Undecidability proofs in the other case, where \Timer's winning set is specified, 
constitute the technical core of the paper, and proceed by reductions from two
undecidable problems for lossy counter machines: boundedness and
repeated reachability~\cite{Mayr03,Schnoebelen10a}.
 
Finally, we also address the question of when finite memory is sufficient to win, 
that is when existence of a strategy implies existence of a controller.
On one hand, 
we prove this finite-memory property for the player whose winning set is specified:
for all $\NTA_1$ specifications if this player is \Monitor, and only for the restricted
case of \emph{reachability} $\NTA_1$ specifications if this player is \Timer.
On the other hand, we  demonstrate that the finite-memory property fails
in all other cases: it is not satisfied in general by \Timer when his winning set is specified;
and the property fails for both \Timer and \Monitor when the opponent's winning set is specified,
even in case of reachability $\NTA_1$ specifications.

It is well known 
that \NTA exhibit close similarity to \emph{nondeterministic register automata}
(\NRA), known also as finite-memory automata~\cite{KF94}
(see e.g.~\cite{FHL16} for a formal connection between the models).
Register automata input data values (in place of timestamps) 
and use registers (in place of clocks) to store data values
(\cite{Segoufin06} is an excellent survey of automata models for data setting).
For many language-related problems, like emptiness (reachability), universality
or inclusion, register automata admit exactly the same  
(un)decidability results~\cite{DL09}
that are known for timed automata~\cite{OW04,Lasota2008}.
Language closure properties are also analogous in both settings.
Confirming these deep similarities further,~\cite{Piorkowski-PhD_22} proves  undecidability of
the \emph{timed Church synthesis} problem for winning sets specified by \NTA with two clocks, 
as well as undecidability of the \emph{data Church synthesis} 
for winning sets specified by \NRA with two registers.
Similar studies of data generalisation of \Buchi-Landweber games were also conducted 
independently in~\cite{LeoThesis} and works cited therein~\cite{EFR19,EFK22}.
All our undecidability results transfer from the timed to the data setting:
undecidability holds for all the eight variants of 
\emph{data reactive/Church synthesis} problems,
already when winning sets are specified by $\NRA_1$ (\NRA with one register).
(These further results exceed the scope of the present paper and are planned to
be included in the forthcoming full version of the paper.)

\para{Outline}

We start by defining the setting of timed games and the synthesis problems
(\cref{sec:prelim}).
We also discuss the finite-memory property there.
\Cref{sec:universality} is a warm-up where we deal with the easy cases of synthesis problems
(\Monitor's winning set is specified).
Then in \cref{sec:lossy} we define the undecidable problems for lossy counter machines, 
to be used in the main technical part of the paper,
\cref{sec:Buchi,sec:boundedness}.
In the two latter sections we provide the undecidability proofs in
the hard case where \Timer's winning set is specified.
The last section contains final remarks.

%

\section{Timed synthesis problems}
\label{sec:prelim}

Let $\Qpos$ and $\Npos$ be nonnegative rational numbers, and positive integers, respectively.
We let $\Cl$ be a finite set of variables called \emph{clocks}.
A \emph{valuation} is a mapping $\val \colon \Cl \to \Qpos$
(we prefer to use rational values instead of real ones).
For a valuation $\nu$, a delay $\tau \in \Qpos$ and a subset $Y \subseteq \Cl$ of clocks,
we define the valuation $\val + \tau$ as $(\val + \tau)(x) = \val(x) + \tau$, for all $x \in \Cl$,
and the valuation $\val[Y \coloneqq 0]$ as $(\val[Y \coloneqq 0])(x) = 0$
if $x \in Y$ , and $(\val[Y \coloneqq 0])(x) = \val(x)$ otherwise.
A \emph{guard} on clocks $\Cl$ is a
conjunction of atomic constraints of the form $x \bowtie c$, where 
$x\in\Cl$, $\bowtie\ \in \{\leq, <, =, >, \geq\}$
and $c \in \N$.
An empty guard is denoted by $\top$.
A valuation $\val$ \emph{satisfies} an atomic constraint $x \bowtie c$ if $\val(x) \bowtie c$.
The satisfaction relation is extended to all guards $g$ naturally, and denoted by $\val \models g$.
We let $\Gu(\Cl)$ denote the set of guards over $\Cl$.

Let $\Sigma$ be a finite alphabet.
By a \emph{timed word} over $\Sigma$ we mean a finite or infinite sequence 
$w = (a_1, t_1) \, (a_2, t_2)  \cdots \in (\Sigma\times\Qpos)^\omega$
where the \emph{timestamps} $t_i$ are monotone:
$t_1 \leq t_{2} \leq t_3 \leq \cdots$.
Pairs $(a, t)\in \Sigma\times\Qpos$ are sometimes called \emph{timed letters}.
Unless stated otherwise, we work with infinite timed words.
The sequence of timestamps is determined uniquely by the sequence of
\emph{delays} $\tau_1, \tau_2, \ldots \in \Qpos$, where $\tau_i = t_{i} - t_{i-1}$ (assuming $t_0 = 0$).
The \emph{untiming} of $w$,
denoted by $\untimed(w)$, is the word $a_1 \, a_2 \ldots\in \Sigma^\omega$ 
obtained by removing timestamps.
A timed language is any set of timed words over a fixed alphabet $\Sigma$.

\para{Timed automata}

    A \emph{non-deterministic timed automaton} ($\NTA$)
    $\A=\NTAEx$ consists of
    $\Locs$, a finite set of \emph{locations} with $\LocsI, \LocsF \subseteq \Locs$
    denoting the sets of initial and accepting locations, respectively;
    $\Cl$, a finite set of clocks;
    $\Sigma$, a finite alphabet; and
    $\Trans \subseteq \Locs \times \Sigma \times \Gu(\Cl) \times 2^\Cl \times \Locs$,
    a finite set of transitions.
    A transition $(\ell, a, g, Y, \ell') \in \Trans$ is written as $\ell \xrightarrow{a, g, Y} \ell'$, omitting $g$ and $Y$ whenever they are $\top$ or $\emptyset$, respectively.
    A set $A$ in place of $a$ specifies a set of transitions.
The \emph{semantics} of $\A \in \NTA$ is a timed transition system
$\sem{\A} = (Q, Q_I, \to)$ such that:
$Q = \Locs \times (\Qpos)^{\Cl}$ is the set of \emph{configurations} (i.e.,
location-valuation pairs), $Q_I = \LocsI \times \{0^\Cl\}$ is the set of initial configurations, and
$\to$ is the set of \emph{edges}.
We have $(\loc, \val) \xrightarrow{\delta, \tau} (\loc', \val')$ 
if and only if $\delta = (\loc, a, g, Y, \loc')\in \Trans$ is a transition of $\A$
such that $\val + \tau \models g$, and $\val' = (\val + \tau)[Y \coloneqq 0]$.
A \emph{run} $\run$ in $\A$ is a sequence of edges 
$\run = (\loc_1, \val_1) \xrightarrow{\delta_1, \tau_1} (\loc_2, \val_2) \xrightarrow{\delta_2, \tau_2}
\ldots$ 
of $\sem{\A}$ such that $\loc_1 \in \LocsI$.
A timed word $w = (a_1, t_1) \, (a_1, t_2) \ldots$ over $\Sigma$ \emph{labels}
a run $\run$ of $\A$ when, for all $i \in \Npos$, $\delta_i = (\loc_i, a_i, g, Y, \loc_{i+1})$
and $t_i = \sum_{j= 1}^i \tau_j$.
We adopt \emph{\Buchi} acceptance: a run is \emph{accepting} if is visits an accepting location infinitely often,
in which case we say that the word that labels that run is \emph{accepted} by $\A$.
The \emph{language} of $\A \in \NTA$,
denoted  $\Lang(\A)$, is the set of timed words accepted by $\A$.

We also use  \emph{reachability} acceptance, where a run is accepting
once it visits an accepting location at least once (like in the setting of finite words).
With this acceptance, the language $L$ of \emph{infinite} timed words accepted by an \NTA 
is determined uniquely by the language $L'$ of \emph{finite} timed words having a run ending in an accepting location. We write $L = \Reach(L')$.
So defined \emph{reachability $\NTA$} can be seen as a (strict) subclass of $\NTA$, 
denoted as \emph{$\rNTA$}.

Apart from \rNTA, we consider also other subclasses of \NTA.
For instance $\NTA_k$ consists of nondeterministic timed automata with a fixed number
$k=|\Cl|$ of clocks.
\DTA is the class of \emph{deterministic}
timed automata, where $|\LocsI| = 1$ and for all locations $\loc\in\Locs$ and
letters $a \in \Sigma$, the guards $g$ appearing in transitions of the form $(\loc, a, g, \_, \_)$
form a partition of $\Qpos^{|\Cl|}$.
We also consider a superclass $\resNTA$ 
of \emph{1-resetting} timed automata (defined in \cref{sec:boundedness}),
an extension by a limited form of $\varepsilon$-transitions that reset a clock every time it equals 1.
The class of \NTA with $\varepsilon$-transition is strictly more expressive than \NTA~\cite{BerardPVG-98}; we discuss
this choice at the end of this section, and in \cref{sec:boundedness,sec:conc}.

We combine the notation for sub- and superclasses and write $\rNTA_1$, 
$\rresNTA_1$, for the one-clock automata from \rNTA, and for the 1-resetting extension thereof.

Given $L$ a timed language, we denote by $\widehat{L}$ its \emph{complement}.
The \emph{universality problem} asks, given $\A$, if $\widehat{\Lang(\A)} = \emptyset$. 
This problem is known to be undecidable for $\NTA_1$~\cite{Lasota2008}.

\para{Timed games}

In this paper we consider turn-based games between two players called
here \Timer and \Monitor, 
where a winning condition is given by an $\NTA_1$ (resp.~$\resNTA_1$).

\begin{definition}
A \emph{timed game} $\Game=\tuple{\TimerAlphabet, \MonitorAlphabet, \A, \Owner, \Subject}$ consists of
finite alphabets $\TimerAlphabet, \MonitorAlphabet$ of $\Timer$ and $\Monitor$, respectively,
$\A\in\NTA_1$ (resp.~$\A\in\resNTA_1$) a timed automaton
over the product alphabet $\TimerAlphabet \times \MonitorAlphabet$ whose language $\Lang(\A)$ defines
the winning condition, 
and $\Owner, \Subject \in \{\Timer, \Monitor\}$.%
\end{definition}
Intuitively, \Owner specifies the player who wins a play if it belongs to $\Lang(\A)$.
As \NTA\ are not stable by
complement~\cite{Alur1994}, the choice of the \owner of the winning condition is not innocuous,
and the winning condition of the opponent may no longer be an \NTA language.
We investigate decision problems asking about existence of a winning strategy of
\Subject.
Again, the choice of \Subject is not innocuous, as we do not know if the games studied by us 
are determined.
In view of our undecidability results, we do not consider a symmetric variant of timed games
where both players would submit time delays, as this
variant would generalise the above one.

A timed game proceeds in rounds.
Each $i$th round ($i\in\Npos$) starts by \Timer's choice of $a_i \in \TimerAlphabet$ and a delay $\tau_i \in \Qpos$, 
followed by a response $b_i \in \MonitorAlphabet$ of \Monitor.
This results in a \emph{play} which is a timed word 
$(a_1, b_1, t_1)\, (a_2, b_2, t_2)\, \cdots$ over the alphabet $\TimerAlphabet \times \MonitorAlphabet$,
where $t_i = \sum_{j=1}^{i} \tau_j$.
\Owner wins the play if it belongs to $\Lang(\A)$, otherwise its opponent wins.

A \emph{strategy} for a player is a function that gives its next move as a function of the moves
of its opponent up to now.
Formally, a strategy for \Timer is a function $\strategyT \colon \MonitorAlphabet^* \to \TimerAlphabet \times \Qpos$
whereas a strategy for \Monitor is a function $\strategyM \colon (\TimerAlphabet \times \Qpos)^+ \to \MonitorAlphabet$.
We say that
a play $w=(a_1, b_1, t_1)\, (a_2, b_2, t_2)\, \ldots$ \emph{conforms} to
\Timer's strategy 
$\strategyT$ (resp.~\Monitor's strategy $\strategyM$) if
all (timed) letters in it conform to the strategy's output:
$(a_i, \tau_i) = \strategyT(b_1 \cdots b_{i-1})$ for all $i\in\Npos$
(resp.,~$b_i = \strategyM((a_1,t_1)\cdots (a_i,t_i))$ for all $i\in\Npos$).
A strategy is \emph{winning} for a player if and only if every play $w$ that conforms to it is winning for
that player (i.e., $w\in\Lang(\A)$ if and only if this player is the \owner).

A \emph{controller} is a finite-memory strategy represented by a \DTA with outputs.
Specifically, a \Monitor's controller is a \DTA with outputs from $\MonitorAlphabet$, whose
transitions are of the form
$\delta = (\loc, a, g, Y, \loc', b)$, where $b\in\MonitorAlphabet$ is the output.
The controller induces the strategy $\strategyM$ that maps $w=(a_1,t_1)\cdots (a_i,t_i)$ to the last output
of the run over $w$.
For \Timer, the notion of controller is more tricky since it needs to produce timestamps.
We let \Timer's controllers output pairs $(a,\tau)\in \TimerAlphabet \times \Qpos$,
where $\tau$ is interpreted as the next \emph{delay}.
Formally, it is defined as a \DFA (\DTA with no clocks) with  outputs from $\TimerAlphabet\times\Qpos$,
whose transitions are of the form
$\delta = (\loc, b,\loc', a, \tau)$, where $(a,\tau)\in\TimerAlphabet\times\Qpos$  is the output,
additionally equipped with an \emph{initial move} $(a_1, \tau_1)\in\TimerAlphabet\times\Qpos$.
Note the apparent restriction of \Timer's controllers, compared to general strategies, namely
the set of delays used by a controller is finite.
A controller induces the strategy $\strategyT$ that maps the empty word to $(a_1, \tau_1)$,
and a nonempty word $b_1 \cdots b_i$ to the last output
of the run over $w$.

\para{Timed synthesis problems}

We investigate decision problems to determine whether 
$\Subject$ wins, i.e., whether
\Subject has a winning strategy.
We distinguish four cases, 
depending on the choice of $\Owner \in \{\Timer, \Monitor\}$
and $\Subject \in \{\Timer, \Monitor\}$.
%
Furthermore, in each of the four cases we consider two distinct decision problems: given a timed game, 

\begin{itemize}
    \item  the \emph{timed reactive synthesis} asks if \Subject
    has a winning strategy;
    \item the \emph{timed Church synthesis} asks if \Subject
    has a winning controller.
\end{itemize}

\begin{table}[t]
    \centering
    \begin{tabular}{|m{2.8cm}|M{4.25cm}|M{4.25cm}|}
        \cline{2-3}
        \multicolumn{1}{c|}{} & \multicolumn{1}{c|}{$\Subject = \Timer$} & \multicolumn{1}{c|}{$\Subject = \Monitor$} \\
        \hline
        \makecell[c]{$\Owner = \Timer$} & coincide for $\A\in\rNTA_1$,\par but not for all $\A\in\NTA_1$ & do not coincide,\par even for $\A\in\rNTA_1$ \\
        \hline
        \makecell[c]{$\Owner = \Monitor$} & do not coincide,\par even for $\A\in\rNTA_1$ & coincide for all $\A\in\NTA_1$ \\
        \hline
    \end{tabular}
    \caption{Reactive vs.~Church synthesis.}
    \label{tab:coinc}
\end{table}

\noindent
The problems coincide in some cases
(see Table~\ref{tab:coinc} for a summary).
Most importantly, if \Monitor is both the \owner of the winning condition, and also the \subject of
the synthesis problem, then whenever it has a winning strategy it also has a controller.
Likewise for \Timer, but only in the restricted case of $\rNTA_1$ winning conditions:


\begin{apxtheoremrep}
    \label{thm:equivalence}
    The two synthesis problems coincide, meaning that if \Subject has a winning strategy then it has a winning controller, in the following cases:
    \begin{enumerate}
        \item $\Owner=\Subject=\Timer$ and $\A \in$ $\rNTA_1$.
        \item $\Owner=\Subject=\Monitor$ and $\A \in \NTA_1$.
    \end{enumerate}
\end{apxtheoremrep}

\begin{proof}
	We provide the proof of the two cases mentioned by this theorem.\label{apxproof:equivalence}
	In both cases, having a winning controller implies having a winning strategy, so only the other direction needs to be proved.

	\paragraph*{\boldmath Proof of the first case: $\Owner = \Subject = \Timer$}
	\label{subsec:equivalence-finite}


	\begin{lemma}
		\label{lem:equivalence-finite}
		Let $\Game = \tuple{\TimerAlphabet,\MonitorAlphabet,\A,\Timer,\Timer}$ be a game with $\A \in \rNTA_1$.
		If \Timer has a winning strategy $\strategyT$, then it has a winning controller.
	\end{lemma}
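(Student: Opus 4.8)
The plan is to reduce the infinite-duration timed game to a finite reachability game obtained by a region abstraction, solve that finite game with a positional strategy, and then compile the positional strategy into a \Timer-controller. The central difficulty is that $\A$ is nondeterministic, so \Timer cannot simply follow a single run of $\A$: one must instead track the whole \emph{set} of reachable configurations (a subset construction), and for one-clock automata this set is unbounded in general (this is precisely what makes infinite-word universality undecidable). The observation that rescues the argument is that $\A$ uses \emph{reachability} acceptance, $\Lang(\A) = \Reach(L')$, so \Timer only has to drive some run into $\LocsF$ \emph{once}, after which the play is winning no matter how it continues.

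First I would use the hypothesis to bound the horizon. Consider the tree of all plays conforming to the winning strategy $\strategyT$: it branches only on \Monitor's choice in the finite alphabet $\MonitorAlphabet$, hence is finitely branching. Every infinite branch is a conforming play, so it lies in $\Lang(\A) = \Reach(L')$ and therefore has a finite prefix in $L'$, i.e.\ a prefix on which $\A$ reaches $\LocsF$. Pruning each branch at its first such prefix yields a tree with no infinite branch, so by König's lemma it is finite; let $N$ be its depth. Thus $\strategyT$ forces an $\A$-accepting prefix within $N$ rounds against every \Monitor behaviour. Crucially, $N$ also bounds the number of distinct clock resets that can occur before the game is won, which is what makes the relevant subset construction finite.

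Next I would build the finite region game. Over a horizon of $N$ rounds at most $N$ distinct ``copies'' of the single clock can be alive in the reachable set, all advancing in lockstep and differing by fixed amounts; tracking their regions together with the ordering of their fractional parts gives finitely many region-subset states. I would fix a rational grid $\tfrac1K\Z$ with $K$ large enough (relative to $N$ and the largest constant $M$ of $\A$) so that (i) every required time-successor of a lockstep configuration is realised by some delay in $\tfrac1K\Z$, and (ii) starting from $0$ and using only such delays, all clock values stay in the finite set $\tfrac1K\Z \cap [0, N(M{+}1)]$. The resulting game $\Game_{\mathrm{reg}}$ is a finite reachability game in which \Timer tries to make the tracked set contain an accepting location within $N$ steps; a standard region-bisimulation argument (time-elapse and transitions respect regions, and \Monitor contributes only letters, which lift trivially) shows that \Timer wins $\Game_{\mathrm{reg}}$ because it wins $\Game$.

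Finally I would extract the controller. Being a finite reachability game, $\Game_{\mathrm{reg}}$ admits a positional winning strategy for \Timer. I would turn it into a \DFA-with-outputs whose states are the region-subset states (together with a step counter up to $N$ and a ``done'' flag), whose output in each state is the letter prescribed by the positional strategy together with the concrete $\tfrac1K\Z$-delay realising the prescribed time-successor, and whose transitions read \Monitor's letter and update the tracked set. By construction only finitely many delays occur, and the realisation lemma guarantees that, against any \Monitor play, the concrete word produced reaches a prefix in $L'$ within $N$ rounds; once the ``done'' flag is set the controller outputs arbitrary fixed moves, which is harmless since reachability acceptance is already secured. Hence the induced finite-memory strategy is a winning controller. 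The step needing the most care, and the main obstacle, is the finiteness of the region-subset construction: it hinges on combining reachability acceptance with the horizon bound $N$, since without either ingredient the subset construction for a one-clock \NTA is genuinely infinite.
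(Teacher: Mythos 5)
Your first paragraph, together with the pruning argument of your second, already \emph{is} the paper's entire proof --- and at that point you are done. The tree of plays conforming to $\strategyT$ is finitely branching (branching only on \Monitor's letters), every infinite branch has a finite prefix after which acceptance is secured because $\Lang(\A)=\Reach(L')$, and pruning each branch at such a prefix yields, by K\H{o}nig's lemma, a finite tree. That finite tree can be read directly as the controller: its nodes are the \DFA states, its transitions read \Monitor's letters, and each \Timer node carries the \emph{concrete} pair $(a,\tau)\in\TimerAlphabet\times\Qpos$ that $\strategyT$ outputs there --- finitely many such pairs, since the tree is finite --- with arbitrary behaviour once a node with an accepting prefix is reached. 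No abstraction, re-solving, or re-concretisation is needed: you are not synthesising a strategy from scratch, you are making a \emph{given} one finite-memory, so the nondeterminism of $\A$ and the subset construction you present as the central difficulty never enter the picture.

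The detour through the region-subset game is therefore unnecessary, and it also contains a genuine soft spot. You claim a single fixed grid $\tfrac1K\Z$ such that every required region successor of a grid configuration is realised by a grid delay. This fails for a fixed $K$: if all tracked clock values lie in $\tfrac1K\Z$, the set of delays realising a prescribed target (regions plus fractional-part ordering) is an interval with endpoints in $\tfrac1K\Z$, and an open interval of length exactly $\tfrac1K$ contains no point of $\tfrac1K\Z$. One must refine the grid at each of the $N$ steps (e.g.\ halving the mesh), which still works over a bounded horizon but is not what you wrote. Likewise, ``\Timer wins $\Game_{\mathrm{reg}}$ because it wins $\Game$'' via ``a standard region-bisimulation argument'' is asserting soundness and completeness of a subset-of-regions abstraction that is not off-the-shelf and would need its own proof. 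None of this is needed: keep your first two paragraphs, replace the rest by ``the finite pruned tree, with $\strategyT$'s outputs recorded at its nodes, is the controller,'' and the proof is complete and matches the paper's.
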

	\begin{nestedproof}
		We consider the tree of all plays of $\Game$ that conform to $\strategyT$.
		There are \Timer nodes with only one child (the unique pair in $\TimerAlphabet \times \Qpos$ output by $\strategyT$ on the play so far), and \Monitor nodes with one child per letter in $\MonitorAlphabet$.
		Since we consider infinite plays, this tree is also infinite.
		We show that it is enough to consider only a finite part of this tree to build a winning controller.

		By K\H{o}nig's lemma on trees, there exists an infinite branch in the tree.
		In particular, there exists an infinite word $w$ that conforms to $\strategyT$ and that belongs to $\Lang(\A)$ since $\strategyT$ is winning.
		As $\A \in \rNTA$, there is a finite prefix $w'$ of $w$, such that any continuation of $w'$ is also in $\Lang(\A)$.
		Also $w'$ conforms to $\strategyT$ because it is a prefix of $w$.
		Consider the new tree where we cut the branch after $w'$.
		We repeat this process on all infinite branches until only a finite tree remains.
		All branches in this tree are plays that conform to $\strategyT$ and any continuation is accepted by $\A$.
		We build a controller $\Controller$ that mimics $\strategyT$ as long as the play is still in this finite tree, and then does whatever when outside of this tree.
		Thus, we get a finite controller that is winning.
	\end{nestedproof}

	\paragraph*{\boldmath Proof of the second case: $\Owner = \Subject = \Monitor$}
	\label{subsec:equivalence-infinite}


	\begin{lemma}
		\label{prop:equivalence-infinite}
		Let $\Game = \tuple{\TimerAlphabet,\MonitorAlphabet,\A,\Monitor,\Monitor}$ be a game with $\A \in \NTA_1$.
		If \Monitor has a winning strategy $\strategyM$, then it has a winning controller.
	\end{lemma}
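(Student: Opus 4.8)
\begin{nestedproof}
The plan is to reduce this infinite-state timed game to an equivalent \emph{finite-state} $\omega$-regular game via the region abstraction of $\A$, to solve the latter by the \Buchi--Landweber theorem so as to extract a finite-memory strategy for \Monitor, and finally to compile this finite-memory strategy into a one-clock \DTA controller. The conceptual crux is the following observation: although $\NTA_1$ languages are not determinisable in general, once we quotient the single clock of $\A$ by region equivalence we are left with a \emph{nondeterministic finite} automaton $R(\A)$ over the finite alphabet of \emph{region moves}; over a finite alphabet the winning condition becomes plainly $\omega$-regular, and the standard determinacy and finite-memory-determinacy results for $\omega$-regular games become available. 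Thus the non-determinisability obstruction specific to the timed world disappears after abstraction.

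Concretely, I would build a finite arena whose positions record the current location of $\A$ together with the region of its clock (bounded by the largest constant occurring in $\A$), plus a bit indicating whose turn it is. In each round \Timer first picks a letter $a_i \in \TimerAlphabet$ and a time-successor of the current region (the finitely many regions reachable by letting time elapse), \Monitor then answers with $b_i \in \MonitorAlphabet$, and finally a transition of $\A$ on $(a_i,b_i)$ whose guard is satisfied in the chosen region is taken, updating the region (resetting it to $0$ or leaving it unchanged). The nondeterministic choice of this transition is resolved by the player trying to enforce acceptance, i.e.\ by \Monitor, and the \Buchi condition ``an accepting location of $\A$ is visited infinitely often'' is inherited verbatim. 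Because region equivalence is a time-abstract bisimulation between $\sem{\A}$ and its region quotient, every concrete play projects to an abstract play with the same acceptance status, and conversely every abstract play can be realised by concrete rational delays; this back-and-forth is what makes the abstract game equivalent to $\Game$ for \Monitor.

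Granting this equivalence, the abstract game is a finite game whose objective is recognised by the nondeterministic \Buchi automaton $R(\A)$ over a finite alphabet, hence $\omega$-regular. By the \Buchi--Landweber theorem it is determined and the winner has a \emph{finite-memory} strategy (one may, e.g., determinise $R(\A)$ into a deterministic parity automaton by Safra's construction, form the product parity game, and invoke positional determinacy of parity games). Since, by hypothesis and the equivalence of the two games, \Monitor wins the abstract game, \Monitor possesses a finite-memory abstract strategy; note that such a strategy simultaneously prescribes the responses $b_i$ and a consistent resolution of the nondeterminism of $\A$, so it actually \emph{witnesses} an accepting run rather than merely asserting membership in $\Lang(\A)$.

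It remains to lift this finite-memory abstract strategy to a \Monitor controller, that is, a \DTA with outputs from $\MonitorAlphabet$. I would use a single clock to track the region of $\A$'s clock, guards to detect the region boundaries chosen by \Timer, and resets to mirror the transition selected by the abstract strategy; the finite memory of the abstract strategy (including the determinised-automaton component) is absorbed into the finitely many locations of the controller, and the output on each transition is the letter $b_i$ prescribed by the strategy. Every play conforming to this controller then carries, by construction, an accepting run of $\A$, so it lies in $\Lang(\A)$ and \Monitor wins. The main obstacle I anticipate is justifying the soundness of the abstraction for this \emph{asymmetric} game, where \Timer alone chooses rational delays: one has to check that replacing a concrete delay by the region it lands in neither grants \Timer spurious escapes nor deprives \Monitor of achievable responses, so that a winner of the region game is genuinely a winner of $\Game$. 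The finite-memory machinery and the single-clock encoding are then routine; the delicate point is precisely this region-abstraction equivalence, together with the fact that nondeterminism can be resolved online once we are over a finite alphabet.
\end{nestedproof}
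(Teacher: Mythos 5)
There is a genuine gap at the heart of your reduction: you let \Monitor resolve the nondeterminism of $\A$ online in the abstract arena (``the nondeterministic choice of this transition is resolved by \Monitor''), and you then rely on the resulting strategy ``witnessing an accepting run''. In the original game, \Monitor wins a play as soon as \emph{some} accepting run of $\A$ exists on the produced timed word; it is never required to exhibit one letter by letter. Forcing \Monitor to build a run online turns membership in $\Lang(\A)$ into the strictly stronger requirement that $\A$ (equivalently, its region automaton $R(\A)$) be history-deterministic / good-for-games, which fails for general nondeterministic \Buchi automata --- already for clockless ones whose accepting run must guess, say, whether a letter occurs infinitely often. Hence the implication you actually need, namely ``\Monitor wins $\Game$ $\Rightarrow$ \Monitor wins the abstract game'', breaks: taking $\TimerAlphabet$ and $\MonitorAlphabet$ trivial and $\A$ a universal but non-good-for-games \Buchi automaton gives a game that \Monitor wins vacuously yet loses in your arena. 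Your Safra parenthetical would repair this only if the arena did \emph{not} ask \Monitor to pick transitions, i.e.\ if the objective were just ``the induced region word is accepted by $R(\A)$''; but that is not the game you construct, and it contradicts your later claim that the strategy carries an accepting run.

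This first error propagates into the lifting step. Because $\A$ is nondeterministic, after a prefix there is no single ``region of $\A$'s clock'': there is a \emph{set} of reachable location--region pairs whose clocks were reset at different past timestamps and hence have pairwise distinct fractional parts, and a single controller clock cannot track the regions of all of them. A priori the number of relevant reset times --- equivalently, of timestamps at which \Monitor's winning response must change --- is unbounded, and bounding it is precisely the technical core of the paper's proof: it shows that at most $2|\TimerAlphabet|\cdot K$ ``tight'' continuation points need to be remembered (with $K$ the largest constant of $\A$), yielding a $k$-clock region-based strategy that is then pruned to a finite controller via a well-quasi-order argument. Your proposal correctly flags the region-abstraction equivalence for an asymmetric game as the delicate point, but then treats it as a soundness check on a single-run, single-region abstraction; the actual difficulty --- uniformising \Monitor's responses over the uncountably many delays \Timer may play, in the presence of a whole set of region-configurations --- is exactly what gets skipped.
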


    The remainder of this section is devoted to the proof.
	Intuitively, we consider the tree of plays of $\Game$ that conform to $\strategyM$.
	In contrast to the previous case, this tree has unbounded branching:
	each \Timer node has one child for every pair $(a,t) \in \TimerAlphabet \times \Qpos$.
	Moreover, $\strategyM$ may behave in a highly non-uniform in its responses to different $(a,t)$ pairs.

    Intuitively, we make use the fact that the winning condition~$\A$ can distinguish only finitely many points.
    This suggests that a more uniform winning strategy should exist: one that plays the same move over large
	intervals of possible timestamps, up to a small number of distinguished singletons.
    We show how to construct such a strategy using the regions of~$\A$, and how to bound the number of clocks and
	the memory required to implement it, thereby obtaining a controller for~\Monitor.

    \para{Step 1: reduce to a finite set of clocks}
	A strategy has, a priori, an infinite number of clocks: one per level of the underlying tree.
	Bounding the number of clocks in such a strategy consists in making the response of \Monitor on an interval of
	timestamps more uniform: to this aim we use the mechanism of regions.
	More precisely, a strategy with $k$-clocks uses only $k$ clocks to monitor and make decisions on
	the time.
	Formally, a \emph{strategy with $k$-clocks} is an extension of $\DTA_k$ whereas the number of locations
	(and thus transitions) may be infinite.

	To restrict the number of clocks, we define an equivalence relation between winning configurations
	according to the timestamps played by \Timer when we have fixed the two letters
	$(a,b) \in \TimerAlphabet \times \MonitorAlphabet$ played by each player.
	Given a timed word $w$, two letters $(a,b)$, and a timestamp $t \in \Qpos$, we say that $(a,b,t)$ is
	a winning continuation of $w$ if there exists a strategy that wins from $w \cdot (a,b,t)$.
	Then for a fixed $(a,b)$ we partition the set of winning timestamps $t$ into a finite union of
	intervals of size greater than $1$, a finite union of singletons, and at most one interval of size
	less than $1$ starting from the last timestamp of $w$ (as illustrated in~\cref{fig:equivalence-part_timestamps}).
	In particular, we obtain the following lemma.

	\begin{figure}
		\begin{center}\begin{tikzpicture}[
    semithick,
    every state/.style={minimum size=12pt,inner sep=0},
    every node/.append style={font=\small},
    yscale=1,
    xscale=1.89,
    initial text={},
    every initial by arrow/.append style={inner sep=0, outer sep=0}
]

    \def\tstart{0}
    \def\tend{5}

    \foreach \x in {0,0.1,...,5.3} {
        \draw[black!30, line width=0.3pt] (\x,0.12) -- (\x,-0.12);
    }

    \foreach \x in {0,1,...,\tend} {
        \draw[black, line width=0.6pt] (\x,0.12) -- (\x,-0.12);
        \node[below=2pt, font=\small] at (\x,-0.12) {$\x$};
    }
    \node[below=2pt, font=\small] at (\tend+0.35,-0.12) {$t$};
    \draw[thick,->] (\tstart-0.4,0) -- (\tend+0.4,0);

    \begin{scope}
        \clip (-0.2,0) rectangle (5.1,1.6); 

        \newcommand{\drawInterval}[6]{
            \node[#3,anchor=west,xshift=-0.18cm] (ifrom) at (#1,#4) {$#5#1$};
            \node[#3,anchor=east,xshift=0.18cm] (ito) at (#2,#4) {$#2#6$};
            \draw[thick,#3] (ifrom) -- (ito);
        }
        \drawInterval{0}{2}{c1!70!black}{0.375}{(}{)}
        \drawInterval{4.5}{6}{c1!70!black}{0.375}{[}{]}
        \node[c1!70!black] at (3.5,0.375) {${\{3.5\}}$};

        \drawInterval{0}{0.8}{c2!70!black}{0.875}{[}{]}
        \drawInterval{2.8}{4.3}{c2!70!black}{0.875}{[}{]}
        \node[c2!70!black] at (1.8,0.875) {${\{1.8\}}$};

        \drawInterval{2}{5}{c3!70!black}{1.375}{[}{]}
        \node[c3!70!black] at (1,1.375) {${\{1\}}$};
        \node[c3!70!black] at (0,1.375) {${\{0\}}$};
    \end{scope}
    \node[c3!70!black] at (-.3, 1.375) (b) {$b$:};
    \node[c2!70!black] at (-.3, .875) (c) {$c$:};
    \node[c1!70!black] at (-.3, .375) (d) {$d$:};
    \node[c1!70!black] at (5.3,0.375) {$\cdots$};
    \node[] at (-1.2, .875) (past) {$w \rightarrow (a, {+}t)$};
    \coordinate (center) at (-0.65,0.875);
    \draw[-To,thin,rounded corners=1mm] (past) -- (center) |- (b);
    \draw[-To,thin] (past) -- (c);
    \draw[-To,thin,rounded corners=1mm] (past) -- (center) |- (d);
\end{tikzpicture}
		\caption{An example showing a structure of the intervals in which a letter from $\MonitorAlphabet = \{b,c,d\}$ is a winning response when \Timer plays $a$ with delay $t \in \Qpos$.
		Note that since \Monitor has a winning strategy,
		if it conforms to it, then at all times it will have a winning response
			for all delays $t \in \Qpos$.}
		\label{fig:equivalence-part_timestamps}
	\end{figure}

	\begin{lemma}
		\label{lemma:equivalence-infinite_kclocks}
		If $\strategyM$ is winning for \Monitor in $\Game$, there exists a winning strategy for
		\Monitor\ with at most $k$ clocks
		where $k = 2|\TimerAlphabet| \cdot K$, and $K$ is the biggest constant in guards of $\A$.
	\end{lemma}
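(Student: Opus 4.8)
The plan is to realise \Monitor's winning strategy $\strategyM$ by one that consults only a bounded amount of timing information, namely the position of the current timestamp relative to the \emph{regions} of~$\A$. I work throughout with \Monitor's winning region: extending the terminology above, for a play $w$ conforming to $\strategyM$ and a letter $a\in\TimerAlphabet$ I write $W^w_a(t)\subseteq\MonitorAlphabet$ for the set of responses $b$ such that $w\cdot(a,b,t)$ is a winning continuation. Since $\strategyM$ is winning, $W^w_a(t)\neq\emptyset$ for every such $w$, every $a$, and every timestamp $t$ (this is precisely the content of the caption of \cref{fig:equivalence-part_timestamps}): a winning response is always available.

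First I would prove the structural statement illustrated by \cref{fig:equivalence-part_timestamps}, namely that for fixed $w$ and $a$ the map $t\mapsto W^w_a(t)$ is piecewise constant, with pieces forming a finite union of open intervals of length larger than~$1$, finitely many singletons, and at most one initial interval of length smaller than~$1$ abutting the last timestamp of~$w$, all of whose endpoints sit at integer distance at most~$K$ from an integer reference point. The heart of this is a \emph{region-invariance} property: if two timestamps $t,t'$ place the clock of~$\A$ in the same region in \emph{every} configuration reachable after~$w$, then $W^w_a(t)=W^w_a(t')$. To establish it I would take a witnessing winning strategy for $b\in W^w_a(t)$ and transfer it to~$t'$ by a region-preserving time shift: because $\A$ has a single clock compared only to constants $\le K$, such a shift induces a correspondence between runs of~$\A$ over the two continuations that preserves guard satisfaction and \Buchi acceptance, and it relabels \Timer's future moves accordingly, so the shifted strategy stays winning. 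Boundedness of the reference points comes for free: once $\A$'s clock passes~$K$ it rests in the single region $(K,\infty)$ and is no longer distinguished, so only finitely many thresholds ever matter.

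Given this partition, I would bound the number of clocks. \Monitor keeps in its (still possibly infinite) finite control the \emph{region abstraction} of~$\A$ after the current play --- the set of reachable location/region pairs together with the \Buchi bookkeeping --- a finite object that carries all the purely discrete information and needs no clocks. Clocks are required only to \emph{detect}, when \Timer elapses time, which region boundaries are crossed, that is, to realise the thresholds above. Since a reset of $\A$'s clock stops being relevant once~$K$ time units have elapsed, at any moment only the positions a not-yet-expired clock value can occupy --- the points $0,1,\dots,K$ and the open intervals between them, about $2K$ in total --- are relevant, and a careful count that groups them by the triggering letter $a\in\TimerAlphabet$ yields a pool of exactly $k=2|\TimerAlphabet|\cdot K$ clocks. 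The resulting device is an extension of $\DTA_k$ with an unbounded number of locations, i.e.\ a strategy with $k$ clocks.

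Finally I would check that this $k$-clock strategy is winning. By construction it always answers with some $b\in W^w_a(t)$, hence it maintains the invariant that every prefix of the play is a winning continuation; combining this invariant with the \Buchi acceptance of~$\A$ through a K\H{o}nig-style compactness argument shows that every infinite conforming play lies in $\Lang(\A)$. I expect the region-invariance property, together with the resulting bounded shape of the partition, to be the main obstacle: because \NTA are not determinisable and the winning condition quantifies existentially over \Buchi-accepting runs, there is no single clock value of~$\A$ to reason about, so the time-shift must be carried out uniformly across the whole family of parallel runs while preserving acceptance in the limit. Upgrading the pointwise fact ``a winning response exists'' to an invariant that actually sustains an infinite winning play is the second delicate point, and is what the compactness argument is there to settle.
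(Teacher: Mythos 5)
Your overall plan --- use the region structure of $\A$'s single clock to show that \Monitor's winning responses are piecewise constant in the timestamp, and then spend one clock per breakpoint --- is the same as the paper's. Two of your steps, however, do not go through as described.

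First, the counting. You bound the number of relevant boundaries by the number of \emph{regions} of a single clock value (``the points $0,1,\dots,K$ and the open intervals between them, about $2K$ in total''). That is not the right quantity: $\A$ is nondeterministic, so after a play $w$ the set of reachable configurations of $\A$ contains one clock value per possible reset position in $w$, and every not-yet-expired such value (anything reset within the last $K$ time units) contributes its own family of integer-distance thresholds. Hence the partition of $t\mapsto W^w_a(t)$ into constant pieces can a priori have unboundedly many pieces, and ``one clock per region'' does not translate into $2K$ clocks per input letter. The paper gets the bound not by tracking all boundaries but only the \emph{irreplaceable tight} continuations --- isolated winning timestamps $(a,b,t)$ that cannot be traded for an interval of length at least $1$ with some other output letter $b'$ --- and proves (\cref{lem:equivalence-infinite_nb-tight}) that two such points for the same input letter must lie at least one time unit apart, hence at most $K$ of them fit in the window $[t_n,t_n+K]$. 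Your proposal has no analogue of this replacement argument, and without it the bound $k=2|\TimerAlphabet|\cdot K$ is not justified.

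Second, the final verification. You propose to answer, at each step, with \emph{some} $b\in W^w_a(t)$ and to conclude from ``every prefix is a winning continuation'' via a K\H{o}nig-style compactness argument. For a \Buchi objective this implication is false: take $\MonitorAlphabet=\{b,c\}$ and $\Lang(\A)$ the plays in which $b$ occurs infinitely often; then every response is a winning continuation after every prefix, yet always answering $c$ loses. Remaining in the winning region forever does not win a \Buchi game, and no compactness argument repairs this, because the obstruction is lack of progress, not lack of finiteness. The fix is to anchor the uniformised strategy to the given $\strategyM$ --- answer what $\strategyM$ answers on a (region-)equivalent representative history, which is exactly what the paper's region-based strategies accomplish --- rather than selecting an arbitrary pointwise-winning response.
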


	To prove the previous lemma, we start by introducing the notion of \emph{tight} and \emph{large} continuations
	for two letters according to the size of the interval of winning timestamps.

	\begin{definition}
		Let $w \in (\TimerAlphabet \times \MonitorAlphabet \times \Qpos)^*$,
		and $(a, b, t) \in \TimerAlphabet \times \MonitorAlphabet \times \Qpos$ be a timed letter.
		\begin{enumerate}
			\item $w$ is \emph{winning for \Monitor} if there exists a winning strategy
			for \Monitor to complete $w$.
			\item $(a, b, t)$ is a \emph{winning continuation of $w$ for \Monitor}
			when $w \cdot (a, b, t)$ is winning for \Monitor.
		\end{enumerate}
	\end{definition}

	By letting $\winContMonitor(w)$ be the set of winning continuations of $w$ for \Monitor,
	$w$ be a winning word for \Monitor, and $(a, b, t)$ be a winning continuation of $w$ for \Monitor, we fix
	\begin{align*}
		t_{\inf}(w, a, b, t) &= \inf \{t' \mid \forall t'' \in [t', t], (a, b, t'') \in \winContMonitor(w) \} \\
		t_{\sup}(w, a, b, t) &= \sup \{t' \mid \forall t'' \in [t, t'], (a, b, t'') \in \winContMonitor(w) \}\\
		\intsize(w,a,b,t) &= t_{\sup}(w, a, b, t) - t_{\inf}(w, a, b, t)
	\end{align*}
	Now, we are equipped to define tight and large continuations of a timed word.

	\begin{definition}
		Let $w$ be a winning timed word over $(\TimerAlphabet \times \MonitorAlphabet)$ for \Monitor, and
		$(a, b, t) \in \TimerAlphabet \times \MonitorAlphabet \times \Qpos$ be a winning continuation of $w$
		for \Monitor.
		We say that the continuation $(a, b, t)$ is \emph{tight} if $\intsize(w,a,b,t) = 0$,
		and \emph{large} if $\intsize(w,a,b,t) \geq 1$.
	\end{definition}

	Then, given a timed word $w$, we remark that a winning continuation of $w$
	for \Monitor\ is always tight, large or too close of the last timestamp of $w$.

	\begin{lemma}
		\label{lem:constroller_Pu_oneclock-tight}
		Let $w$ be a winning timed word over $(\TimerAlphabet \times \MonitorAlphabet)$ for \Monitor,
		and $(a, b, t) \in \winContMonitor(w)$.
		Then, $t_{\sup}(w, a, b, t) - t_n  \leq 1$ and $t_{\inf}(w,a ,b, t) = t_n$,
		or $(a, b, t)$ is a tight or a large winning continuation of $w$.
	\end{lemma}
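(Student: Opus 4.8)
The plan is to prove the contrapositive: assuming that the winning continuation $(a,b,t)$ is neither tight nor large, i.e.\ $0 < \intsize(w,a,b,t) < 1$, I would show that the maximal winning interval containing $t$ must begin exactly at the last timestamp $t_n$ of $w$ and have length at most~$1$. The backbone is a region argument for the single clock of $\A$. First I would make precise that membership of $(a,b,t')$ in $\winContMonitor(w)$ depends only on the \emph{region} of the clock in each run of $\A$ over $w\cdot(a,b,t')$, where a region records the integer part of the clock capped by the largest constant $K$ together with whether the fractional part is zero. This is a standard timed region bisimulation: if $t',t''$ are such that no run's clock reaches an integer threshold strictly between them, then they induce region-equivalent configuration sets, and since \Monitor's objective is membership in $\Lang(\A)$ (existence of an accepting run), the corresponding positions are simultaneously winning or losing for \Monitor.

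From this invariance I would deduce that, for fixed $(a,b)$, the set $\{t' : (a,b,t') \in \winContMonitor(w)\}$ is a finite union of intervals and isolated points, whose endpoints occur precisely at the instants $t'$ where the clock of some run reaches an integer, that is at $t'$ with $t' - t_r \in \N$ for a reset time $t_r \le t_n$ appearing on a run of $\A$ over $w$. This already yields the partition announced in the surrounding text: singletons come from the integer instants themselves, and the remaining pieces are the open regions between consecutive thresholds.

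Next I would analyze the geometry of these pieces. The thresholds contributed by one \emph{fixed} run are spaced exactly one time unit apart, and between two consecutive such thresholds the clock of that run occupies a single open region $(c,c+1)$. Combined with region-invariance, this shows that a maximal winning component that does \emph{not} touch $t_n$ either collapses to a single threshold instant, hence is tight, or contains a complete open region and therefore has length at least~$1$, hence is large. The only component that can be shorter than a unit without being a singleton is the one adjacent to $t_n$: there the clock values $t_n - t_r$ have not yet reached their next integer threshold, so the component is an interval $[t_n, t_n + s]$ with $s \le 1$, which is exactly the pinned case $t_{\inf}(w,a,b,t)=t_n$ and $t_{\sup}(w,a,b,t)-t_n \le 1$.

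The hard part, and the main obstacle, is ruling out \emph{short interior} components. Different runs over $w$ may reset the clock at timestamps with different fractional parts, so the thresholds they contribute interleave, and two thresholds coming from distinct runs can a priori lie less than a unit apart, seemingly carving out a winning interval of length in $(0,1)$ away from $t_n$. The crux is therefore to establish that the winning condition cannot separate the two ends of such a sub-unit gap: one must show, by applying the region bisimulation to the \emph{whole} reachable configuration set and exploiting that \Monitor only needs some accepting run, that winning is preserved across such a gap, so that every interior winning component in fact spans a full open region and is large. Proving that the sub-unit interleaving is relevant only inside the first window adjacent to $t_n$ is where the genuine work lies; once it is in place, the rest of the statement is routine bookkeeping with regions.
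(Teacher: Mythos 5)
Your outline matches the intended argument --- a case split on whether the maximal winning component touches $t_n$, driven by a region analysis of the single clock of $\A$ --- but the proposal does not close the one step on which the whole lemma rests, and you say so yourself. Concretely: after reading $w$, the nondeterministic $\A$ sits in a \emph{set} of configurations whose clock values $t_n - t_r$ come from runs with different reset times $t_r$, hence with different fractional parts. The integer thresholds $t_r + k$ contributed by distinct runs interleave, so the region of the reachable \emph{tuple} of clock values is constant only on windows that can be strictly shorter than one time unit, arbitrarily far from $t_n$. Whether \Monitor wins from $w\cdot(a,b,t')$ depends on the whole reachable set (it needs only \emph{some} accepting run, so it may win from a pair of configurations while winning from neither singleton), and so a priori the winning set for fixed $(a,b)$ could be exactly one such sub-unit interior window --- e.g.\ the overlap $(\theta_1,\theta_2)$ on which one branch already satisfies $x>c$ while another still satisfies $x<c'$. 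That would be a continuation with $0<\intsize(w,a,b,t)<1$ and $t_{\inf}(w,a,b,t)>t_n$, contradicting the lemma. Your paragraph beginning ``The hard part'' names this exactly and then defers it (``one must show\dots'', ``where the genuine work lies''); nothing in the proposal establishes it, so the proposal is not a proof.

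For comparison, the paper's own argument follows the same route and, in the case $t_{\sup}(w,a,b,t)-t_n>1$, speaks of ``the region of the clock $x$'' as if the reachable clock value were unique, concluding that an open region forces a winning window of length at least one; that step is immediate for a deterministic automaton but is precisely where the multiplicity of reset histories in an $\NTA_1$ has to be confronted. So you have located the soft spot more sharply than the write-up does, but to turn the proposal into a proof you must either (i) show that \Monitor's winning status genuinely cannot change at a threshold lying in the interior of another run's open region --- an argument about the game and the reachable configuration set, not just about $\Lang(\A)$ --- or (ii) weaken the statement to what the subsequent counting lemma actually needs.
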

	\begin{nestedproof}
		Let $t$ be a timestamp such that $(a,b,t)$ is a winning continuation that is neither tight nor large,
		i.e., such that $0 < \intsize(w,a,b,t) < 1$.
		We distinguish two cases.

		First, suppose that the continuation is close to $t_n$ but does not start at $t_n$,
		i.e., $t_{\sup}(w, a, b, t) - t_n  \leq 1$ and $t_{\inf}(w,a ,b, t) > t_n$.
		Since, $(a, b, t_n)$ is not part of the winning continuation of $t$,
		there is some $t_n \geq t' < t$ such that $(a,b,t')$ is not a winning continuation of $w$.
		Thus, when $(a,b,t)$ is played, in $\A$ a different guard from the guard used when $(a,b,t_n)$ is played occurs.
		There are two cases for this new guard:
		\begin{itemize}
			\item The guard is an equality constraint, so $\intsize(w,a,b,t) = 0$ and we get a contradiction.

			\item Otherwise, the guard holds for an open interval of valuations.
			Thus, either $\intsize(w,a,b,t)  \geq 1$ or $t_{\inf}(w,a ,b, t) = t_n$,
			both of which give a contradiction.
		\end{itemize}

		Second, suppose that the continuation is far from $t_n$, 
		i.e., $t_{\sup}(w, a, b, t) - t_n  > 1$.
		Since $t_{\sup}(w, a, b, t) - t_n  > 1$, the region of the clock $x$ of the winning condition
		$\A$ has changed along the reading of the new timed letter, and we denote by $r$ this
		new region\footnote{We use the classical notion, introduced in~\cite{Alur1994}, of \emph{regions}
		for $\NTA$ that are classes of equivalence of $\Rpos^{\Cl}$ that are preserved by the reachability relation,
		and so by language acceptance.
		Note that in $\NTA_1$, regions are unit intervals $(n, n+1)$ with $n < K$, singletons $\{n\}$ with $n < K$, or infinite intervals
		$(K, +\infty)$ where $K$ is the greatest constant that appear in the $\NTA_1$.}.
		\begin{itemize}
			\item If $r$ is an open region (i.e., an interval $(n,n+1)$ for some $n < K$, or $(K, +\infty)$
			where $K$ is the biggest constant in $\A$),
			and if $t'$ is a timestamp such that the reading of $(a, b, t')$ reaches $r$, 
			we have, by the equivalence relation given by the regions,
			$t_{\inf}(w, a, b, t) \leq t' \leq t_{\sup}(w, a, b, t)$, in other words $t'$ is is the winning continuation of $t$.
			Also since the size of $r$ is at least one, so is the size of the set of $t'$ that reach $r$.
			From this, we deduce that the winning continuation of $t$ is large.
	
			\item Otherwise, $r$ is a singleton.
			In this case, either one of the open region next to $r$ contains a winning timestamp for $(a, b)$,
			and we apply the previous item, or none of them has a winning timestamp and the winning
			continuation is tight.
		\end{itemize}
		In both cases, we obtain a contradiction.
	\end{nestedproof}

	To conclude the proof of \cref{lemma:equivalence-infinite_kclocks}, we want to show that it is enough
	to keep track of a finite number of tight continuations and that all other can be replaced by a large
	continuation with a different letter of $\MonitorAlphabet$.
	Formally, with $(a,b,t)$ a tight winning continuation of $w$, we say it is \emph{irreplaceable}
	if there is no $b' \in \MonitorAlphabet$ such that $(a,b',t)$ is a large winning continuation.
	
	\begin{lemma}
		\label{lem:equivalence-infinite_nb-tight}
		Let $w$ be a winning timed word over $\TimerAlphabet \times \MonitorAlphabet$ for \Monitor with last timestamp $t_n$.
		Then, there is at most $|\TimerAlphabet| \cdot K$ irreplaceable tight winning continuations,
		where $K$ is the biggest constant in guards of $\A$.
	\end{lemma}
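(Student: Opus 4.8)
The plan is to prove the bound \emph{per \Timer-letter}: for each fixed $a \in \TimerAlphabet$ I will show there are at most $K$ irreplaceable tight winning continuations of the form $(a, b, t)$, and then sum over the $|\TimerAlphabet|$ letters. Fix such an $a$. By \cref{lem:constroller_Pu_oneclock-tight}, every winning continuation of $w$ is either large, tight, or confined to an interval of size at most~$1$ starting at the last timestamp $t_n$, and a tight continuation arises precisely because the single clock $x$ of $\A$ is driven, when $(a, \cdot, t)$ is read, into a \emph{singleton} region $\{c\}$ whose adjacent open regions are losing. The irreplaceability hypothesis sharpens this: at timestamp $t$ no response $b'$ yields a \emph{large} continuation. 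Since on any maximal interval on which no clock value crosses an integer the region situation — and hence $\winContMonitor(w)$ restricted to $a$ — is constant, \cref{lem:constroller_Pu_oneclock-tight} forbids isolated winning intervals of positive size away from $t_n$, so a winning response on such an interval is automatically large. Consequently an irreplaceable tight continuation can only occur at a timestamp $t$ where the clock sits exactly on an integer $c$.

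The core step is to convert ``$t$ forces the clock onto an integer $c$ with losing open neighbours and no large alternative'' into an \emph{injection} from the irreplaceable tight continuations for $a$ into the set of singleton regions $\{0\}, \{1\}, \dots, \{K-1\}$, of which there are exactly $K$. I would list the irreplaceable tight continuations in increasing order of timestamp, $t_1 < t_2 < \cdots$, attaching to $t_j$ the integer $c_j$ on which the witnessing clock value lands. The point to establish is that this assignment never repeats a value: once the relevant clock has passed a singleton region $\{c\}$ it enters the adjacent open region, and the absorbing region $(K, +\infty)$ is reached after finitely many crossings, so no integer $c \geq K$ can ever witness tightness (there is no singleton region there). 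Combined with injectivity, this caps the count at $K$, and hence the total at $|\TimerAlphabet| \cdot K$.

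\textbf{Main obstacle.} The genuine difficulty is nondeterminism: after reading the fixed word $w$, the clock of $\A$ does not carry a single value but a finite set of reachable values, one per reset pattern (``track'') realised in $w$, and these tracks advance in lockstep as the delay grows, each crossing integers at its own offset. A naive count of integer crossings therefore scales with $|w|$ rather than with $K$, so the whole weight of the argument falls on showing that all but $K$ of these crossings are in fact \emph{replaceable}. The argument I would develop is that whenever two candidate tight timestamps would map to the same singleton region, the later one necessarily has some track sitting in a winning open region — equivalently, the winning condition, being governed by the finite one-clock region equivalence invoked in \cref{lem:constroller_Pu_oneclock-tight}, already offers a large response there — contradicting irreplaceability; this lets me select a canonical witnessing track and recover a genuine injection into the $K$ singleton regions. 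Formalising which track witnesses tightness, and proving that replaceability necessarily kicks in whenever a singleton region would be revisited, is the step I expect to be most delicate.
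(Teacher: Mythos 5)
Your decomposition into a per-letter bound of $K$, and your diagnosis of the central obstacle --- that nondeterminism leaves a whole \emph{set} of reachable clock values after $w$, so that naively counting integer crossings gives a bound growing with $|w|$ rather than with $K$ --- both match the paper. But the step you defer as ``most delicate'' is exactly where the argument must be carried, and the invariant you propose there is not the right one. You want an injection from the irreplaceable tight continuations for a fixed $a$ into the singleton regions $\{0\},\dots,\{K-1\}$, justified by the claim that if two tight timestamps map to the same singleton region then the later one is replaceable. This is false in general: two reachable clock values $v_1,v_2$ with $v_2-v_1\geq 1$ drive the clock onto the same integer $c$ at two delays that are themselves at least one unit apart, and nothing in the region structure forces either of the resulting continuations to be replaceable --- they correspond to genuinely different region configurations of the set of reachable values, and at the later one every track other than the witnessing one may well sit in a losing open region. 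So the map ``tight continuation $\mapsto$ witnessing singleton region'' need not be injective, and the plan to repair it by choosing a canonical witnessing track has no evident basis.

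The paper's proof uses a different invariant, which is the one to aim for: \emph{temporal spacing}. It first observes that beyond delay $K$ every reachable clock value is in the unbounded region, so the winning-response set is constant on $(t_n+K,+\infty)$ and some response is large there; hence all irreplaceable tight continuations lie within delay $K$ of $t_n$. It then shows that any two irreplaceable tight continuations $(a,b,t)$ and $(a,b',t')$ with the same $a$ satisfy $|t-t'|\geq 1$, which caps the count at $K$ per letter regardless of which singleton regions witness them. The spacing claim is proved by a density/pigeonhole argument on the interval between $t$ and $t'$: since \Monitor wins, every $t''$ in between admits some winning response $b''$; if any such continuation is large it must contain $t$ or $t'$, contradicting irreplaceability (if $b''$ differs from the letter there) or tightness (if it coincides); otherwise all of them are tight, so infinitely many share the same letter of $\MonitorAlphabet$, and since $\A$ has finitely many guards two of these reach the same open region, forcing one of them to be large --- a contradiction. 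Note that this argument never needs to identify which singleton region or which nondeterministic branch witnesses tightness; it only exploits that an infinite winning set inside a unit interval cannot consist solely of tight points.
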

	\begin{nestedproof}
		Intuitively, a tight winning continuation occurs when there is either an equality guard on the transition
		applied, or the current valuation is on the border of the guard on the transition applied.
	
		Now, we note that after $t_n + K$, there exists a large winning continuation on the whole interval
		$(t_n + K,+\infty)$ for some letter $b' \in \MonitorAlphabet$.
		Thus, all irreplaceable tight continuations are between $t_n$ and $t_n + K$.
		Moreover, \Monitor wins in this game; thus for all $(a, t) \in \TimerAlphabet \times \Qpos$ there exists
		some $b \in \MonitorAlphabet$ such that $(a,b,t)$ is a winning continuation.
		To conclude, we prove that if $(a,b,t)$ and $(a,b',t')$ 
		(same $a \in \TimerAlphabet$, different $b,b' \in \MonitorAlphabet$ and $t,t' \in \Qpos$)
		are two tight continuations then $t$ and $t'$ must have distance at least one time unit.
		This ensures that there are at most $K$ tight continuations per letter $a \in \TimerAlphabet$.

		By contradiction, assume there exists two distinct timestamps $t$ and $t'$ with a distance less than $1$
		such that $(a,b,t), (a,b',t')$ are irreplaceable tight continuations.
		But, as stated earlier, for every timestamp $t''$ between $t$ and $t'$, there exists $b'' \in \MonitorAlphabet$
		such that $(a,b'',t'')$ is a winning continuation.
		If one of those continuations is large, then it must include at least one of $t$ and $t'$.
		Say $t$ is included, then either $b'' \neq b$ and then $(a,b,t)$ is not irreplaceable, 
		or $b'' = b$ and then $(a,b,t)$ is not tight, and we obtain a contradiction.
		Thus, suppose that for all $t'' \in [t, t']$, all winning continuations of the form $(a,b'',t'')$ are tight.
		As $\MonitorAlphabet$ is finite, there exist some $b \in \MonitorAlphabet$ such that
		there is an infinite number of timestamps $t_b \in [t, t']$ such that $(a, b, t_b)$ is a tight
		winning continuation.
		Moreover, we note that to separate the two timestamps within one time unit into two distinct regions,
		a guard has to separate the valuations of $x$ after applying each timestamp respectively.
		Thus, as there exists only a finite number of transitions, we deduce the existence of $t_1 < t_2 \in [t, t']$
		such that $(a, b, t_1)$ and $(a, b, t_2)$ are two tight winning continuation and such that the set of regions
		reached by $w \cdot (a, b, t_1)$ and $w \cdot (a, b, t_2)$ contain the same open region.
		Thus, as all timestamps reaching this open region define a winning continuation for $(a, b)$,
		we can conclude that $(a, b, t_1)$ is a large continuation.
		Thus we get a contradiction with the same argument as before.
	\end{nestedproof}
	
	Now we can conclude the proof of \cref{lemma:equivalence-infinite_kclocks}.
	\begin{nestedproof}[Proof of \cref{lemma:equivalence-infinite_kclocks}]
		By Lemma~\ref{lem:equivalence-infinite_nb-tight}, we know that at every step in the game,
		there exists at most $k = 2|\TimerAlphabet| \cdot K$ interesting timestamps, i.e., the irreplaceable tight continuations
		(where the decision of \Monitor may change for a single point),
		as well as upper ($t_{\sup}$) and lower bounds ($t_{\inf}$) of large winning continuations.
		In particular, after each play of \Timer, \Monitor has to save at most $k$ new configurations
		to take its choice, and that can be done with at most $k$ clocks.
		\end{nestedproof}
	
	\para{Step 2: playing on regions}
	From now on suppose that \Monitor has a winning strategy using at most $k$ clocks.
	From this strategy, we define a winning controller
	by limiting first the branching degree of each location (via region equivalence),
	and then the size of each branch (via well quasi orders).

	First, we define the notion of \emph{region-based strategy}.
	To do it, we introduce an (equivalence) relation between two timed words according to s$\A$.
	We denote by $\A(w)$ the set of region-configurations (pairs of location, region) reachable by $\A$ 
	on reading $w$ (this set is not necessarily a singleton since $\A$ is not deterministic).
	Let $w$ and $w'$ be two timed words, we say that $w$ and $w'$ are \emph{equivalent for $\A$}, 
	denoted by $w \sim_\A w'$, when $\A(w) = \A(w')$.
	
	\begin{definition}
		A $k$-clock strategy $\strategy$ is called \emph{region-based} if for all $w$ and $w'$ timed words
		such that $w \sim_\A w'$, and for all $(a, t) \in \TimerAlphabet \times \Qpos$ choices of \Timer,
		we have $\strategy(w, (a, t)) = \strategy(w', (a, t))$.
	\end{definition}
	
	Now, we prove that \Monitor can win with such a strategy.
	
	\begin{lemma}
		\label{lem:equivalence-infinite_region-based}
		If \Monitor has a winning $k$-clocks strategy in $\Game$, then it has a winning region-based $k$-clock strategy.
	\end{lemma}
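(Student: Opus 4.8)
The plan is to obtain the region-based strategy by \emph{uniformising} the given winning $k$-clock strategy $\strategy$ across the equivalence classes of $\sim_\A$. The starting observation is that $\sim_\A$ has finite index: since $\A(w)$ is a subset of the finite set of region-configurations of $\A$, there are only finitely many possible values of $\A(w)$, so only finitely many classes. It therefore suffices to show that \Monitor can commit to a response that depends only on the class $\A(w)$ of the current history and on the current move $(a,t)$ of \Timer, without forfeiting the winning property; the $k$ clocks inherited from $\strategy$ will still be available to implement the region tests on $(a,t)$.

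The key structural fact I would establish is that $\A(w)$ is a \emph{sufficient statistic} for the remainder of the game. Acceptance of a play by $\A$ depends only on the sequence of region-configurations traversed by a run, and by the classical region construction the time-successor relation and guard satisfaction of $\A$ factor through regions: the one-step region-successors of a region-configuration $(\loc,r)$ depend only on $(\loc,r)$ and not on the concrete valuation. Consequently the set of region-behaviours achievable on continuations of $w$ is determined by $\A(w)$ alone, so for any two histories $w \sim_\A w'$ the residual games rooted at $w$ and at $w'$ are related by a region bisimulation. In particular $w'$ is winning for \Monitor whenever $w$ is, and it can be won using the \emph{same} region-level responses.

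With this in hand I would define $\strategy'$ by fixing, for each class reachable along plays conforming to $\strategy$ and each \Timer move, the response prescribed by $\strategy$ on a canonical winning representative of that class, and then argue that every play conforming to $\strategy'$ mirrors, region by region, a play conforming to $\strategy$, hence is accepted by $\A$. Since the responses now depend only on $\A(w)$ (finitely many classes) and on $(a,t)$, the resulting strategy is region-based by construction, which is exactly what is needed to later bound the branching degree of each location.

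The step I expect to be the main obstacle is precisely this transfer of the winning property across a $\sim_\A$-class, because acceptance is \emph{existential} over the runs of the nondeterministic $\A$: a single concrete delay of \Timer sends the various live configurations recorded in $\A(w)$ to possibly different region-successors, and one must ensure that a class-uniform response still lets \Monitor keep at least one accepting run alive. I would handle this by phrasing \Monitor's objective as a \Buchi game on the finite region-abstraction of $\A$, with the existential choice of run folded into \Monitor's moves, invoking positional determinacy of \Buchi games to extract a region-positional winning strategy, and finally lifting it back to the concrete game; the lifting is where the compatibility of the region construction with time elapse carries the argument.
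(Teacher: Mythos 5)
Your proposal matches the paper's proof in its essentials: both define $\strategy'(w,(a,t)) = \strategy([w],(a,t))$ for a canonical representative $[w]$ of each $\sim_\A$-class and argue that the winning property transfers because $\A(w)$ determines the residual objective. The paper dispatches that transfer with a one-line appeal to ``regions preserve winning continuations,'' whereas you elaborate it via a region bisimulation and a finite \Buchi abstraction; this extra justification is sound (residual acceptance of a nondeterministic \Buchi automaton is indeed determined by the set of reachable region-configurations) and fills in what the paper leaves implicit.
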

	\begin{nestedproof}
		Let $\strategy$ be a winning $k$-clocks strategy that is not region-based.
		For each timed word $w$, we let $[w]$ be the representative of the equivalence class of $w$ according to $\sim_\A$.
		We define $\strategy'$ such that $\strategy'(w, (a, t)) = \strategy([w], (a, t))$.
		By definition, $\strategy'$ is a region-based $k$-clocks strategy.
	
		To conclude, we prove that $\strategy'$ is winning.
		By contradiction, suppose there exists a winning timed word $w$ 
		and a choice of \Timer\ $(a, t) \in \TimerAlphabet \times \Qpos$ such that
		$(a, \strategy'(w, (a, t)), t)$ is not a winning continuation.
		However, since $\strategy$ is winning, we have that $(a, \strategy(w, (a, t)), t)$ is a winning continuation.
		In particular, $\strategy(w, (a, t))$ is not necessarily the same as $\strategy'(w, (a, t))$, 
		but $\strategy'(w, (a, t)) = \strategy([w], (a, t))$.
		Moreover, $w \sim_\A [w]$ and regions preserve winning continuations.
		Therefore $(a, \strategy([w], (a,t)), t)$ must be winning as well, which is a contradiction.
	\end{nestedproof}
	
	Now, we can conclude the proof \cref{prop:equivalence-infinite}.
	
	\begin{nestedproof}[Proof of \cref{prop:equivalence-infinite}]
		Assume \Monitor has a winning strategy in $\Game$, and let $K$ be the biggest constant in guards of $\A$.
		In particular, \Monitor admits a winning $k$-clock region-based winning strategy $\strategyM$ with
		$k = 2|\TimerAlphabet| \cdot K$, by \cref{lemma:equivalence-infinite_kclocks,lem:equivalence-infinite_region-based}.
	
		The only remaining step to conclude the proof is to prune infinite branches of $\strategyM$ while
		keeping the winning property, which will yield a winning $k$-clocks controller.
		To do so, we remark that a $k$-clock region-based strategy defines a well-quasi order
		over its configurations (that we can limit to $k$).
		Thus, for every branch, when a decision is given, we reach a smaller set of configurations according
		to this order.
		As this order does not have infinite antichain, all branches end or reach another part of the tree.
		Thus, the strategy obtained is a finite tree that define \DTA\ with outputs.

		Finally, we prove that this controller is winning.
		By contradiction, we suppose that there exists a strategy for \Timer such that the timed word
		produced does not belong to $\Lang(\A)$.
		In particular, against this strategy, the controller play a no longer winning continuation.
		This fact happens when the controller reach a new part of the tree of $\strategyM$.
		Otherwise, we contradict the fact that $\strategyM$ is winning.
		But, this operation is given by the well quasi order, in sense where the set of
		reached configurations by the controller preserved the acceptance condition.
		Thus, we obtain a contradiction with the fact that $\strategyM$ is winning.
	\end{nestedproof}
	This ends the proof of Theorem~\ref{thm:equivalence}.
\end{proof}

\noindent
In all the remaining cases the two synthesis problems do not coincide, namely
existence of a winning strategy does not imply existence of a controller:
first, when 
\Timer is both the \owner of the winning condition, and also the \subject of
the synthesis problem, but the winning condition is an arbitrary $\NTA_1$ language;
second, when the \owner is different from the \subject, even in the case
of reachability $\NTA_1$ winning conditions:

\begin{restatable}{theorem}{thmNonEquivalence}
    \label{thm:nonequivalence}
    The two synthesis problems may not coincide in the following cases:
    \begin{enumerate}
        \item $\Owner\neq\Subject$ and $\A \in$ $\rNTA_1$.
        \item $\Owner=\Subject=\Timer$ and $\A \in\NTA_1$.
    \end{enumerate}
\end{restatable}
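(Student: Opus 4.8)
The plan is to prove non-coincidence by exhibiting, for each listed case, an explicit timed game in which \Subject has a winning strategy but provably no winning controller. Two qualitatively different obstructions are available, corresponding to the two ways in which a controller is weaker than an arbitrary strategy. When $\Subject=\Timer$, a controller is a \DFA and hence uses only \emph{finitely many delays}; when $\Subject=\Monitor$, a controller is a \DTA and hence has a \emph{fixed finite number of clocks and locations}. In each witness the winning condition $\Lang(\A)$ is engineered so that winning forces the corresponding unbounded resource, which an infinite-memory strategy can supply but a controller cannot. I would present item~1 via two small witnesses (one per off-diagonal choice of $\Owner\neq\Subject$) and then treat item~2 separately, explaining why the passage from $\rNTA_1$ to $\NTA_1$ breaks the argument of \cref{thm:equivalence}.

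For the subcase $\Owner=\Timer$, $\Subject=\Monitor$ with $\A\in\rNTA_1$, I would take $\TimerAlphabet=\{a\}$, $\MonitorAlphabet=\{0,1\}$, and let $\A$ be the nondeterministic one-clock reachability automaton whose language is ``there exist two events at distance exactly~$1$ carrying the same \Monitor-label'' (guess the earlier event, store its bit in the finite control, reset the clock, and accept at a later event with clock $=1$ and a matching bit). Since $\Owner=\Timer$, \Monitor wins by keeping the play out of $\Lang(\A)$, i.e.\ by labelling so that any two events exactly $1$ apart receive \emph{different} labels. The key combinatorial fact is that the ``distance-exactly-$1$'' graph on timestamps is always bipartite (every cycle has an even number of $\pm 1$ steps), so a proper $2$-colouring exists: the infinite-memory strategy colouring an event at time $t$ by $\lfloor t\rfloor \bmod 2$ is always winning for \Monitor. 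The hard direction is the controller lower bound, and this is where I expect the main obstacle to lie: one must show that \emph{no} \DTA implements a correct colouring. The plan is a flooding/pigeonhole (fooling-set) argument: \Timer plays a large batch of events at many distinct fractional offsets inside $(0,1)$, then plays their distance-$1$ partners one time unit later; a \DTA has only boundedly many reachable region-configurations, so it cannot keep a clock dedicated to each offset, and two offsets that were coloured differently in the first batch are forced into the same configuration before their partners are read, making their partners' colours coincide and producing a monochromatic distance-$1$ pair. Thus \Monitor has a winning strategy but no winning controller.

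For the subcase $\Owner=\Monitor$, $\Subject=\Timer$ with $\A\in\rNTA_1$, the obstruction is instead the finiteness of a \Timer-controller's delay set, and a very small witness suffices, even with trivial alphabets. I would let $\Lang(\A)$ be the union of two patterns, each recognised by its own nondeterministic branch using the single clock differently: (i) some event occurs at absolute time $\ge 1$ (the clock is never reset, so it equals absolute time, and the guard $x\ge 1$ fires), and (ii) two consecutive events share a timestamp, i.e.\ some delay equals $0$ (a branch that resets the clock at a guessed position and checks $x=0$ at the next event). Both are reachability one-clock conditions, so $\A\in\rNTA_1$. As $\Owner=\Monitor$, \Timer wins by avoiding both patterns forever; the strategy $\tau_k=2^{-k}$ places the $k$-th event at $1-2^{-k}<1$ with strictly positive delays and is therefore winning. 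Conversely, any \Timer-controller uses finitely many delays: if $0$ is ever output it triggers~(ii), and otherwise all delays are bounded below by the least positive delay $\delta>0$, so absolute time diverges and~(i) is triggered. Hence \Timer has a winning (Zeno) strategy but no winning controller.

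Finally, for item~2 ($\Owner=\Subject=\Timer$, $\A\in\NTA_1$) the point is that the coincidence proof of \cref{thm:equivalence}(1) used reachability in an essential way: by K\H{o}nig's lemma it extracted an infinite conforming winning branch and then \emph{truncated} it at a finite prefix after which every continuation is accepted, yielding a finite controller. Under \Buchi acceptance no finite prefix certifies a win, so this truncation is unavailable. The plan is to build a \Buchi $\NTA_1$ whose winning plays for \Timer must sustain, forever, a behaviour that a finite-delay finite-memory controller cannot maintain but an unrestricted strategy can, reusing the accumulation phenomenon of the previous paragraph so that victory requires infinitely many distinct (vanishing) delays recurrently certified by the accepting condition rather than once-and-for-all. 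I expect the genuinely delicate step—and the main obstacle of this item—to be realising ``sustain a fresh, accumulating behaviour forever'' as an authentic single-clock \Buchi condition that is still winnable by some strategy: a single clock cannot simultaneously bound absolute time and certify strict progress, so the construction must route one of these requirements through \Timer's own letters as auxiliary reset markers, and one must then verify both that the strategy wins and that every controller is defeated. This is precisely the reason item~2 needs full $\NTA_1$ rather than $\rNTA_1$, in contrast with \cref{thm:equivalence}(1).
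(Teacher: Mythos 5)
Your two witnesses for item~1 are essentially the paper's own: for $\Owner=\Monitor$, $\Subject=\Timer$ your game (``some event at time $\geq 1$'' or ``a repeated timestamp'', defeated by the Zeno strategy $\tau_k=2^{-k}$ and unachievable with finitely many delays) is the paper's construction verbatim; for $\Owner=\Timer$, $\Subject=\Monitor$ your ``same label at distance exactly~$1$'' condition is the paper's, with a cleaner explicit winning strategy ($\lfloor t\rfloor \bmod 2$, which indeed properly $2$-colours the distance-$1$ graph) in place of the paper's history-based one. Your controller lower bound for that case is only sketched, but the paper's own justification (``winning requires storing unboundedly many different timestamps'') is at the same level of detail, so I would not count this against you; do note, though, that the pigeonhole step as you state it is incomplete, since a controller may colour the whole first batch uniformly --- the real obstruction is that \emph{every} winning controller must effectively track $\lfloor t\rfloor\bmod 2$ (or distinguish partner timestamps from fresh ones) over unbounded time, which a \DTA without $\varepsilon$-resets cannot do.

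Item~2, however, is a genuine gap: you correctly diagnose why the truncation argument of \cref{thm:equivalence}(1) fails under \Buchi acceptance and correctly identify the central difficulty (one clock cannot simultaneously bound absolute time forever and certify strict monotonicity of every step), but you never produce the witness game, which is the entire content of this item. Your proposed escape route --- threading the second requirement through \emph{\Timer's} letters as reset markers --- is also not how the tension gets resolved, and it is unclear it can work, since \Timer would then be certifying its own honesty. The paper's resolution is to hand the certificate to \emph{\Monitor}: take \Timer's alphabet trivial, $\MonitorAlphabet=\{\ok,\err\}$, and let \Timer's winning set be the union of (a)~the \Buchi condition ``all timestamps stay below $1$ and \Monitor plays only $\ok$'' (one branch, clock never reset, guard $x<1$ throughout) and (b)~the reachability condition ``for some $i$, $t_i<t_{i+1}$, the first $i$ responses are $\ok$ and the $(i{+}1)$st is $\err$'' (a separate nondeterministic branch that resets the clock at position $i$ and checks $x>0$ at position $i+1$). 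The two uses of the single clock live on different nondeterministic branches, so no single run needs both. \Monitor is then forced to play $\err$ exactly when \Timer repeats a timestamp, so \Timer must produce a \emph{strictly} monotonic word bounded by $1$, which a controller's finite delay set cannot do. Without this (or an equivalent) concrete construction, item~2 of your proof is not established.
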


%
\begin{proof} 
\fbox{\textbf{1.}}
We start with the case $\Owner=\Timer$ and $\Subject=\Monitor$.
Let \Timer's alphabet be trivial (singleton), and hence 
omitted below; \Timer thus essentially chooses only timestamps.
Let \Monitor's alphabet $\MonitorAlphabet = \{a,b\}$, and let
\Timer's winning condition contain timed words containing some letter ($a$ or $b$)
at distance 1, that is timed words that contain both $(a, t)$ and $(a, t+1)$, or 
both $(b,t)$ and $(b,t+1)$, for some $t\in\Qpos$.
The condition is readily seen to be recognised by a reachability $\NTA_1$.
\Monitor has a winning strategy in this game: it wins by always playing the same letter (say $a$) \emph{unless} timestamp $t-1$ appeared earlier in the play, in which case it plays the opposite of the letter that was played there.
On the other hand, \Monitor has no winning controller, as winning
requires storing unboundedly many different timestamps
(together with letters used at them).

For the case $\Owner=\Monitor$ and $\Subject=\Timer$, 
we choose both alphabets to be trivial (and hence omitted below).
The play is thus just a monotonic sequence 
$t_1 \leq t_2 \leq \cdots$ of timestamps, chosen by \Timer.
Let \Monitor's winning condition consist of those sequences which either exceed 1 at some point
($t_i>1$ for some $i\in\Npos$), or repeat a timestamp 
($t_i = t_{i+1}$ for some $i\in\Npos$).
Both conditions, and hence their union, are recognised by reachability $\NTA_1$.
Furthermore, \Timer has a winning strategy by producing a Zeno word bounded by 1, but
\Timer has no winning controller. Indeed, in order to win, \Timer
should use strictly positive delays only, and therefore, since every controller uses only
finitely many different delays, the bound 1 is inevitably exceeded.

\fbox{\textbf{2.}}
Consider the case $\Owner=\Subject=\Timer$.
We use the same idea as in the previous case: 
\Timer can win only by producing a Zeno word.
Let \Timer's alphabet be trivial (hence omitted below);
\Timer thus essentially chooses only timestamps.
Let \Monitor's alphabet be $\MonitorAlphabet = \{\ok, \err\}$, and let
\Timer's winning condition be the union of two sets:
the timed words 
$w = (b_1, t_1) \, (b_2, t_2)  \cdots \in (\MonitorAlphabet\times\Qpos)^\omega$
that never exceed 1 labelled exclusively by
$\ok$, i.e., $t_i < 1$ and $b_i = \ok$ for all $i\in\Npos$;
and the timed words such that for some $i\in\Npos$ we have
$t_i < t_{i+1}$, $b_1 = \cdots = b_{i} = \ok$ and $b_{i+1} = \err$.
\Timer has a winning strategy by producing a \emph{strictly monotonic} 
Zeno word bounded by 1.
On the other hand a violation of strict monotonicity, namely the equality $t_i=t_{i+1}$, 
is punished by
\Monitor playing $b_{i+1} = \err$.
Therefore, \Timer has no winning controller for the same reason as in the previous case.
\end{proof}

\para{Summary of results}
As our main contribution, we obtain undecidability in all cases of both the reactive and Church synthesis problems,
even under the very restricted case where the timed winning condition is specified by an $\NTA_1$
(or by a $\rresNTA_1$ in one of the variants, see Table~\ref{tab:results} for a summary).
Our reductions in the case of the reactive synthesis problem work uniformly, regardless of which player is the \Subject.
Therefore Table~\ref{tab:results} has six rather than eight entries.
These six cases require four different reductions, and hence
each table entry indicates the problem we reduce from.
When \Timer is the \owner of the winning condition, some undecidable problems for
lossy counter machines (LCM) turn out suitable for reductions,
namely repeated reachability and boundedness~\cite{Schnoebelen10a} 
(these results are the core technical part);
and when \Monitor is the \owner, we use two variants of universality of $\NTA_1$ languages
(these reductions are comparatively simpler).

\begin{table}[h]
    \centering
    \begin{tabular}{|C{1.15cm}|C{3.45cm}|C{4.05cm}|C{3.55cm}|}
        \cline{2-4}
        \multicolumn{1}{c|}{} &
        \multicolumn{1}{c|}{Reactive synthesis} & \multicolumn{2}{c|}{Church synthesis}  \\
        \cline{2-4}
        \multicolumn{1}{c|}{} &
        \multicolumn{1}{c|}{any \Subject}
        & \multicolumn{1}{c|}{$\Subject = \Timer$} & \multicolumn{1}{c|}{$\Subject = \Monitor$}  \\
        \hline\Owner\par\raisebox{0.2ex}{\rotatebox[origin=c]{90}{$=$}}\par\Timer
        & $\A\in\NTA_1$ \par LCM repeated reach. \par (\cref{thm:Buchi}, Section~\ref{sec:Buchi})
        & $\A\in\NTA_1$ \par LCM repeated reach. \par (\cref{thm:Buchi}, Section~\ref{sec:Buchi})
        & $\A\in\rresNTA_1$ \par LCM boundedness \par (\cref{thm:Radek}, Section~\ref{sec:boundedness}) \\
        \hline\Owner\par\raisebox{0.2ex}{\rotatebox[origin=c]{90}{$=$}}\par\Monitor
        & $\A\in\NTA_1$\par $\NTA_1$ universality \par (\cref{thm:universality}, Section~\ref{sec:universality})
        & $\A\in\NTA_1$\par $\NTA_1$ sampled universality \par (\cref{thm:universality-sample}, Section~\ref{sec:universality})
        & $\A\in\NTA_1$\par $\NTA_1$ universality \par (\cref{thm:universality}, Section~\ref{sec:universality}) \\
        \hline
    \end{tabular}
    \caption{Summary of our undecidability results.}
    \label{tab:results}
\end{table}


\noindent
The table specifies the class of winning conditions $\A$ for which we prove undecidability:
$\A\in\NTA_1$, except for one exception
where we need \emph{1-resetting} reachability $\NTA_1$, a slight extension of reachability $\NTA_1$
that uses a limited form of $\varepsilon$-transitions that reset the clock whenever it reaches value 1.
%
We believe that resorting to $\resNTA_1$ does not weaken our results, for the following two reasons:
first, while $\NTA_1$ with $\varepsilon$-transitions correspond to 
nondeterministic one-register automata ($\NRA_1$) \emph{with guessing}
(see \cite{FHL16}),
$\resNTA_1$~still correspond to $\NRA_1$ \emph{without guessing};
second, all our undecidability results translate from the timed setting to the data setting, where the 
winning sets are specified by $\NRA_1$ without guessing.

\section{\boldmath\Monitor is the \owner}
\label{sec:universality}

As a warm-up, we prove undecidability of both synthesis problems in the case where
\Monitor owns the winning condition, by reduction from two variants of the universality problem for $\NTA_1$
(shown undecidable in~\cite{Lasota2008} and~\cite{Abdulla2007}).

\begin{restatable}{theorem}{thmUniversality}
    \label{thm:universality}
    When $\Owner = \Monitor$, the following problems are undecidable:
    \begin{enumerate}
        \item the timed reactive synthesis, irrespectively of \Subject; 
        \item the timed Church synthesis, when $\Subject=\Monitor$.
    \end{enumerate}
\end{restatable}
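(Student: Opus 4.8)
The plan is to reduce the universality problem for $\NTA_1$, which is undecidable~\cite{Lasota2008}, to all three synthesis questions at once (reactive synthesis for both $\Subject = \Monitor$ and $\Subject = \Timer$, and Church synthesis for $\Subject = \Monitor$), using a single degenerate game. Given an $\NTA_1$ automaton $\A$ over an alphabet $\Sigma$, I build the timed game $\Game = \tuple{\Sigma, \{*\}, \A, \Monitor, \Subject}$: \Timer's alphabet is $\TimerAlphabet = \Sigma$, \Monitor's alphabet $\MonitorAlphabet = \{*\}$ is a singleton, the \owner is \Monitor, and the winning condition is $\A$ itself, reading $\Sigma$ as $\Sigma \times \{*\} = \TimerAlphabet \times \MonitorAlphabet$. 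The idea is that the singleton $\MonitorAlphabet$ deprives \Monitor of any genuine choice: its only strategy is forced, and so is its only controller (any \DTA with outputs in $\{*\}$ emits $*$ at every step, hence realises this same forced strategy). Because \Timer may pick any letter of $\Sigma$ together with any delay in $\Qpos$ in each round, the set of plays of $\Game$ is in bijection with the whole domain of (rational-timestamped, Büchi) infinite timed words over $\Sigma$ on which the universality question is posed.

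Next I would establish the two correspondences. For $\Subject = \Monitor$: the unique forced strategy of \Monitor is winning iff every play lies in $\Lang(\A)$, i.e.\ iff $\Lang(\A)$ is universal, equivalently $\widehat{\Lang(\A)} = \emptyset$; since the only controller realises exactly this strategy, existence of a winning controller is equivalent to existence of a winning strategy here, so both the reactive and the Church variant reduce to universality (this is also consistent with \cref{thm:equivalence}). For $\Subject = \Timer$ (reactive synthesis): because \Monitor's moves carry no information, a strategy of \Timer amounts to a fixed timed word $w$ over $\Sigma$, and it is winning for \Timer (the non-\owner) iff $w \notin \Lang(\A)$; hence \Timer has a winning strategy iff $\widehat{\Lang(\A)} \neq \emptyset$, i.e.\ iff $\A$ is \emph{not} universal. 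Note that determinacy of the game is not needed for this last step, since \Monitor is forced. In each case the synthesis answer is a computable yes/no translation of universality or of its complement, and as both a language and its complement are decidable or undecidable together, undecidability of all three synthesis problems follows, proving items~1 and~2.

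The reduction is purely syntactic, so the only content lies in the faithfulness claims, none of which is genuinely hard. The main point to verify is that \Timer can realise exactly the class of timed words over which I invoke universality: \Timer's plays range over all infinite monotone rational-timestamped words over $\Sigma$ (Zeno plays allowed, no non-Zeno constraint), and I must ensure the universality result I cite is read over this same class — which matches the definitions of timed words and of the universality problem given in \cref{sec:prelim}. The secondary point is the $\Subject = \Timer$ case, where I rely on undecidability of non-universality, obtained for free from undecidability of universality. The real value of the argument is the observation that one trivial game simultaneously captures reactive synthesis for either player and Church synthesis for \Monitor, so no separate construction is required for the three instances.
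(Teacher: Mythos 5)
Your proposal is correct and follows essentially the same route as the paper: the same degenerate game with a singleton \Monitor alphabet, \Timer alphabet $\Sigma$, and winning condition $\Lang(\A)$ owned by \Monitor, with the same case analysis (forced \Monitor strategy/controller wins iff $\A$ is universal; \Timer's strategies are exactly timed words over $\Sigma$, winning iff outside $\Lang(\A)$). Your write-up merely spells out in more detail the faithfulness checks that the paper leaves implicit.
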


\begin{proof} 
   We reduce from the $\NTA_1$ universality problem 
   (does the language of a given $\A\in\NTA_1$ contain all infinite timed words?)~\cite{Lasota2008}.
   Given an $\NTA_1$ $\A$ over alphabet $\Sigma$, we construct a
    timed game where     
    \Monitor's alphabet is the singleton $\{\square\}$, \Timer's alphabet is~$\Sigma$, and
    the winning condition of \Monitor is $\Lang(\A)$ (ignoring the letters ``$\square$'').
    Therefore, \Timer\ wins if it produces a timed word over $\Sigma$ that is not in $\Lang(\A)$.
    Thus, \Timer has a winning strategy if $\Lang(\A)$ does not contain all timed words, and \Monitor
     has 
     a winning controller
    otherwise (the trivial memoryless one that keeps emitting ``$\square$'').
%
\end{proof}

\begin{remark}
    The reduction does not guarantee the existence of a controller for \Timer, because \Timer's strategy may need to produce an infinite timed word.
    However, in the case of $\rNTA_1$ the strategy only needs to produce a finite prefix, and therefore
    we obtain HyperAckermann-hardness of
    both synthesis problems, regardless of \Owner and \Subject.
\end{remark}

For  undecidability of the timed Church synthesis problem when $\Subject=\Timer$, we turn to
the \emph{sampled} universality problem, a variant that assumes
the \emph{sampled} semantics $\Lang_\delta(\_)$  of timed automata.
Given $\delta >0$, we say that a timed word has \emph{granularity} $\delta$ if
all its timestamps are multiplicities of $\delta$,
and define $\Lang_\delta(\A)\subseteq\Lang(\A)$ as the language of all timed words
of granularity $\delta$ in $\Lang(\A)$. 
Given an \NTA $\A$,
the \emph{sampled} universality problem asks if
for all $\delta > 0$, the language $\Lang_\delta(\A)$ contains all timed words
of granularity $\delta$.
The problem
(called \emph{universal sampled universality} in~\cite{Abdulla2007})
is known to be undecidable  for $\NTA_1$~\cite[Thm.~3]{Abdulla2007}.

\begin{restatable}{theorem}{thmUniversalitySample}
    \label{thm:universality-sample}
    The timed Church synthesis problem is undecidable when $\Owner = \Monitor$ and $\Subject = \Timer$.
\end{restatable}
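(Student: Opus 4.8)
The plan is to reduce from the sampled universality problem for $\NTA_1$, which is undecidable by~\cite[Thm.~3]{Abdulla2007}, reusing the game built in the proof of \cref{thm:universality}. Given $\A \in \NTA_1$ over an alphabet $\Sigma$, I construct the timed game in which \Monitor's alphabet is the singleton $\{\square\}$, \Timer's alphabet is $\Sigma$, and \Monitor's winning condition is $\Lang(\A)$ (ignoring the letters $\square$). Since \Monitor has no genuine choice, every play conforming to a given \Timer controller is uniquely determined: the controller emits a sequence $(a_1,\tau_1)(a_2,\tau_2)\cdots$ while \Monitor answers $\square$ throughout, producing a timed word $w$ over $\Sigma$, and \Timer wins exactly when $w \notin \Lang(\A)$. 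The goal is therefore to show that \Timer has a winning controller if and only if sampled universality of $\A$ fails, i.e.\ iff some timed word of some granularity $\delta$ lies outside $\Lang(\A)$.

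For the forward direction I would exploit that a \Timer controller is a finite object: it is a \DFA with finitely many transitions, hence it outputs only finitely many rational delays $\tau_1,\dots,\tau_m \in \Qpos$. Choosing $\delta > 0$ so that every $\tau_j$ is an integer multiple of $\delta$ (take $\delta = 1/q$ with $q$ the least common multiple of their denominators), all timestamps produced are multiples of $\delta$, so the play $w$ has granularity $\delta$. If the controller is winning then $w \notin \Lang(\A)$, so $\Lang_\delta(\A)$ already fails to contain all words of granularity $\delta$, and sampled universality fails.

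The main obstacle is the converse, since a controller produces only \emph{ultimately periodic} timed words, whereas a granularity-$\delta$ witness of non-universality need not be of this form. To bridge the gap I would pass, for a fixed granularity $\delta$, to a finite-alphabet encoding of granularity-$\delta$ timed words: encode a delay of $k\delta$ by $k$ copies of a fresh symbol $\mathit{tick}$ followed by the corresponding letter of $\Sigma$, so that a granularity-$\delta$ timed word becomes an $\omega$-word over $\Sigma \cup \{\mathit{tick}\}$ with infinitely many $\Sigma$-letters. Since all timestamps are multiples of $\delta$ and every guard constant is bounded by the largest constant $K$ of $\A$, the clock of $\A$ ranges over the finite set $\{0,\delta,2\delta,\dots,K\}$ together with a single ``above $K$'' value; reading $\mathit{tick}$ advances the clock by $\delta$ (capped), and reading $a \in \Sigma$ fires a transition of $\A$. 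This yields a finite nondeterministic \Buchi automaton $\B_\delta$ over $\Sigma \cup \{\mathit{tick}\}$ accepting exactly the encodings of granularity-$\delta$ words in $\Lang(\A)$, with the \Buchi condition inherited from $\A$.

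The granularity-$\delta$ words lying outside $\Lang(\A)$ then correspond to the $\omega$-regular set $V \setminus \Lang(\B_\delta)$, where $V$ is the $\omega$-regular set of valid encodings (those with infinitely many $\Sigma$-letters); by assumption this set is nonempty, so it contains an ultimately periodic word $uv^\omega$, and since $uv^\omega \in V$ the block $v$ must contain a $\Sigma$-letter. Decoding $uv^\omega$ gives an ultimately periodic granularity-$\delta$ timed word $w'$ that is a genuine infinite word, still lies outside $\Lang(\A)$, and uses only finitely many distinct delay values; it is therefore produced by a finite \Timer controller shaped as a lasso \DFA, which is consequently winning. This establishes the equivalence, and since sampled universality is undecidable, so is the timed Church synthesis problem for $\Owner = \Monitor$ and $\Subject = \Timer$.
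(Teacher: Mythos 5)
Your reduction and the whole forward direction are exactly the paper's: the same game with $\MonitorAlphabet=\{\square\}$ and $\TimerAlphabet=\Sigma$, and the same observation that a \Timer controller emits only finitely many rational delays, whose common granularity $\delta$ witnesses failure of sampled universality. Where you genuinely diverge is the converse. The paper argues directly on a run of $\A$ over the granularity-$\delta$ witness: it normalises all clock valuations exceeding the maximal constant $M$ to a single value so that the run visits only finitely many configurations, extracts a loop avoiding accepting locations, and pumps it to obtain an ultimately periodic non-accepted word. You instead build, for the fixed $\delta$, a finite \Buchi automaton $\B_\delta$ over $\Sigma\cup\{\mathit{tick}\}$ that faithfully simulates $\A$ on granularity-$\delta$ inputs (the clock ranging over the finite set of multiples of $\delta$ up to $K$ plus one saturated value), and then invoke closure of $\omega$-regular languages under complement and intersection together with the fact that a nonempty $\omega$-regular language contains an ultimately periodic word. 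Both arguments are sound, but yours buys something concrete: since $\A$ is nondeterministic, the paper's loop-pumping has to guarantee that \emph{no} run over the pumped word becomes accepting, a point it treats rather informally (it speaks of ``the run''), whereas your detour through $V\setminus\Lang(\B_\delta)$ handles all runs at once and makes the existence of an ultimately periodic counterexample a one-line consequence of standard $\omega$-automata theory. The only details worth spelling out in a final write-up are that the \Buchi condition of $\B_\delta$ must be read off the $\Sigma$-transitions (which is harmless once you restrict to the set $V$ of encodings with infinitely many $\Sigma$-letters) and that $\delta$ is taken rational, as the paper implicitly does.
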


\begin{proof} 
    We proceed similarly as in \cref{thm:universality}, but reduce from the $\NTA_1$ sampled universality problem.
       Given an $\NTA_1\ \A$ over alphabet $\Sigma$, we construct the same
    timed game as in \cref{thm:universality}:
    $\MonitorAlphabet=\{\square\}$,  $\TimerAlphabet=\Sigma$, and
    the winning condition of \Monitor is $\Lang(\A)$ (ignoring ``$\square$'').
    
    If \Timer has a winning controller $\C$ in this game, we take an arbitrary rational $\delta>0$ such that
    all positive delays used by $\C$ are multiplicities of $\delta$, and deduce that
    any infinite timed word output by $\C$ has granularity $\delta$ and is not in $\Lang_\delta(\A)$.
    
    In the opposite direction, suppose a timed word $w$ of granularity $\delta$ is not accepted
    by $\A$.
    Assume, \wlogg, that $\A$ has a run labeled by $w$.
    We build a word $w'$ whose run visits only finitely many
    different configurations.
    Let $M$ be the maximal constant appearing in the guards of $\A$.
    If the valuations in the run labeled by $w$ never exceed $M$, i.e. if $\A$ always resets the clock when it goes over $M$,
    then we take $w' = w$ and we are done because every valuation is always a multiplicity of $\delta$ and there are finitely many 
    thereof between $0$ and $M$.
    Otherwise, let $t_M > M$ be the first valuation in the run that exceeds $M$.
    We obtain $w'$ by modifying $w$ so that every time the valuation in the run is over $M$, it is exactly $t_M$
     (using repeating timestamps if necessary),
    but the run is unchanged when the valuation is at most $M$.
%
    Since valuations over $M$ are indistinguishable by $\A$,
    $w'$ is not accepted by $\A$.
%
    Therefore no accepting locations appear on the run
    from some point on, and hence
    the run necessarily forms some loop without accepting locations on it.
    We replace $w'$ by the timed word $w''$ obtained by letting $\A$ repeat the loop infinitely. 
    Like $w'$, the word $w''$ has granularity $\delta$ and is not in $\Lang_\delta(\A)$,
    but $w''$ is additionally ultimately periodic when seen as a sequence of letter-delay pairs 
    $w'' = (a_1, \tau_1) \, (a_2, \tau_2) \cdots$.
    In consequence, \Timer has a winning controller that produces $w''$.
\end{proof}

\section{Lossy counter machines}
\label{sec:lossy}

We now introduce the model to be used in the forthcoming reductions.
A \emph{lossy counter machine} (LCM) is a tuple 
$\M = \tuple{\Counter, \States, \state_0, \Instructions}$,
where $\Counter$ is a finite set of counters, $\States$ is a finite set of control locations,
$\state_0 \in \States$ is the initial control location, and $\Instructions$ is a finite set of instructions
$\instruction = (\state, \op, \state')$, where $s,s' \in \States$ and
$\op \in \{c\increment, c\decrement,c\ztest \mid \counter \in \Counter\} $ 
is the operation of increment, decrement, or zero test on some counter $c\in\Counter$.
%
We let $\Instructions_\inc^c, \Instructions_\dec^c, \Instructions_\zt^c$ denote
the sets of instructions incrementing, decrementing, and zero testing counter $c$, respectively.
We also let $\Instructions^c = \Instructions_\inc^c \cup \Instructions_\dec^c \cup \Instructions_\zt^c$
be the set of all instructions affecting $c$.

A \emph{configuration} of an LCM $\M$ is a pair $(\state, \val) \in \States \times \N^{\Counter}$
where $\state$ is a control location, and $\val$ is a counter valuation.
Given counter valuations $\mu, \val \in \N^{\Counter}$,
we write $\mu \leq \val$ whenever $\mu(\counter) \leq \val(\counter)$ for every $\counter\in \Counter$.
We define two equivalent
 semantics: the \emph{lossy} semantics, and the \emph{free-test} one
 (corresponding, intuitively, to postponing the losses until zero tests).
A \emph{run} of an LCM under either semantics is a maximal (finite or infinite) sequence of configurations
starting from the initial configuration $(\state_0, \val_0)$, where $\val_0 = \zeroval$, such that
for every two consecutive configurations $(\state, \val)$ and $(\state', \val')$
there exists an instruction $\instruction = (\state, \op, \state') \in \Instructions$ such that
$(\state, \val) \xrightarrow{\instruction} (\state', \val')$.
Under the lossy semantics, $(\state, \val) \xrightarrow{\instruction} (\state', \val')$ holds if
%
\begin{enumerate}
\item $\instruction \in \Instructions_\inc^c$ and $\val' \leq \val[c\increment]$, where $\val[c\increment]$
is as $\val$ except that $c$ is incremented by 1; or
\item $\instruction \in \Instructions_\dec^c$ and $\val' \leq \val[c\decrement]$, where $\val[c\decrement]$
is as $\val$ except that $c$ is  decremented by 1; or
\item $\instruction \in \Instructions_\zt^c$, $\val(c) = 0$, and $\val' \leq \val$.
\end{enumerate}

\noindent
Under the free-test semantics,
 $(\state, \val) \xrightarrow{\instruction} (\state', \val')$ holds if
\begin{enumerate}
\item $\instruction \in \Instructions_\inc^c$ and $\val' = \val[c\increment]$; or
\item $\instruction \in \Instructions_\dec^c$ and $\val' = \val[c\decrement]$; or
\item $\instruction \in \Instructions_\zt^c$ and $\val' = \val[c \coloneqq 0]$ is as
$\val$ except that $c$ is set to 0.
\end{enumerate}

%
\noindent
One can see the free-test semantics as delaying losses until the next zero test, at which point the counter loses its value entirely before proceeding to the next configuration.

We assume, \wlogg, that $\M$ is \emph{deterministic}, i.e. that for any configuration $(\state,\val)$ there exists at most one instruction $\instruction$ leading to another configuration.
Under the free-test semantics, this implies that a given configuration only has at most one successor.
Under the lossy semantics, there may be a number of successor configurations differing only on the counter valuations (how much is lost for each counter after the instruction).
Let $\MRuns_\M$ denote the runs of $\M$.
We assume, \wlogg, that they are all infinite.

The \emph{boundedness} problem asks whether all runs of a given LCM $\M$ visit only finitely many configurations,
or equivalently whether all runs of $\M$ are bounded in their valuations.
%
The \emph{repeated reachability problem} asks if $\M$ has an infinite run
visiting infinitely often a given control location $\state \in \States$.
Both problems are known to be undecidable
already for 4 counters~\cite[Thms.~10,12]{Mayr03}
(an excellent survey is \cite{Schnoebelen10a}),
and the choice of semantics is irrelevant.

\section{\boldmath\Timer is both the \owner and the \subject}
\label{sec:Buchi}

In this section we prove undecidability of both synthesis problems when \Timer 
owns the winning condition, except when $\Subject=\Monitor$
(investigated in the next section).

\begin{restatable}{theorem}{thmBuchi}
    \label{thm:Buchi}
%
%
    When $\Owner = \Timer$, the following problems are undecidable:
    \begin{itemize}
        \item the timed reactive synthesis, irrespectively of \Subject; 
        \item the timed Church synthesis, when $\Subject=\Timer$.
    \end{itemize}
\end{restatable}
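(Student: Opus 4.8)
The plan is to reduce from the repeated reachability problem for lossy counter machines, which by \cref{sec:lossy} is undecidable already for four counters and independent of the choice of lossy/free-test semantics. Given an LCM $\M$ and a target location $\state\in\States$, I will build a timed game $\Game=\tuple{\TimerAlphabet,\MonitorAlphabet,\A,\Timer,\Subject}$ with $\A\in\NTA_1$ such that \Timer has a winning strategy (equivalently, by the remark that reactive synthesis works uniformly regardless of \Subject, such a strategy exists irrespective of who the \subject is) if and only if $\M$ has an infinite run visiting $\state$ infinitely often. The central idea is to have \Timer encode a run of $\M$ by the timed word it produces: the positions of timestamps within unit intervals will encode counter values (one ``token'' per unit of each counter), following the classical fractional-encoding technique for simulating counters with a single clock. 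The \Buchi acceptance of $\A$ will be used to check that the designated location $\state$ is visited infinitely often along the encoded run.

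The key steps, in order, are as follows. First I would fix a timed encoding of LCM configurations in which each counter value $\val(c)$ is represented by $\val(c)$ many timed letters placed at distinct fractional offsets inside a unit window, with the control location tracked in the finite alphabet $\TimerAlphabet$. Second, I would design $\A$ so that it nondeterministically \emph{guesses violations}: since $\A$ reads \Timer's word and \Timer is the \owner, \Timer wins exactly when the word is in $\Lang(\A)$, so $\A$ must accept precisely the faithful encodings of valid infinite runs hitting $\state$ infinitely often. The natural move is to make $\A$ accept a word unless it can catch a cheating step; but because $\NTA_1$ cannot complement, I instead make $\A$ \emph{accept} faithful $\state$-repeating runs directly, exploiting that the lossy semantics is \emph{monotone downward}: losing counter value corresponds to dropping tokens, which a one-clock nondeterministic automaton can tolerate by guessing which token is matched across consecutive unit windows. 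Third, I would verify that increments, decrements and zero-tests are each checkable with a single clock reset to $1$ at unit boundaries, comparing the token at offset $f$ in one window to the token at the same offset one time unit later. Fourth, I would argue the \Buchi\ condition of $\A$ fires infinitely often exactly when the encoded run passes through $\state$ infinitely often, so $\Lang(\A)\neq\emptyset$ along a \Timer-producible word iff $\M$ witnesses repeated reachability. Finally, for the Church synthesis part with $\Subject=\Timer$, I would invoke \cref{thm:equivalence}: since here $\A\in\NTA_1$ but \emph{not} $\rNTA_1$, the two problems need not coincide, so I must argue separately that a \Timer-controller exists iff a strategy exists in this particular construction, or—more simply—observe that the winning \Timer-word witnessing repeated reachability can be chosen ultimately periodic (an infinite run of $\M$ visiting $\state$ infinitely often can be taken lasso-shaped), hence producible by a finite controller using finitely many delays.

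The main obstacle I expect is the correctness of the counter encoding under a \emph{single} clock together with nondeterminism and the lossiness. Concretely, the delicate point is ensuring that \Timer cannot cheat by \emph{fabricating} counter increments that the one-clock $\A$ fails to detect: with only one clock, $\A$ cannot simultaneously remember all token offsets, so matching a window of tokens to the next window must be done by nondeterministic guessing, and I must arrange the acceptance condition so that \emph{every} nondeterministic branch corresponds to a legitimate (lossy) transition, preventing \Timer from exploiting the automaton's forgetfulness to claim an illegal run. The lossy semantics helps here, since it only permits counters to \emph{decrease} spuriously, and dropping tokens is exactly what a single clock comfortably allows; the harder direction is forbidding spurious \emph{increases}, which I would enforce by having $\A$ insist on a one-to-one matching of each token in the later window back to a token in the earlier window (again via a clock comparison at fixed fractional offset), so that no new token can appear except through a designated increment step. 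Getting this matching to be faithful with one clock, while keeping $\A$ nondeterministic rather than alternating, is the technical crux; I anticipate the bulk of the work is in proving the two implications of the reduction's correctness, especially the soundness direction that a winning \Timer-word genuinely encodes a valid $\M$-run.
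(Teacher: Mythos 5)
Your reduction source (LCM repeated reachability) and your unary-tokens-at-fractional-offsets encoding match the paper's, but the core of your plan does not work, and the failure is structural rather than a matter of missing details. You propose to make $\A\in\NTA_1$ \emph{directly} accept exactly the faithful timed encodings of infinite runs of $\M$ visiting $\state$ infinitely often, with \Monitor playing no essential role. If \Monitor is irrelevant, the game degenerates to the question ``can \Timer produce some word in $\Lang(\A)$?'', which is just nonemptiness of a one-clock \Buchi timed automaton --- decidable in \pspace by the region construction. So if your $\A$ existed, repeated reachability for lossy counter machines would be decidable, a contradiction. The step that cannot be carried out is exactly the one you flag as the crux: ``having $\A$ insist on a one-to-one matching of each token in the later window back to a token in the earlier window.'' That is a \emph{universally} quantified condition over all tokens in a window; a nondeterministic automaton with a single clock can existentially guess and track one fractional offset, but it cannot simultaneously certify the matching for every token. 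This is why forbidding spurious increases (the direction you correctly identify as hard) is not achievable inside an $\NTA_1$ acceptor for the positive language, and why alternating or multi-clock automata are usually needed for such ``correct computation history'' languages.

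The paper escapes this by making \Monitor do the universal quantification through the game dynamics. \Monitor's alphabet is $\{\Ok,\errR,\errA,\errB,\errC\}$, and \Timer's winning set is a \emph{union}: either $\state$ occurs infinitely often while \Monitor plays only $\Ok$, or at some round \Monitor declares a wrong type of error. Each error type is a purely \emph{local} property of the last timed letter (e.g.\ ``the last $c$-letter does not have a matching $c$ exactly one time unit earlier, except in the designated increment position''), and such local languages of finite words \emph{are} recognisable by an $\NTA_1$ --- the automaton needs to track only the single offset relevant to the most recent letter. The obligation on \Monitor to catch every error immediately and classify it correctly is what forces \Timer to produce a genuinely faithful encoding without the winning-condition automaton ever having to verify all tokens at once. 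Your closing observation for the Church-synthesis case (a lasso-shaped witness run yields an ultimately periodic word and hence a finite \Timer controller) is sound and is essentially how the paper handles that part, but it rests on the broken foundation above. To repair the proof you need to reintroduce \Monitor as an active error-detecting player and restructure the winning condition around wrong error declarations, as the paper does.
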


\medskip

\noindent
(Note the symmetry of \cref{thm:Buchi,thm:universality} along the exchange of roles of \Timer and \Monitor.)

In the rest of this section we prove \cref{thm:Buchi} by providing a reduction
from the repeated reachability problem for LCM.
%
%
To this aim we fix a lossy counter machine 
$\M = \tuple{\Counter, \States, \state_0, \Instructions}$
with four counters $\Counter = \{\counter_1, \counter_2, \counter_3, \counter_4\}$
and a location $\state\in\States$,
and construct a timed game with the property that \Timer has a winning controller
if $\M$ repeatedly reaches $\state$, and \Monitor has a winning strategy otherwise
(see \cref{lem:Buchi_correctness}).
In the proof we assume lossy semantics of $\M$.
Our approach is inspired by the proof of~\cite[Thm.~8.4]{Piorkowski-PhD_22}
(we note however substantial differences: 
timed Church synthesis was considered there with
\Timer's  winning condition specified by 
$\NTA_2$ while we restrict ourselves to $\NTA_1$, and moreover, there
\Monitor's controller was sought, while we seek \Timer's one).

\para{The idea of reduction}
In the course of the game, \Timer is tasked with producing an increasingly longer timed word supposed
to be an encoding of a run of $\M$.
However, \Timer may also \emph{cheat} and
produce a timed word which contains an \emph{error} and therefore 
is not a correct run encoding.
We distinguish four types of errors.
In order to prevent \Timer from cheating, \Monitor verifies if the encoding proposed by \Timer\ is correct,
and if it is not, \Monitor has to  declare the detected type of error correctly, and
immediately, that is in the same round the error occurs.
Therefore one way of winning by \Timer is to see $\state$ infinitely often while seeing
no error declared by \Monitor (using cheating or not);
or to mislead \Monitor about the correctness of encoding by cheating and either seeing no
immediate error declaration, or seeing an error declaration of wrong type.
On the other hand, when \Monitor manages to declare immediately the correct type of error,
the play is winning for it irrespectively of the continuation.

\para{\boldmath Encoding of runs of $\M$}
A central ingredient of our reduction is the encoding of runs of $\M$ as timed words.
Let $\Sigma_{\mathcal{M}} = \Counter \cup \Instructions$.
We first introduce the \emph{valuation encoding}. For a valuation $\val = (v_1, v_2, v_3, v_4) \in \N^\Counter$ define
$
\enc(\val) = \counter_1^{v_1}\,\counter_2^{v_2}
\counter_3^{v_3}\counter_4^{v_4}
$, a finite untimed word consisting of four \emph{segments}.
The set of all such encodings is $\ValEnc_{\M} = \counter_1^*\counter_2^*\counter_3^*\counter_4^*$.

Next, we define the \emph{run encoding}.
The function $\enc \colon \MRuns_\M \to \Sigma_{\mathcal{M}}^\omega$ maps a run
$\run = (\state_0, \val_0) \xrightarrow{\instruction_1} (\state_1, \val_1) \xrightarrow{\instruction_2}
\cdots$ to the infinite untimed word
$
\enc(\run) = \enc(\val_0) \, \instruction_1 \, \enc(\val_1) \, \instruction_2 \, \cdots
$.
Let $\RunEnc_{\M} = \{ \enc(\run) \mid \run \in \MRuns_\M\}$ be the set of valid untimed encodings.

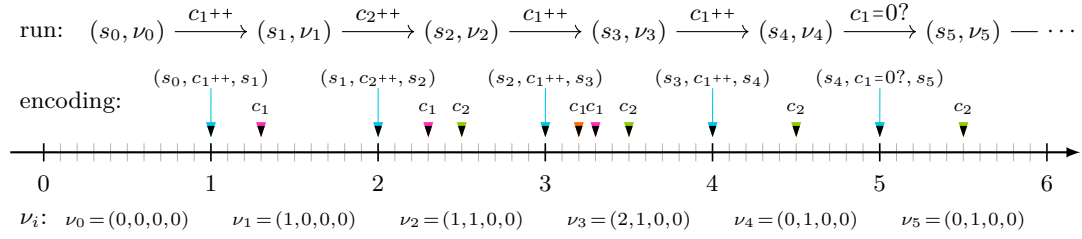
\begin{figure}
    \begin{center}\begin{tikzpicture}[
    xscale=2.2,
    every node/.append style={font=\small},
]
    \def\tstart{0}
    \def\tend{6}

    \foreach \x in {\tstart,0.1,...,\tend} {
        \draw[black!30, line width=0.3pt] (\x,0.12) -- (\x,-0.12);
    }

    \foreach \x in {\tstart,1,...,\tend} {
        \draw[black, line width=0.6pt] (\x,0.12) -- (\x,-0.12);
        \node[below=2pt, font=\small] at (\x,-0.12) {\x};
    }

    \foreach \x/\l in {0,...,\numexpr\tend-1\relax} {
        \node[font=\small] (q\x) at (\x+0.5,1.6) {$(s_\x,\nu_\x)$};
    }
    \node[font=\small] (q6) at (6.1,1.6) {$\cdots$};

    \foreach \x/\l in {0/{c_1\mkern-2mu\increment},1/{c_2\mkern-2mu\increment},2/{c_1\mkern-2mu\increment},3/{c_1\mkern-2mu\increment},4/{c_1\mkern-2mu\ztest}} {
        \pgfmathtruncatemacro{\y}{\x+1}
        \draw[-To] (q\x) -- node[above] {$\l$} (q\y);
        \coordinate (t\y) at (\y,0.2) {};
        \coordinate (z\y) at (\y,-0.2) {};
        \node[font=\scriptsize, above=6.5mm of t\y,inner sep=0] (l\y) {$(s_\x,\l,s_\y)$};
        \draw[c1] (\y,0.25) -- (l\y);
        \fill[c1] (t\y) -- ++(0.03,0.2) -- ++(-0.06,0) -- cycle;
        \fill[black] (t\y) -- ++(0.0225,0.15) -- ++(-0.045,0) -- cycle;
    }
    \draw[] (q5) -- (q6);

    \draw[thick,->] (\tstart-0.2,0) -- (\tend+0.2,0);

    \def\events{%
            {1/1.3/$c_1$/c2},%
            {2/2.3/$c_1$/c2},%
            {3/2.5/$c_2$/c3},%
            {4/3.3/$c_1$/c2},%
            {5/3.2/$c_1$/c4},%
            {5/3.5/$c_2$/c3},%
            {8/4.5/$c_2$/c3},%
            {9/5.5/$c_2$/c3}%
    }
    \coordinate (lfin) at (7.2,0.2) {};
    \coordinate (bfin) at (7.2,-0.2) {};
    \coordinate (bini) at (1.7,-0.2) {};
    \foreach \id/\x/\label/\c in \events {
        \coordinate (e\id) at (\x,0.2) {};
        \coordinate (b\id) at (\x,-0.2) {};
        \fill[fill=\c] (e\id) -- ++(0.03,0.2) -- ++(-0.06,0) -- cycle;
        \fill[black] (e\id) -- ++(0.0225,0.15) -- ++(-0.045,0) -- cycle;
        \node[font=\scriptsize, above=3mm of e\id,inner sep=0] (l\id) {\label};
    }

    \foreach \x/\c/\d in {0/0/0,1/1/0,2/1/1,3/2/1,4/0/1,5/0/1} {
        \pgfmathtruncatemacro{\y}{\x+1}
        \node at (\x+0.5, -0.9) {\scriptsize$\nu_\x\mkern-3mu=\mkern-3mu(\c,\mkern-2mu\d,\mkern-2mu0,\mkern-2mu0)$};
    }
    \node[anchor=west] at (-0.2,1.6) {run:};
    \node[anchor=west] at (-0.2,0.67) {encoding:};
    \node[anchor=west] at (-0.2,-0.9) {$\nu_i$:};
\end{tikzpicture}\end{center}
    \caption{Illustration of the encoding of runs used in this section.}
    \label{figure:run-block-encoding}
\end{figure}

Intuitively, we exploit the timed structure of a word to enforce the correctness of run encodings (see \cref{lem:Buchi_language}).
To this end, we define a timed language $\RunEnc_{\M}^\Time$.
We specify it using an untimed $\omega$-regular language $\RegIM$
which enforces the structural shape of runs of $\M$, and
timed languages $A^\Time_\M$, $B^\Time_\M$ and $C^\Time_\M$,
which impose additional timed properties.
Let $\RegIM$ consist of all (untimed) words
$w \in \Sigma_{\mathcal{M}}^\omega$
that satisfy the following regular conditions:

    \begin{description}
        \item[Block structure:]
                 $w$ is an infinite interleaving of valuation encodings and instructions, i.e., $w$ is of the form
		$w = \enc(\val_0) \, \instruction_1 \, \enc(\val_{1}) \, \instruction_2\, \cdots
		\in (\ValEnc_{\M} \, \Instructions)^{\omega}$ with $\instruction_1$ starting from $\state_0$.

        \item[Instruction compatibility:] Each instruction's target state is the source state of the next one, i.e., for every infix $\instruction_{i} \, \Counter^* \, \instruction_{i+1}$ we have $\instruction_i = (\_, \_, \state)$ and $\instruction_{i+1} = (\state_, \_, \_)$ for some $s$.

        \item[Consistency with zero tests:]
	every infix 
         $\instruction_i \, \enc(\val_{i})\, \instruction_{i+1}$ of $w$ with $\instruction_{i+1} \in \Instructions_\zt^\counter$
	verifies $\enc(\val_{i}) \in (\Counter \setminus \{\counter\})^*$,
        i.e., the symbol $\counter$ does not appear in $\enc(\val_{i})$.
    \end{description}
%

\noindent
Clearly  $\RegIM \supseteq \RunEnc_{\mathcal{M}}$.
%
Let $\RegIMT = \untimed^{-1}(\RegIM)$
contain all timed words whose untiming is in $\RegIM$.
We define $\RunEnc_{\mathcal{M}}^\Time$
as the intersection of four timed languages:
\[
    \RunEnc_{\M}^\Time = \RegIMT \cap \AM
    \cap \BM \cap C_{\mathcal{M}}^{\Time} \subseteq (\Sigma_\M \times \Qpos)^{\omega}
\]
\noindent
where 
$A_{\mathcal{M}}^{\Time}$, $B_{\mathcal{M}}^{\Time}$ and $C_{\mathcal{M}}^{\Time}$
are languages of infinite timed words $w$ satisfying the conditions given below.
For simplicity, we define the condition for $C_{\mathcal{M}}^{\Time}$ (the most intricate), assuming that $w \in L=\RegIMT \cap \AM \cap \BM$, as it is irrelevant how it treats words outside $L$.

    \begin{description}
        \item[Strict monotonicity ($\AM$):]
        $w$ is strictly monotonic (no timestamp repeats).

        \item[Blocks align to unit intervals ($\BM$):]
        \label{itm:buchi_timed-B}
        symbols from $\Instructions$ appear in $w$ with consecutive integer time\-stamps starting from $1$.

        \item[Well-alignment of valuations encodings ($\CM$):]
        every maximal infix of $w$ of the form $\enc(\val_i)$ \, $\instruction_i \, \enc(\val_{i+1})$ verifies the following conditions,
        for all counters $c \in \Counter$:
        \begin{enumerate}
            \item\label{itm:buchi_timed-C-not}
            if $\instruction_i \notin \Instructions_\dec^\counter \cup \Instructions_\inc^\counter$
			then $\val_{i+1}(\counter) \leq \val_{i}(\counter)$, and
            a timed letter $(\counter, t)$ appears in $\enc(\val_{i+1})$ only if 
            $(\counter, t-1)$ appears in $\enc(\val_{i})$;

            \item\label{itm:buchi_timed-C-increase}
            if $\instruction_i \in \Instructions_\inc^\counter$, then $\val_{i+1}(\counter) \leq \val_{i}(\counter) + 1$, and
            the encoding verifies the case~\ref{itm:buchi_timed-C-not}
            except for a (potential) one extra letter $\counter$ appearing in the beginning of
            the $c$-segment in $\enc(\val_{i+1})$ within less than one unit of time
            from the first symbol $\counter$ of $\enc(\val_i)$.

            \item\label{itm:itm::buchi_timed-D-decrease}
            if $\instruction_i \in \Instructions_\dec^\counter$, then
            $\val_{i+1}(\counter) \leq \val_{i}(\counter) - 1$, and
            the encoding verifies the case~\ref{itm:buchi_timed-C-not}
            except for the first letter $\counter$ appearing in $\enc(\val_i)$ which can not appear in 
            $\enc(\val_{i+1})$.
        \end{enumerate}
    \end{description}

\noindent    
According to the definition of $C_{\mathcal{M}}^{\Time}$, every increment 
(resp.~decrement) of a counter $\counter$ results in adding (resp.~removing)
one letter $c$ in the \emph{beginning} of the $c$-segment.
An illustration is provided in \cref{figure:run-block-encoding}.

%
%
\begin{apxlemmarep}
    \label{lem:Buchi_language}
        \label{itm:Buchi_language-2}
        $\RunEnc_{\mathcal{M}} = \{\untimed(w) \mid w \in \RunEnc^{\Time}_{\mathcal{M}}\}$.
\end{apxlemmarep}

\begin{proof}
	\label{app:Buchi_language}
	We reason by double inclusion, and we first prove that $\RunEnc_{\mathcal{M}} \supseteq
	\{\untimed(w) \mid w \in \RunEnc^{\Time}_{\mathcal{M}}\}$.
	Let $w \in \RunEnc^{\Time}_{\mathcal{M}} \subseteq
	\RegIMT$ (by definition of $\RunEnc^{\Time}_{\mathcal{M}}$).
	In particular, we deduce that $\untimed(w) \in \RegIM
	\subseteq \RunEnc_{\mathcal{M}}$ and thus, $\RegIM = \RunEnc_{\mathcal{M}}$.

	Conversely, we prove that $\RunEnc_{\mathcal{M}} \subseteq
	\{\untimed(w) \mid w \in \RunEnc^{\Time}_{\mathcal{M}}\}$.
	Let $w \in \RunEnc_{\mathcal{M}}$, we want to define a timed word
	$w^\Time \in \RunEnc^{\Time}_{\mathcal{M}}$ such that $\untimed(w^\Time) = w$.
	Since $w  \in \RunEnc_{\mathcal{M}} \subseteq \RegIM$, such a $w^\Time$ would by definition be in $\RegIMT$.
	Thus it suffices to verify that $\untimed(w^\Time) = w$ and
	that $w^\Time \in A^\Time_\M \cap B^\Time_\M \cap C^\Time_\M$.
	Building $w^\Time$ requires defining the timestamps
	of letters in $w$ and, more specifically, timestamps for the letters from $\Counter$
	(timestamps for $\Instructions$ are forced by $B_{\mathcal{M}}^{\Time}$).

	Let $\run = (\state_0, \val_0) \xrightarrow{\instruction_1} \, (\state_1, \val_1) \xrightarrow{\instruction_2} \cdots$
	be a run of $\mathcal{M}$ such that $w = \enc(\val_0) \instruction_1 \enc(\val_1) \instruction_2 \cdots$.
	We define a sequence of timed words $\enc^\Time(\val_0), \enc^\Time(\val_1), \dots$ such that
	$\untimed(\enc^\Time(\val_n)) = \enc(\val_n)$ for all $n \geq 0$ recursively as follows.
	For the base case, $\enc^\Time(\val_0) = \varepsilon$.
	Assuming $\enc^\Time(\val_n)$ has been defined, then $\enc^\Time(\val_{n+1})$ is the minimal word such that,
	for all $\counter \in \Counter$:
	\begin{itemize}
	    \item if $\instruction_{n} \notin \Instructions_\inc^\counter \cup \Instructions_\dec^\counter$,
	    then for the first $\val_{n+1}(c)$ timed letters $(\counter, t)$ in $\enc^\Time(\val_n)$,
	    we have $(\counter, t +1)$ in $\enc^\Time(\val_{n+1})$;
	
	    \item if $\instruction_n \in \Instructions_\dec^\counter$,
	    then for every timed letter $(\counter,t)$ in $\enc^\Time(\val_n)$ starting from the second occurrence
		to the $(\val_{n+1}(c)+1)$th occurrence, we have $(\counter,t+1)$ in $\enc^\Time(\val_{n+1})$;
	
	    \item if $\instruction_{n} \in \Instructions_\inc^\counter$, and $\val_{n+1}(c) \leq \val_n(c)$,
	    we apply the first case.
	    Otherwise,  $\val_{n+1}(c) = \val_n(c) + 1$.
	    Then first for every timed letter $(\counter,t)$ in $\enc^\Time(\val_n)$ we have $(\counter,t+1)$ in
		$\enc^\Time(\val_{n+1})$.
	    Moreover, assume $\counter = \counter_i$ for $i \in \{1,2,3,4\}$.
	    We let $t_{\min} = n + (i-1)/4$, and $t_1$ be the timestamp of the first occurrence in $\enc^\Time(\val_n)$ of
		$\counter$, or $n+i/4$ if there is no such occurrence.
	    Then finally we also have $(\counter, (t_{\min} + t_1)/2)$ in $\enc^\Time(\val_{n+1})$.
	\end{itemize}
	One can easily verify that these timed words indeed satisfy $\untimed(\enc^\Time(\val_n)) = \enc(\val_n)$ and that a
	ll timestamps in $\enc^\Time(\val_n)$ are in the interval $(n,n+1)$ for all $n \geq 0$.
	Finally, we let $w^\Time = \enc^\Time(\val_0) \cdot (\instruction_1,1) \cdot \enc^\Time(\val_1) \cdot
	(\instruction_2,2) \cdot \dots$.
	Showing that $w^\Time$ is in $A^\Time_\M$ and $B^\Time_\M$ is trivial, and a case analysis on the instructions
	shows that $w^\Time$ is also in $C^\Time_\M$, thus concluding the proof.
\end{proof}

\para{The timed game}
We define the timed synthesis game where $\Timer$'s actions are 
$\TimerAlphabet  = \Sigma_{\mathcal{M}}$
and $\Monitor$'s actions are $\MonitorAlphabet = \{\Ok, \errR, \errA, \errB, \errC\}$.
For the definition of the winning condition it is crucial that all the languages
$\RegIMT, \AM, \BM$ and $C_{\mathcal{M}}^{\Time}$ can be defined
\emph{locally}: an infinite word $w$ belongs to $\RegIMT$ exactly when all (finite)
prefixes of $w$
belong to a certain local language $\RegIMTloc \subseteq (\Sigma_\M \times \Qpos)^*$
of \emph{finite} timed words, which is moreover recognised by a $\rNTA_1$;
and likewise for the $\AM$, $\BM$ and $C_{\mathcal{M}}^{\Time}$.
Specifically:

    \begin{itemize}
\item $\RegIMTloc$:
the finite prefix satisfies the defining conditions
of $\RegIM$.
\item $\AMloc$:
the last two timestamps are nonequal
(or the word is of length 1, the border case).
\item $\BMloc$: if the last letter is from $\Instructions$,
then no $\Instructions$ appears in the last open unit interval, and $\Instructions$ appears
exactly one time unit before the last timestamp; and
if the last letter is not from $\Instructions$, some 
$\Instructions$ occurs less that one unit before the last timestamp.
(We omit the border case.)
\item $\CMloc$:
the last timed letter of the word, if it is $(\counter,t)\in\Counter\times\Qpos$, appears also one unit of time before
as $(\counter, t-1)$, 
except when it is the first letter in the $\counter$-segment and the last instruction
is an increment of $\counter$.
Moreover, when $(c, t)$ is the first letter in the $\counter$-segment and the last instruction
is a decrement of $\counter$,
the language $\CMloc$
requires that the first letter in the previous $\counter$-segment was removed,
i.e., its timestamp is strictly smaller that $t-1$.
As before, we conveniently
assume that the word is in $L=\RegIMTloc \cap \AMloc
    \cap \BMloc$, as
    it is irrelevant which words from 
    $\widehat{L} = \widehat{\RegIMTloc} \cup \widehat{\AMloc}
    \cup \widehat{\BMloc}$
    belong to $C_{\mathcal{M}}^{\Time,\ell}$.
\end{itemize}
\noindent
Note that $\CMloc$ is the most difficult one, and the only one for which we will need non-determinism.

The winning condition is also defined mostly locally,
by a combination of restrictions imposed on \Timer's or \Monitor's moves.
To this aim we use the projections of finite words,
$\projB \colon (\TimerAlphabet \times \MonitorAlphabet \times \Qpos)^* \to \MonitorAlphabet^*$ and
$\projA \colon (\TimerAlphabet \times \MonitorAlphabet \times \Qpos)^* \to (\TimerAlphabet \times \Qpos)^*$, as well as their inverses
$\projB^{-1}$ and $\projA^{-1}$.
We define \Timer's winning set as
\begin{displaymath}
	W_\mathcal{M} = \Reach(V^{\Time}_\mathcal{M}) \cup
	\Big(\projA^{-1}(\mathrm{Inf}(s)) \cap \projB^{-1}(\{\Ok\}^\omega) \Big)
\end{displaymath}
where 
$\mathrm{Inf}(s)$ is the (untimed) regular language ``the location $s$ appears infinitely often'',
$\Reach(V^{\Time}_\M)$ stands for
the language
of those infinite timed words which have some prefix in $V^{\Time}_\M$, and
$V^{\Time}_\M$ itself is the following union of languages of finite timed words:
\begin{align*}
    V^{\Time}_\mathcal{M} = \ & 
    \big(\projB^{-1}(\Ok^*\,\errR) \cap
    \projA^{-1}(\RegIMTloc)\big) \ \cup {}
    \tag{\small \Monitor wrongly claims an $\RegI$ error}\\
    &
    \big(\projB^{-1}(\Ok^*\,\errA) \cap
    \projA^{-1}(A^{\Time, \ell}_\mathcal{M})\big) \ \cup {}
    \tag{\small \Monitor wrongly claims an $A$ error}\\
    & 
    \big(\projB^{-1}(\Ok^*\,\errB) \cap
    \projA^{-1}(B^{\Time, \ell}_\mathcal{M})\big) \ \cup {}
    \tag{\small \Monitor wrongly claims a $B$ error}\\
    & 
    \big(\projB^{-1}(\Ok^*\,\errC) \cap
    \projA^{-1}(C^{\Time, \ell}_\M \cup \widehat{\RegIMTloc} \cup
                        \widehat{\AMloc}    \cup \widehat{\BMloc})\big) \\
    & \tag{\small \Monitor wrongly claims a $C$ error}
\end{align*}
Thus, \Timer wins if $\state$ is occurring infinitely often and \Monitor declares no error,
i.e., plays exclusively
$\Ok$ moves, or some finite prefix of play is in $V^{\Time}_\M$, namely
\Monitor declares a wrong type of error.
Specifically, \Monitor declares
$\errR$, $\errA$ or $\errB$ in the round when the corresponding local condition holds,
or $\errC$ in the round when either the local condition $C^{\Time, \ell}_\M$ holds,
or some of the other three local conditions fails.
Intuitively, $\errR$, $\errA$ or $\errB$ are prioritised over $\errC$:
in order to win by declaring an error, \Monitor must declare
$\errR$, $\errA$ or $\errB$ if the corresponding local condition fails, and may only
declare $\errC$ otherwise.

%
%
\begin{apxlemmarep}
\label{prop:Buchi_win-cond}
$W_\M$ is recognised by an $\NTA_1$.
\end{apxlemmarep}

\begin{proof}
	\label{app:Buchi_win-cond}
	As mentioned earlier, we need to prove that every timed language in the function $\projA^{-1}$
	of the union defining $V_\M^\Time$ can be expressed by an $\NTA_1$.
	
	First, we define the notion of local language.
	Given $\Lang$ a timed language of infinite words, its local version denoted by $\Lang^\ell$
	is the timed language of finite words such that:
	\begin{itemize}
		\item $\widehat{\Lang} = \widehat{\Lang^\ell}
		\cdot \Time(\Sigma_{\mathcal{M}}^\omega)$;
		\item $w \in \Lang$ if and only if all prefixes of $w$
		belong to $\Lang^\ell$.
	\end{itemize}
	Intuitively, local versions of $\RegIMT$, $A_{\mathcal{M}}^{\Time}$,
	$B_{\mathcal{M}}^{\Time}$, and $C_{\mathcal{M}}^{\Time}$ recognise the absence of errors of types
	${\RegI}$, ${A}$, ${B}$, and ${C}$.
	
	Now, we prove that the winning condition can be expressed by an $\NTA_1$.
	By definition of $\RegIMTloc$, this timed language is defined
	as a regular language over its untimed word.
	Thus, it can be expressed by a finite automaton.
	Now, we give the proof that $A_{\mathcal{M}}^{\Time, \ell}$ and $B_{\mathcal{M}}^{\Time, \ell}$
	can be expressed by a $\DTA_1$ by their complementary.
	
	\begin{lemma}
	    $\widehat{A_{\mathcal{M}}^{\Time, \ell}} \in \DTA_1$ and
	    $\widehat{B_{\mathcal{M}}^{\Time, \ell}} \in \DTA_1$.
	\end{lemma}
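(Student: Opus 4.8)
The plan is to exploit the fact, recalled in the introduction, that \DTA are closed under complementation (a complete deterministic automaton is complemented by swapping accepting and non-accepting locations). It therefore suffices to construct a complete \emph{one-clock} deterministic automaton recognising $\AMloc$ (respectively $\BMloc$) itself; its complement is then automatically a $\DTA_1$ recognising $\widehat{\AMloc}$ (respectively $\widehat{\BMloc}$). The key observation making this possible is that both local conditions constrain only the \emph{last} timed letter relative to the immediately preceding history, so in each case a single clock together with an appropriate reset discipline carries exactly the information needed, and the guards partition $\Qpos$, guaranteeing both completeness and determinism.

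For $\AMloc$ the construction is immediate. I would use one clock $x$ and reset it on \emph{every} transition, so that upon reading a letter the value of $x$ equals the delay since the previous letter, i.e.\ $x=0$ precisely when the last two timestamps coincide. A first phase reads the initial letter (handling the length-one border case, which lies in $\AMloc$) and passes to a waiting location; from there, the transition guarded by $x>0$ stays in the waiting location while the transition guarded by $x=0$ enters the error location. With guards $\{x=0\},\{x>0\}$ this automaton is complete and deterministic, and its complement accepts exactly those finite words whose final delay is zero, which is $\widehat{\AMloc}$.

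For $\BMloc$ I would use one clock $x$ reset initially and on every $\Instructions$-letter, so that $x$ measures the time elapsed since the last instruction. On reading an instruction I require $x=1$: this single equality simultaneously enforces both clauses of $\BMloc$, since the reset at the previous instruction makes $x=1$ equivalent to ``an instruction occurred exactly one unit before,'' and the absence of any reset in between guarantees ``no instruction lies in the intervening open unit interval.'' On reading a letter of $\Counter$ I require $x<1$, expressing ``some instruction occurred less than one unit before.'' Branching on the symbol type (instruction versus counter, which is deterministic on the input) together with the partition $\{x<1\},\{x=1\},\{x>1\}$ yields a complete deterministic one-clock automaton for $\BMloc$, whose complement gives $\widehat{\BMloc}\in\DTA_1$.

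I expect the $\BMloc$ case to be the only real obstacle, and the verification worth spelling out is precisely that a single clock reset on instructions encodes \emph{both} ``exactly one unit before'' and ``no instruction inside the open unit interval'' at once, so that no second clock is needed; the $\AMloc$ case is routine. The remaining loose end is the border case omitted in the definition of $\BMloc$, namely the behaviour before the first instruction (when $x$ still measures time since the origin); I would dispatch it by fixing, in a dedicated initial location, whatever convention is consistent with the rest of the reduction, which does not affect the argument that the resulting automaton stays in $\DTA_1$.
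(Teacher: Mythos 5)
Your proposal is correct and matches the paper's proof in essence: for $\widehat{\AMloc}$ the paper likewise uses a single clock reset at each step so that the guard $x=0$ versus $x>0$ detects a repeated timestamp, and for $\widehat{\BMloc}$ it uses a single clock reset exactly at $\Instructions$-letters with the partition $x<1$, $x=1$, $x>1$ to detect the three ways of violating block alignment. The only cosmetic difference is that the paper builds the deterministic automaton for the complement directly rather than building one for the language and swapping accepting locations, which for a complete \DTA{} is the same construction.
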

	\begin{nestedproof}
	    We start with $\widehat{A_{\mathcal{M}}^{\Time, \ell}}$.
	    By definition, $w \in \widehat{A_{\mathcal{M}}^{\Time, \ell}}$ when
	    the last two consecutive symbols of $w$ appear with the same timestamp.
	    Thus, the $\DTA_1$ that recognises $\widehat{A_{\mathcal{M}}^{\Time, \ell}}$
	    is depicted in the left of \cref{fig:buchi_win-cond-AB}
	    (a transition labelled with $\Sigma_\M$ means one transition for each letter in $\Sigma_\M$,
	    with the same guard and reset set of clocks).
	
	    \begin{figure}
	        \centering
	        \begin{tikzpicture}[
    semithick,
    every state/.style={minimum size=12pt,inner sep=0},
    every node/.append style={font=\small},
    yscale=1,
    xscale=0.96,
    initial text={},
    every initial by arrow/.append style={inner sep=0, outer sep=0}
]
    \node[state,initial] at (0, 0)  (s0) {};
    \node[state] at (3, 0) (s1) {};
    \node[state, accepting] at (6, 0) (s2) {};

    \draw[->]
    (s0) edge node[above] {$\Sigma_\M, \{x\}$} (s1)
    (s0) edge[loop above] node[above] {$\Sigma_\M$} (s0)
    (s1) edge node[above] {$\Sigma_\M, x = 0$} (s2)
    ;

    \begin{scope}[xshift=5.5cm]
        \node[state,initial] at (3.5, 0) (s1) {};
        \node[state,accepting] at (8, 0) (s2) {};

        \draw[->]
        (s1) edge[loop above] node[above] {$\Counter, x < 1$} (s1)
        (s1) edge[loop below] node[below] {$\Instructions, x = 1, \{x\}$} (s1);

        \draw[->]
        (s1) -- node[pos=0.55,below] {$\begin{aligned}
            &\color{ForestGreen}\Instructions, x < 1\\
            &\color{OrangeRed}\Counter, x = 1
        \end{aligned}$} node[pos=0.55,above] {$\color{RoyalBlue}\Sigma_\M, x > 1$} (s2);
    \end{scope}
\end{tikzpicture}
	        \caption{The two $\DTA_1$ recognising $\widehat{A_{\mathcal{M}}^{\Time, \ell}}$ (on the left),
	            and $\widehat{B_{\mathcal{M}}^{\Time, \ell}}$ (on the right).}
	        \label{fig:buchi_win-cond-AB}
	    \end{figure}
	
	    Finally, we prove that $\widehat{B_{\mathcal{M}}^{\Time, \ell}} \in \DTA_1$.
	    By definition, if $w \in \widehat{B_{\mathcal{M}}^{\Time, \ell}}$ several cases may happen
	    depending on the last letter of $w$ denoted $(a, t)$:
	    \begin{itemize}
		\item $a$ is the first instruction, and $t \neq 1$: the first instruction does not start at $1$;
	
		\item $a \in \Instructions$, and $t \notin \N$: an instruction appears on a non-integer timestamps
		(depicted in green in \cref{fig:buchi_win-cond-AB});

		\item $a \in \Counter$ and $n \in \N$: another symbol appears with an integer timestamp
		(depicted by the orange edge in \cref{fig:buchi_win-cond-AB});

		\item by letting $w = w' \cdot (a', t') \cdot (a, t)$ with $t' < n< t$ for some $n \in \N$:
		no symbol appears in an integer timestamp (depicted by the blue edge in \cref{fig:buchi_win-cond-AB}).
	    \end{itemize}
	    Thus, the $\DTA_1$ that recognises $\widehat{B_{\mathcal{M}}^{\Time, \ell}}$
	    is depicted in the right of Figure~\ref{fig:buchi_win-cond-AB}.
	\end{nestedproof}

	\begin{corollary}
	    $A_{\mathcal{M}}^{\Time, \ell} \in \NTA_1$ and
	    $B_{\mathcal{M}}^{\Time, \ell} \in \NTA_1$.
	\end{corollary}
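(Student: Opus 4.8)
The plan is to obtain the corollary as an immediate consequence of the preceding lemma, using the fact that deterministic timed automata over finite timed words are closed under complementation. The lemma provides $\DTA_1$ automata recognising $\widehat{\AMloc}$ and $\widehat{\BMloc}$, so I only need to argue that complementing such a $\DTA_1$ yields again a $\DTA_1$, and then invoke the trivial inclusion $\DTA_1 \subseteq \NTA_1$.

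First I would recall why complementation is elementary in this setting. By the definition of $\DTA$ adopted in \cref{sec:prelim}, for every location $\loc$ and every letter $a$ the guards appearing on the outgoing $a$-transitions form a \emph{partition} of the valuation space; hence the automaton is both deterministic and complete. Consequently, on any finite timed word there is exactly one run: given the current configuration and the next timed letter $(a,t)$, the delay is determined by the timestamp, and exactly one of the partitioning guards is satisfied by the resulting valuation. Since the local languages $\AMloc$ and $\BMloc$ are languages of \emph{finite} timed words with reachability acceptance, complementation then reduces to swapping accepting and non-accepting locations: the complemented automaton accepts a word precisely when its unique run ends in a non-accepting location, i.e.\ exactly when the original automaton rejects it. This transformation leaves the set of clocks untouched, so applying it to the two $\DTA_1$ of \cref{fig:buchi_win-cond-AB} produces $\DTA_1$ recognising $\AMloc$ and $\BMloc$, respectively.

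Finally, since every $\DTA_1$ is in particular an $\NTA_1$, I conclude $\AMloc \in \NTA_1$ and $\BMloc \in \NTA_1$. I do not expect any genuine obstacle here; the only point deserving a word of care is that we are working with finite timed words under reachability acceptance, where the state-swapping argument suffices, so that none of the subtler machinery required for complementing \Buchi acceptance over infinite words is needed.
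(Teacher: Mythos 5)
Your proposal is correct and matches the paper's (implicit) reasoning: the paper states this as an immediate corollary of the preceding lemma, relying exactly on closure of \DTA under complementation (swapping accepting and non-accepting locations, which is sound because the partition condition on guards makes a \DTA complete and gives a unique run on every finite timed word) together with $\DTA_1 \subseteq \NTA_1$. The only nitpick is your passing mention of ``reachability acceptance''; the argument you actually run correctly uses the ``run ends in an accepting location'' semantics for finite words, which is the one the paper adopts.
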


	To conclude the proof, we have to prove that knowing if a (finite\footnote{If such an error occurs, it would be on
	a prefix of a timed word played by \Timer.})
	timed word satisfies $C_{\mathcal{M}}^{\Time, \ell}$ can be done with an $\NTA_1$
	when we suppose that the words is in $\RegIMTloc \cap
	\AMloc \cap \BMloc$ (since otherwise the language $C_{\M}^{\Time, \ell}$ is not defined).

	\begin{lemma}
	    \label{lem:Buchi_win-cond_local}
	    $C^{\Time, \ell}_\M \cap \RegIMTloc \cap \AMloc \cap \BMloc \in \NTA_1$.
	\end{lemma}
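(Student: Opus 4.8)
The plan is to construct a single one-clock automaton $\mathcal{A}$ directly, rather than by an unrestricted product of automata for the four conjuncts (which would blow up the number of clocks). The structural language $\RegIMTloc$ — block structure, instruction compatibility and consistency with zero tests — depends only on the untiming, so I would record it entirely in the finite control, contributing no clock. The finite control will in addition maintain the three pieces of bookkeeping that $\CMloc$ refers to and that are all regular: whether the most recently read letter lies inside a counter segment or is an instruction; if it is a counter letter $(\counter,t)$, whether it is the \emph{first} letter of the current $\counter$-segment; and the identity of the last instruction, in particular whether it belongs to $\Instructions_\inc^\counter$, to $\Instructions_\dec^\counter$, or to neither. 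None of this needs the clock.

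The single clock is then reserved for the one genuinely timed obligation of $\CMloc$: the distance-one test on the \emph{last} letter, which, as the excerpt stresses, is the only place nondeterminism is required. Concretely, on reading some counter letter that is to serve as the predecessor, $\mathcal{A}$ nondeterministically guesses ``this is the witness for the final letter'', resets the clock, and thereafter requires the clock to read exactly $1$ when the final letter $(\counter,t)$ is read, with a matching counter symbol. Because the input is assumed to lie in $\RegIMTloc \cap \AMloc \cap \BMloc$, instructions sit on the integer grid and each block occupies an open unit interval, so the guessed witness is forced to lie in the immediately preceding block and the guard $x=1$ pins its timestamp to $t-1$. I would then split exactly along the cases of the definition: (i) the generic case and the ``neither increment nor decrement'' case, where a witness is mandatory and is verified as above; (ii) the increment exception, where $(\counter,t)$ is the first letter of its segment and the last instruction lies in $\Instructions_\inc^\counter$, handled by a separate accepting branch that requires no witness; and (iii) the decrement case.

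For the decrement case the witness test must be combined with the ``removal'' requirement that the first letter of the previous $\counter$-segment have timestamp strictly below $t-1$. The key observation is that, once $\AMloc$ (strict monotonicity) is assumed, this extra requirement is \emph{regular}: it amounts to the guessed witness not being the first $\counter$-letter of its block, i.e. to some $\counter$-letter having been read earlier in that same (previous) block, whence strictness of the inequality follows automatically from $\AMloc$, and the fact that $(\counter,t)$ is the first letter of its new segment rules out a shifted copy of the removed letter. Hence the removal check is absorbed into the finite control, and the clock is still used only for the single $x=1$ comparison. Soundness of the nondeterministic guess — that a spuriously guessed predecessor cannot cause $\mathcal{A}$ to accept a word violating $\CMloc$ — is then a routine case analysis on the last letter, driven by the same three bookkeeping bits.

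The step I expect to be the main obstacle is folding $\AMloc$ and $\BMloc$ into the \emph{same} single clock that is already committed to the witness test: a naive product of the one-clock automata for $\AMloc$, for $\BMloc$, and for the $\CMloc$-check would use three clocks, whereas the statement demands an $\NTA_1$. I would resolve this by exploiting the rigid integer grid. The clock is normally driven to reset on instruction letters, which yields the unit-interval alignment needed for $\BMloc$ and forces the first instruction to occur at time $1$; and the witness test is armed only across the final block boundary, where skipping the single intervening instruction reset lets the guard $x=1$ simultaneously certify the integer crossing that $\BMloc$ requires. Strict monotonicity of the last step, i.e. the local instance of $\AMloc$, is enforced by additionally requiring the clock to have strictly advanced since the previous letter. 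Verifying that these demands are mutually consistent on every run, so that one clock really suffices for all four conjuncts at once, is the delicate point; once it is established, $\mathcal{A}$ witnesses that $\CMloc \cap \RegIMTloc \cap \AMloc \cap \BMloc \in \NTA_1$.
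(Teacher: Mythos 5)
Your construction of the one-clock check for $\CMloc$ is essentially the paper's: the structural conditions of $\RegIMTloc$ and all segment/instruction bookkeeping live in the finite control, the single clock is spent on a nondeterministically guessed witness that is reset and then tested against $x=1$ at the final letter, the branches are indexed by the type of the last instruction, and the decrement-side ``removal'' requirement is made regular by insisting that the guessed witness is not the first $\counter$-letter of its segment --- this is exactly the paper's union of guess-and-check gadgets. One small divergence: your increment branch accepts a fresh first occurrence of $\counter$ with no timed obligation, whereas the paper's automaton additionally checks, when the previous $\counter$-segment was nonempty, that the fresh letter falls within one time unit of the first $\counter$ of that segment (guard $x<1$), in line with the global condition $\CM$ rather than the terser wording of $\CMloc$ in the main text.

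The step you single out as the main obstacle --- folding $\AMloc$ and $\BMloc$ into the same clock already committed to the witness --- is not how the paper proceeds, and your proposed fix does not work as described: a clock reset at the guessed witness carries that witness's arbitrary fractional part, so the guard $x=1$ at the final letter certifies only the distance-one relation and says nothing about whether the intervening instruction sits on the integer grid, which is what $\BMloc$ demands. The paper avoids the product altogether: it assumes the input lies in $\RegIMTloc\cap\AMloc\cap\BMloc$ and builds an automaton required to agree with $\CMloc$ only there. This is legitimate because the lemma is consumed solely inside the union $C^{\Time,\ell}_\M\cup\widehat{\RegIMTloc}\cup\widehat{\AMloc}\cup\widehat{\BMloc}$ defining $V^{\Time}_\M$ (and $\CMloc$ is itself declared only up to words outside that intersection), so any word the $C$-gadget misclassifies outside the intersection is absorbed by the complement disjuncts. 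Under that reading your construction is complete and the delicate clock-sharing step is unnecessary; if you insist on an automaton for the literal intersection, that step remains a genuine gap in your write-up.
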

	\begin{nestedproof}
		Let $w$ be a timed word.
		We want to test when $w \in C^{\Time, \ell}_\M \cap \RegIMTloc \cap \AMloc \cap \BMloc$ with a $\NTA_1$.
		Since we can know when $w$ is in $\RegIMTloc$, $\AMloc$ and $\BMloc$ with a $\DTA_1$,
		we suppose without generality that $w \in \RegIMTloc \cap \AMloc \cap \BMloc$, and we want to
		check whereas $w \in C^{\Time, \ell}_\M$ with a $\NTA_1$.

	    The key here is to remark that we only test the validity of the last letter of $w$
	    according to the last instruction of $w$.
	    As we use a lossy counter machine, we do not need to test if a
	    letter is missing (or not) when the last letter of $w$ is played.
	    Otherwise, one clock would not be enough.
	
		Let $w \in \RegIMTloc \cap \AMloc \cap \BMloc$,
		we want to check if it belongs to $C_{\M}^{\Time, \ell}$ with an $\NTA_1$.
		Since we assumed to be in $\BMloc$, $C_{\M}^{\Time, \ell}$ is trivially true if the last letter of $w$ is in
		$\Instructions$, so in that case we simply accept $w$.
		Otherwise, since in particular $w \in \RegIMTloc$, we have that
		\begin{displaymath}
		\untimed(w) = \enc(\val_0)\, \instruction_1\, \cdots\, \enc(\val)\, \instruction\, \enc(\val')
		\end{displaymath}
		with $\val$ being the complete valuation before the last instruction $\instruction$
		and $\val'$ being the possibly incomplete valuation after executing $\instruction$.
		Let $\counter \in \Counter$ the counter such that $\instruction \in \Instructions^\counter$.
		We distinguish three cases depending on the type of operation of $\instruction$, and build one corresponding $\NTA_1$ for each case.
	
	    \paragraph*{\boldmath If $\instruction \in \Instructions_\zt^\counter$:}
	    Note that the operation of $\instruction$ does not change the valuation,
	    i.e., we just have $\val' \geq \val$ due to the lossy semantics.
	    Thus, for $w$ to satisfy the local version of $C_{\mathcal{M}}^{\Time}$,
	    we just need to check that an occurrence of the last letter of $w$ exists in $\enc(\val)$
	    with the same fractional part.
	    Formally, $w$ must satisfy the following formula:
	    \begin{equation}
	        \tag{A}
	        \label{eq:Buchi_win-cond_local-0test}
	        w = v \cdot (a, t) ~\wedge~(a, t -1) \in w
	    \end{equation}
	    In particular, this can be tested by the $\NTA_1$ with the red branch depicted in \cref{fig:Buchi_win-cond_local}.

	    \paragraph*{\boldmath If $\instruction \in \Instructions_{\text{inc}}^\counter$:}
	    Observe that the operation of $\instruction$ may increase the valuation of the counter $\counter$
	    (the valuation of other counters remains stable or decreases).
	    In particular, when the first occurrence of $\counter$ is fresh, we need to test
	    if it appears within one unit of time after the first occurrence of $\counter$ in $\enc(\val)$ (if it exists).
		Otherwise, we need to test the presence of an occurrence of the last letter one unit of time before
		(as for the previous case).
	
		Let $(a, t)$ be the last letter in $w$.
		In this case, $w$ belongs to $C_{\M}^{\Time, \ell}$ 
		when it satisfies the formula $A \vee B \vee C \vee D$ as defined below.
	
	    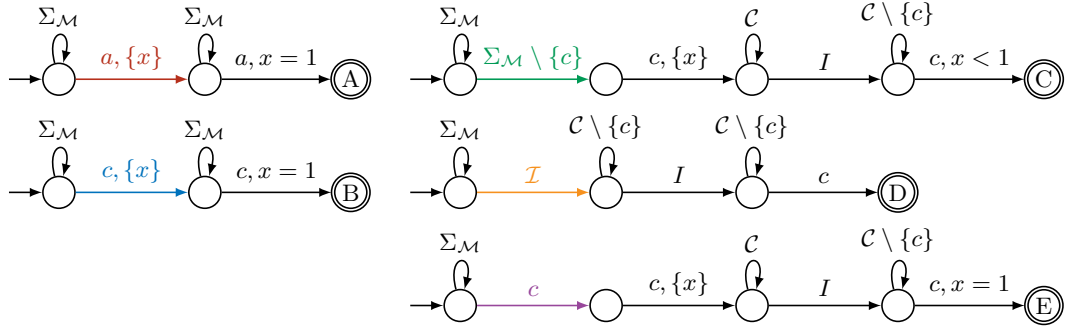
\begin{figure}
	        \centering
	        \begin{tikzpicture}[
    semithick,
    every state/.style={minimum size=12pt,inner sep=0},
    every node/.append style={font=\small},
    yscale=1,
    xscale=0.96,
    initial text={},
    every initial by arrow/.append style={inner sep=0, outer sep=0}
]
    \begin{scope}
        \node[state, initial] at (0, 0)  (s0) {};
        \node[state] at (2, 0) (s2) {};
        \node[PlayerMin, accepting] at (4, 0) (s4) {\ref{eq:Buchi_win-cond_local-0test}};

        \draw[->]
        (s0) edge[loop above] node[above] {$\Sigma_\M$} (s0)
        (s0) edge[BrickRed] node[above] {$a, \{x\}$} (s2)
        (s2) edge[loop above] node[above] {$\Sigma_\M$} (s2)
        (s2) edge node[above] {$a, x = 1$} (s4);
    \end{scope}

    \begin{scope}[xshift=5.5cm,yshift=-1.5cm]
        \node[state, initial] at (0, 0)  (s0) {};
        \node[state] at (2, 0) (s2) {};
        \node[state] at (4, 0) (s3) {};
        \node[PlayerMin, accepting] at (6, 0) (s4) {\ref{eq:Buchi_win-cond_local-inc-D}};

        \draw[->]
        (s0) edge[loop above] node[above] {$\Sigma_\M$} (s0)
        (s0) edge[BurntOrange] node[above] {$\Instructions$} (s2)
        (s2) edge[loop above] node[above] {$\Counter \setminus \{\counter\}$} (s2)
        (s2) edge node[above] {$\instruction$} (s3)
        (s3) edge[loop above] node[above] {$\Counter \setminus \{\counter\}$} (s3)
        (s3) edge node[above] {$c$} (s4)
        ;
    \end{scope}

    \begin{scope}[yshift=-1.5cm]
        \node[state, initial] at (0, 0)  (s0) {};
        \node[state] at (2, 0) (s2) {};
        \node[PlayerMin, accepting] at (4, 0) (s4) {\ref{eq:Buchi_win-cond_local-inc-B}};

        \draw[->]
        (s0) edge[loop above] node[above] {$\Sigma_\M$} (s0)
        (s0) edge[RoyalBlue] node[above] {$c, \{x\}$} (s2)
        (s2) edge[loop above] node[above] {$\Sigma_\M$} (s2)
        (s2) edge node[above] {$c, x = 1$} (s4)
        ;
    \end{scope}

    \begin{scope}[yshift=0cm, xshift=5.5cm]
        \node[state, initial] at (0, 0)  (s0) {};
        \node[state] at (2, 0) (s1) {};
        \node[state] at (4, 0) (s2) {};
        \node[state] at (6, 0) (s3) {};
        \node[PlayerMin, accepting] at (8,0) (s4) {\ref{eq:Buchi_win-cond_local-inc-C}};

        \draw[->]
        (s0) edge[loop above] node[above] {$\Sigma_\M$} (s0)
        (s0) edge[ForestGreen] node[above] {$\Sigma_\M \setminus \{\counter\}$} (s1)
        (s1) edge node[above] {$c, \{x\}$} (s2)
        (s2) edge[loop above] node[above] {$\Counter$} (s2)
        (s2) edge node[above] {$\instruction$} (s3)
        (s3) edge[loop above] node[above] {$\Counter \setminus \{\counter\}$} (s3)
        (s3) edge node[above] {$c, x < 1$} (s4);
    \end{scope}

    \begin{scope}[yshift=-3cm, xshift=5.5cm]
        \node[state, initial] at (0, 0)  (s0) {};
        \node[state] at (2, 0) (s1) {};
        \node[state] at (4, 0) (s2) {};
        \node[state] at (6, 0) (s4) {};
        \node[PlayerMin, accepting] at (8, 0) (s5) {\ref{eq:Buchi_win-cond_local-dec-E}};

        \draw[->]
        (s0) edge[loop above] node[above] {$\Sigma_\M$} (s0)
        (s0) edge[Purple] node[above] {$\counter$} (s1)
        (s1) edge node[above] {$\counter, \{x\}$} (s2)
        (s2) edge[loop above] node[above] {$\Counter$} (s2)
        (s2) edge node[above] {$\instruction$} (s4)
        (s4) edge[loop above] node[above] {$\Counter \setminus \{\counter\}$} (s4)
        (s4) edge node[above] {$\counter, x = 1$} (s5);
    \end{scope}
\end{tikzpicture}
	        \caption{The gadgets used to recognise $C_\mathcal{M}^{\Time, \ell}$
				by using the formula~\eqref{eq:Buchi_win-cond_local-0test}, ~\eqref{eq:Buchi_win-cond_local-inc-B},
				~\eqref{eq:Buchi_win-cond_local-inc-C},~\eqref{eq:Buchi_win-cond_local-inc-D},
				~\eqref{eq:Buchi_win-cond_local-dec-E}.
			}
	        \label{fig:Buchi_win-cond_local}
	    \end{figure}

	    \subparagraph*{\boldmath Case $A$: $a$ is not an occurrence of $\counter$.}
	    We suppose that $a \neq \counter$.
	    Intuitively, we are in the same case that for the zero-test since the operation of $\instruction$
	    does not increase the value of the counter $a$.
	    Thus, we need to check if $w$ satisfies~\eqref{eq:Buchi_win-cond_local-0test}, which can be tested
	    by the $\NTA_1$ with the red branch depicted in \cref{fig:Buchi_win-cond_local}.
	
	    \subparagraph*{\boldmath Case $B$: $a$ is a repeated occurrence of $\counter$.}
	    This is the case where $a = \counter$ is not fresh, i.e., $(\counter, t - 1) \in w$.
	    Formally, $w$ has the form
	    \begin{equation}
	        \tag{B}
	        \label{eq:Buchi_win-cond_local-inc-B}
	        w = u \, (\counter, t-1)\, v\, (\counter, t)
	    \end{equation}
	    that can be tested by the $\NTA_1$ with the blue branch depicted in \cref{fig:Buchi_win-cond_local}.
	
	    \subparagraph*{\boldmath Case $C$: $a$ is a fresh occurrence of $\counter$ and $\val(c) \geq 1$.}
		This is the case where $a = \counter$ is a fresh first occurrence of $\counter$ after $\instruction$,
		and the incrementation is made on a counter with a positive valuation (i.e., $\counter \in \enc(\val)$).
		In particular, $a$ is the first occurrence of $\counter$ after $\instruction$ within one unit of time after
		the first one in $\enc(\val)$.
		Formally, $w$ satisfies the following formula:
		\begin{equation}
			\tag{C}
			\label{eq:Buchi_win-cond_local-inc-C}
			w= v_1 \cdot (\instruction, t_\instruction) \cdot v_2 \cdot (\counter, t) ~\wedge~
			c \notin \untimed(v_2)  ~\wedge~
			\exists (\sigma, t_1) \cdot (\counter, t_2) \subseteq v_1.
			(t < t_2 + 1 \wedge \sigma \neq c)
		\end{equation}
		that can be tested by the $\NTA_1$ with the green branch depicted in \cref{fig:Buchi_win-cond_local}.
	
	    \subparagraph*{\boldmath Case $D$: $a$ is a fresh occurrence of $\counter$ and $\val(c) = 0$.}
	    This is the case where $a = \counter$ is the first occurrence of $\counter$ after $\instruction$,
		and the increment of $\counter$ was made for the first time (i.e., $\counter \notin \enc(\val)$).
		Here, we need to test that no occurrence of $\counter$ appears in $\enc(\val)$,
		nor after $\instruction$ until the last letter.
		Moreover, since we assumed $w \in \RegIMTloc$, we know this $\counter$ will be correctly placed
		(for instance, $\counter = \counter_0$ will be between an instruction and the $\counter_1$ sequence).
		Formally, $w$ satisfies the following formula:
		\begin{equation}
			\tag{D}
			\label{eq:Buchi_win-cond_local-inc-D}
			w= v_1 \cdot \Instructions \cdot v_2 \cdot (\instruction, t_\instruction) \cdot v_3 \cdot (\counter, t)
			~\wedge~ c \notin \untimed(v_2)  ~\wedge~ c \notin \untimed(v_3)
		\end{equation}
		that can be tested by the $\NTA_1$ with the orange branch depicted in \cref{fig:Buchi_win-cond_local}.

	    \paragraph*{\boldmath If $\instruction \in \Instructions_{\text{dec}}^\counter$:}
	    When the operation of $\instruction$ decrements the counter $\counter$, we need to check that the
	    first occurrence of $\counter$ in $\enc(\val)$ is ``deleted'', i.e., does not reappear one time unit later in $\enc(\val')$.
	    Again, since $w \in \RegIMTloc \cap \AMloc$, this occurrence exists
	    and it is unique.
	    Formally, let $(a, t)$ be the last letter in $w$.
	    $w$ belongs to $C_{\M}^{\Time, \ell}$
	    if it satisfies the formula $A \vee E$
	    as defined below.
	
		\subparagraph*{\boldmath Case $A$: $a$ is not an occurrence of $\counter$.}
		If $a \neq \counter$, we are again in the same case as for the zero-test since the operation of $\instruction$
		does not increase the value of the counter $a$.
		Thus, we need to check if $w$ satisfies~\eqref{eq:Buchi_win-cond_local-0test} that can be tested
		by the $\NTA_1$ with the red branch depicted in \cref{fig:Buchi_win-cond_local}.

		\subparagraph*{\boldmath Case $E$: $a$ is an occurrence of $\counter$.}
		If $a = \counter$, it can not be fresh (as no incrementation is possible), i.e., $(a, t - 1) \in w$.
		Moreover, $(\counter, t - 1)$ must not be the first occurrence of $\counter$ in $\enc(\val)$.
		Formally, $w$ satisfies the following formula:
		\begin{equation}
			\tag{E}
			\label{eq:Buchi_win-cond_local-dec-E}
			w= v \cdot (\counter, t) ~\wedge~ \exists (\counter, \tau_1) \colon
			(\counter, \tau_1) \cdot (\counter, \tau-1) \subseteq w
		\end{equation}
		that can be tested by the $\NTA_1$ with the purple branch depicted in \cref{fig:Buchi_win-cond_local}.

	Finally, we build the $\NTA_1$ for $C_{\M}^{\Time, \ell}$ by first guessing which of the three cases the last instruction $\instruction$ satisfy, and then running the union of the appropriate $\NTA_1$ to check if the word satisfy the corresponding formula.
	\end{nestedproof}
	We have shown that $A_{\M}^{\Time, \ell}$, $B_{\M}^{\Time, \ell}$,  and $C_{\M}^{\Time, \ell}$ are recognised by $\NTA_1$.
	Therefore, so is $W_\M$.
	This ends the proof of Lemma~\ref{prop:Buchi_win-cond}.
\end{proof}

\para{Correctness of reduction}
We prove the following lemma:
\begin{restatable}{lemma}{lemBuchiCorrectness}
    \label{lem:Buchi_correctness}
    \Timer\ has a winning controller if $\M$ repeatedly reaches $\state$, and
    \Monitor has a winning strategy otherwise.
\end{restatable}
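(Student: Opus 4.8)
The plan is to prove the two implications of \cref{lem:Buchi_correctness} separately and then read off \cref{thm:Buchi}. Note first that every play lies in $W_\M$ or not, hence is won by exactly one player, so \Timer and \Monitor cannot both have winning strategies. Consequently the lemma immediately yields the theorem: for $\Subject=\Timer$ a winning strategy (equivalently, by the lemma, a winning controller) exists iff $\M$ repeatedly reaches $\state$, while for $\Subject=\Monitor$ a winning strategy exists iff $\M$ does not; so all three problems reduce from LCM repeated reachability (or its complement), which is undecidable.

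For the first implication, assume $\M$ repeatedly reaches $\state$. The crucial step is to extract a lasso-shaped run with \emph{bounded} counters: I claim $\M$ then has an ultimately periodic run $\to^*(\state,\nu)\,[\to^+(\state,\nu)]^\omega$ visiting $\state$ infinitely often. Take any run visiting $\state$ infinitely often, with $\state$-configurations $(\state,\nu_0),(\state,\nu_1),\dots$. By pigeonhole, infinitely many $\nu_k$ share the same set of null counters $Z=\{c\in\Counter : \nu_k(c)=0\}$; restricting to these and applying Dickson's lemma on $\N^{\Counter}$ gives $i<j$ with $\nu_i\le\nu_j$ and $Z(\nu_i)=Z(\nu_j)$. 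Since $\M$ is deterministic, the instruction applicable at a $\state$-configuration depends only on its null counters, so the same $\instruction$ fires at $(\state,\nu_i)$ and $(\state,\nu_j)$. Replaying the segment $(\state,\nu_i)\to^+(\state,\nu_j)$ from the higher configuration $(\state,\nu_j)$, while losing at every step down to the exact configuration reached from $\nu_i$, stays valid: for an increment/decrement the target lies below $\nu_j[\op]$ by monotonicity, and a zero test is applicable because the tested counter is null in both $\nu_i$ and $\nu_j$ (for the first step by $Z(\nu_i)=Z(\nu_j)$, and afterwards because the two replays are at identical configurations). This produces a genuine loop $(\state,\nu_j)\to^+(\state,\nu_j)$, i.e. the desired bounded lasso with $\nu=\nu_j$.

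Encoding this lasso via \cref{lem:Buchi_language} gives $w\in\RunEnc^{\Time}_\M$ with $\untimed(w)=\enc(\run)$; as the run is periodic with bounded counters, I would choose the timestamps by a fixed periodic offset scheme inside each unit interval, so that $w$ is ultimately periodic and uses only finitely many delays, and hence is producible by a finite-memory \Timer-controller that ignores \Monitor. Since every prefix of $w$ meets all local conditions $\RegIMTloc$, $\AMloc$, $\BMloc$, $\CMloc$, \Monitor can never truthfully declare an error: any move $\errR,\errA,\errB,\errC$ enters the corresponding branch of $V^{\Time}_\M$ and \Timer wins by $\Reach(V^{\Time}_\M)$, whereas if \Monitor plays $\Ok$ forever then, as $\run$ visits $\state$ infinitely often, the play lies in $\projA^{-1}(\mathrm{Inf}(\state))\cap\projB^{-1}(\{\Ok\}^\omega)$ and \Timer wins again. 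For the converse, assuming no run visits $\state$ infinitely often, I let \Monitor track \Timer's word and, each round, test the finite prefix against the local conditions in priority order $\RegI,A,B,C$, playing $\Ok$ while all hold and otherwise immediately the error letter of the first violated type (unbounded memory is allowed for a strategy). If \Timer never errs, the play lies in $\RunEnc^{\Time}_\M$, hence encodes a genuine run, which visits $\state$ only finitely often; so $\mathrm{Inf}(\state)$ fails and, as \Monitor stays in $\Ok$, no prefix enters $V^{\Time}_\M$. If \Timer first errs, the declared branch of $V^{\Time}_\M$ requires the prefix to satisfy the very condition that was violated (for $\errC$, to violate $\RegI/A/B$, which here hold), so that prefix is excluded; earlier prefixes have \Monitor in $\Ok^*$ and correct, and later ones have a non-$\Ok$ letter outside the $\Ok^*\mathrm{err}$ shape. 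In every case the play avoids $W_\M$, so \Monitor wins.

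The main obstacle is the first implication: converting the bare existence of a $\state$-recurrent run into a finitely representable winning controller. The delicate point is that a zero test can be replayed from a higher configuration only when the tested counter is null there, which is exactly why the argument must first refine the $\state$-visits by their null-counter pattern and only then apply the well-quasi-order; one must also check that the resulting bounded periodic run admits a timing with finitely many delays respecting $\CMloc$. By contrast, the converse is routine bookkeeping about error priorities, using only the local recognisability of the winning condition already established.
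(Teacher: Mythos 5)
Your proposal is correct and follows essentially the same route as the paper: extract a bounded lasso run repeatedly reaching $\state$ (the paper delegates this well-quasi-order/lossy-replay argument to a citation of Schnoebelen, which you reprove in detail, correctly handling the null-counter pattern needed for zero tests), encode it with a fixed finite-granularity slot scheme so a finite-memory \Timer-controller suffices, and observe that \Monitor can then never truthfully declare an error; conversely, an unbounded-memory \Monitor strategy that declares the first violated local condition with the priority $\RegI,A,B,C$ wins exactly as in the paper. The one point you flag but do not fully verify --- that the periodic bounded run admits a timing with finitely many delays satisfying $\CMloc$ --- is treated at the same level of detail in the paper (via the granularity $\nicefrac{1}{4(k+1)}$ and the slot picture), so this is not a gap relative to the paper's own argument.
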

\begin{proof} 
    Due to a well-quasi-order on configurations
    and the lossy semantics, if $\M$ repeatedly reaches $\state$ then $\M$ has a \emph{lasso}
    run  repeatedly reaching $\state$~\cite{Schnoebelen10a}: a run that is cyclic from some point on.
    The run is thus bounded by some $k\in\N$, and \Timer has a winning controller that produces
    an encoding of this run of granularity at most $\delta=\nicefrac{1}{4(k+1)}$.
	Indeed, with this granularity, \Timer can encode all potential valuations by
	using $k$ ``slots'' in an interval of length $\nicefrac{1}{4}$ for each of the segments.

    Conversely, if $\M$ does not repeatedly reach $\state$ then whenever
    \Timer\ produces a correct run it necessarily visits $\state$ finitely often.
    Then \Monitor has a winning strategy that  records all the history
    and keeps playing $\Ok$ as long as \Timer is not cheating,
    and is able to detect all kinds of errors produced by \Timer in case
    \Timer\ cheats.
\end{proof}

\section{\boldmath\Timer is the \owner but not the \subject}
\label{sec:boundedness}

In this section we prove undecidability in the last remaining case:

\begin{restatable}{theorem}{thmLCMboundedness}
    \label{thm:Radek}
    The timed Church synthesis problem is undecidable when $\Owner = \Timer$ and $\Subject = \Monitor$.
\end{restatable}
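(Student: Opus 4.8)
The plan is to reduce from the \emph{boundedness} problem for lossy counter machines, recalled in \cref{sec:lossy}, which is undecidable. Fix an LCM $\M$ with four counters. I will construct a timed game $\Game = \langle \TimerAlphabet, \MonitorAlphabet, \A, \Timer, \Monitor\rangle$ with $\A \in \rresNTA_1$ such that \Monitor has a winning controller if and only if every run of $\M$ is bounded. The construction reuses the encoding of runs from \cref{sec:Buchi}: \Timer is again tasked with producing a timed word encoding a run of $\M$, with each counter valuation written as tokens placed inside a unit interval and successive valuations aligned one time unit apart (exactly as in $\RunEnc^\Time_\M$ and the conditions $\AM$, $\BM$, $\CM$). \Monitor's alphabet again consists of $\Ok$ together with the error flags, and \Monitor now plays the role of a \emph{verifier}: it should announce an error precisely in the round in which \Timer first deviates from a valid lossy run, and play $\Ok$ otherwise. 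Since the essential difference between reactive and Church synthesis in this case is finite memory (\cref{thm:nonequivalence}), and since the reactive variant is already covered by \cref{thm:Buchi}, boundedness is the natural source problem: it measures exactly how much memory the verifier needs.

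The winning condition $\A$ (owned by \Timer) is the reachability condition ``eventually \Monitor mis-verifies''. Concretely, mirroring the union defining $V^\Time_\M$ in \cref{sec:Buchi}, \Timer wins if \Monitor raises a flag in a round where the corresponding local condition in fact holds (a \emph{false positive}), and symmetrically \Timer wins if \Monitor plays $\Ok$ in a round where the step is actually invalid (a \emph{false negative}). The false-positive disjuncts are recognised by $\rNTA_1$ just as in \cref{prop:Buchi_win-cond}. The one genuinely new ingredient is detecting an \emph{unflagged spurious token}, i.e.\ a false negative, which amounts to certifying the \emph{absence} of a predecessor token one time unit earlier; this is where the \emph{1-resetting} clock is used, letting the single clock track the fractional part across unit boundaries so that a guessed matching position can be confronted with the segment structure inside a reachability automaton. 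I expect the verification that the resulting condition lies in $\rresNTA_1$ to be routine given the gadgets of \cref{prop:Buchi_win-cond}.

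For correctness, the easy direction is ``$\M$ bounded $\Rightarrow$ \Monitor has a controller''. Because each step of a valid encoding is a lossy transition, a correct play of \Timer is exactly an encoding of a lossy run of $\M$; if all such runs are bounded by some $k$, then every segment carries at most $k$ tokens per counter. A controller realised by a \DTA with $O(k)$ clocks can then store the fractional position of every token of the current valuation, compare each newly read token against the stored positions (shifted by one unit) and against the last instruction, and flag an error exactly when the update is not a legal lossy update. Such a controller produces neither false positives nor false negatives, so the play stays out of $\Lang(\A)$ and \Monitor wins.

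The hard direction, and the main obstacle, is ``$\M$ unbounded $\Rightarrow$ \Monitor has no controller'': here I must defeat \emph{every} finite-memory controller. Given a candidate controller with $m$ clocks and finitely many locations, a counting/pumping argument should show that within a single unit interval it can distinguish only boundedly many token patterns, since two valid valuations whose readings reach the same region-configuration of the controller are indistinguishable to it. As $\M$ is unbounded, \Timer can honestly steer the (deterministic, lossy) run into a segment carrying more tokens than the controller can resolve; at that point two continuations---one in which an extra token is a legitimate predecessor-matched token and one in which it is spurious---look identical to the controller, so its fixed response must be wrong on one of them. Knowing the deterministic controller's responses, \Timer selects the branch on which it errs, forcing either a false positive or a false negative and hence a play in $\Lang(\A)$. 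Making precise both the claim that ``$m$ clocks resolve only boundedly many token patterns'' and the claim that \Timer can always drive the unbounded run past this resolution is the technical crux; the remainder follows the template of \cref{sec:Buchi} together with the non-coincidence argument of \cref{thm:nonequivalence}.
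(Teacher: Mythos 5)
Your choice of source problem (LCM boundedness) and your game skeleton (\Monitor as verifier, \Timer winning by a reachability condition whenever \Monitor produces a false positive or a false negative) match the paper's. The genuine gap is the claim that the false-negative disjunct is ``routine given the gadgets of \cref{prop:Buchi_win-cond}''. Those gadgets only show that the \emph{positive} local conditions (``the last token has a predecessor one unit earlier'') are in $\NTA_1$, by nondeterministically anchoring the single clock on an \emph{existing} earlier letter. A false negative requires the winning automaton to certify the \emph{absence} of a predecessor at an exact past time point; under the \cref{sec:Buchi} encoding there may be no letter at all carrying the relevant fractional part on which to reset the clock, so a nondeterministic one-clock reachability automaton (even a 1-resetting one) has nothing to anchor on, and complementing the nondeterministic $\CMloc$ is not available either. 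This is exactly why the paper abandons the \cref{sec:Buchi} encoding in this case: it instead encodes a counter value as the number of \emph{active fractional parts} (an increment at fractional part $f$ later matched by a decrement at $f$), and---crucially---introduces a \emph{pool} of fractional parts that \Timer must emit at the start of the play. The pool supplies the missing anchor: freshness or inactivity of $f$ is checked by resetting on a pool letter with fractional part $f$ and then deterministically tracking activity through $x=0$ versus $x>0$ guards at each subsequent instruction. The price is that $\ErrLang$ and $\NotErrLang$ are deliberately \emph{not} complements (each demands, on its absence-checking side, that the last fractional part occur in the pool), and a separate game-theoretic argument shows the resulting gap is harmless because \Timer is incentivised to fill the pool. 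None of this machinery appears in your proposal, and without it the winning condition you describe is not an $\rresNTA_1$ language.

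The other direction is also left open at its crux. You assert that a controller with $m$ clocks ``resolves only boundedly many token patterns'' but give no argument. The paper's version is concrete: take $k$ to be the number of guards of the candidate controller, drive the run until some counter has $k+2$ active fractional parts, and have \Timer place \emph{inactive} pool entries between consecutive active ones; since the set of delays satisfying a fixed guard from a fixed valuation is convex, any $\err$-transition fireable at two distinct active fractional parts is also fireable at an interleaved inactive pooled one, forcing the controller into a false positive. Your sketch would need both this convexity/pigeonhole step and, once more, the pool to supply the confusable legitimate position; as written it does not yet yield a proof.
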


\noindent
(Note the similarity of \cref{thm:Radek,thm:universality-sample} along the exchange of roles of \Timer and \Monitor.)

We reduce from the LCM boundedness problem, assuming free-test semantics.
Let us fix a 4-counter LCM $\M = \tuple{\Counter,\States,\state_0,\Instructions}$,
where $\Counter=\{c_1,c_2,c_3,c_4\}$.
Let $\TimerAlphabet = \Instructions \cup \{\pool\}$ and $\MonitorAlphabet = \{\ok,\err\}$.
We define a timed game with \Timer's winning condition $W_\M$ such that
\Monitor has a winning controller if and only if $\M$ is bounded.

\para{\boldmath 1-resetting $\NTA_1$}
In this case, we need a slight extension of $\NTA_1$ to define winning conditions and controllers:
\NTA with 1 clock and very limited form of $\varepsilon$-transitions that reset the clock
every time it equals 1.
Let $\FracPart = \Qpos \cap [0,1)$ denote the set of fractional parts.
For any time value $t \in \R_{\geq 0}$, let $\fracpart{t}$
denote its fractional part, i.e., the unique value in $\FracPart$ such that
$t = n + \fracpart{t}$ for some $n \in \N$.
If $\val$ is a valuation, $\fracpart{\val}$ denotes the valuation $\{x \mapsto \fracpart{\val(x)}\}$.

A \emph{1-resetting $\NTA_1$} ($\resNTA_1$) is an $\NTA_1$
as defined in Section~\ref{sec:prelim} with only the following modification:
$(\loc, \val) \xrightarrow{\delta, \tau} (\loc', \val')$ (with $a \in \Sigma$) if and only if
$\delta = (\loc, a, g, Y, \loc')\in \Trans$ is a transition of $\A$
such that $\fracpart{\val + \tau} \models g$, and $\val' = \fracpart{\val + \tau}[Y \coloneqq 0]$.
The 1-resetting $\NTA_1$ can be simulated using $\NTA_1$ with $\varepsilon$-transitions:
in every location $\loc \in \Locs$ add a self-loop $(\loc,\varepsilon,x=1,\{x\},\loc)$.

%
%
%
%
%

\para{The idea of reduction}
Intuitively speaking,
\Timer is tasked with simulating a run of $\M$, and \Monitor can point out when they think that \Timer made a mistake in the simulation.
More specifically, \Timer will play instructions of $\M$, and the time values will be used to encode the valuations of counters, as described below
(the encoding is different from the one in the previous section).
After each move by \Timer, \Monitor can either say that the simulation is correct so far, or that \Timer made a mistake in their last move.
In the former case, if \Timer actually made a mistake then \Timer wins
(\Monitor must point out any mistake), and
if there was no mistake then the game continues
(if the game continues like this forever, \Monitor wins).
In the latter case, the game is essentially immediately over: 
either \Monitor is right and \Timer made a mistake, which will be winning for \Monitor, or \Monitor is wrong and \Timer's last move was correct, in which case \Timer will be winning.

The way time is used to encode counter valuations is the following.
The fractional part of a timestamp acts as an identifier which allows an increment to be later matched with a corresponding decrement.
For instance, an increment at time 2.314 can be later matched with a decrement at time 7.314.
With this, the valuation of a counter is simply the number of increments/fractional parts that have not been matched with a later decrement with the same fractional part.
See Figure~\ref{figure:run encoding} for an illustration.
%
\begin{figure}
\begin{center}\begin{tikzpicture}[xscale=1.89]
    \def\tstart{0}
    \def\tend{7}

    \foreach \x in {\tstart,0.1,...,\tend} {
        \draw[black!30, line width=0.3pt] (\x,0.12) -- (\x,-0.12);
    }

    \foreach \x in {\tstart,1,...,\tend} {
        \draw[black, line width=0.6pt] (\x,0.12) -- (\x,-0.12);
        \node[below=2pt, font=\small] at (\x,-0.12) {\x};
    }

    \draw[thick,->] (\tstart-0.2,0) -- (\tend+0.2,0);

    \def\events{%
            {1/0.1/$\square$/c1},%
            {2/0.4/$\square$/c5},%
            {3/0.6/$\square$/c4},%
            {4/0.75/$\square$/c3},%
            {1b/1.1/$\square$/c1},%
            {5/1.3/$\square$/c2},%
            {8/2.3/$c_1\mkern-2mu\increment$/c2},%
            {6/2.6/$c_1\mkern-2mu\increment$/c4},%
            {7/3.6/$c_1\mkern-2mu\decrement$/c4},%
            {dd/4.4/$c_2\mkern-2mu\increment$/c5},%
            {10/4.75/$c_1\mkern-2mu\increment$/c3},%
            {9/5.3/$c_1\mkern-2mu\decrement$/c2},%
            {ee/3.1/$c_4\mkern-2mu\increment$/c1},%
            {11/5.75/$c_1\mkern-2mu\decrement$/c3},%
            {12/6.3/$c_1\mkern-2mu\increment$/c2}%
    }
    \coordinate (lfin) at (7.2,0.2) {};
    \coordinate (bfin) at (7.2,-0.2) {};
    \coordinate (bini) at (1.7,-0.2) {};
    \foreach \id/\x/\label/\c in \events {
        \coordinate (e\id) at (\x,0.2) {};
        \coordinate (b\id) at (\x,-0.2) {};
        \fill[fill=\c] (e\id) -- ++(0.03,0.2) -- ++(-0.06,0) -- cycle;
        \fill[black] (e\id) -- ++(0.0225,0.15) -- ++(-0.045,0) -- cycle;
        \node[font=\scriptsize, above=3mm of e\id,inner sep=0] (l\id) {\label};
    }

    \draw[overbrace style] (l1.north west) -- node[overbrace text style] (name) {pool} (l5.north east);

    \newcommand{\drawArc}[6]{
        \coordinate (p1) at ($(l#1.north)+(0,1mm)$);
        \coordinate (p2) at ($(l#1.north)+(0,#4)$);
        \coordinate (p3) at ($(l#2.north)+(0,#4)$);
        \coordinate (p4) at ($(l#2.north)+(0,1mm)$);

        \draw[#3,rounded corners=2mm] (p1) -- (p2) --node[pos=#6,color=#3!70!black,above, font=\small]{#5} (p3) -- (p4);
        \draw[#3] (b#1) -- ++(0,-1.1);
        \draw[#3] (b#2) -- ++(0,-1.1);
    }
    \newcommand{\drawArcIncomplete}[6]{
        \coordinate (p1) at ($(l#1.north)+(0,1mm)$);
        \coordinate (p2) at ($(l#1.north)+(0,#4)$);
        \coordinate (p3) at ($(l#2 |- l#1.north)+(0,#4)$);

        \draw[#3,rounded corners=2mm] (p1) -- (p2) --node[pos=#6,color=#3!70!black,above, font=\small]{#5} (p3);
        \draw[#3] (b#1) -- ++(0,-1.1);
    }
    \drawArc{6}{7}{c4}{0.8cm}{\textvisiblespace .6 active}{0.5}
    \drawArc{8}{9}{c2}{0.4cm}{\textvisiblespace .3 active}{0.62}
    \drawArc{10}{11}{c3}{0.8cm}{\textvisiblespace .75 active}{0.5}
    \drawArcIncomplete{12}{fin}{c2}{0.4cm}{\textvisiblespace .3 active}{0.55}

    \node[anchor=west,inner sep=0] at (-0.04,-0.95) {simulated value of $c_1$:};

    \foreach \x/\y/\val in {{ini/8/0},{8/6/1},{6/7/2},{7/10/1},{10/9/2},{9/11/1},{11/12/0},{12/fin/1}} {
        \coordinate (bl) at ($(b\x)+(0,-10mm)$);
        \coordinate (br) at ($(b\y)+(0,-10mm)$);
        \draw[<->,dotted] (bl) -- node[midway,above] {\val} (br);
    }

\end{tikzpicture}\end{center}
\caption{Example encoding of a run. For each instruction $(s, \op, s')$, only $\op$ is shown.}
\label{figure:run encoding}
\end{figure}

An obvious implementation of this idea would require,
in the case where a fractional part is used for the first time ever in an incrementing instruction,
checking that it is fresh using unrestricted ``guessing'' $\varepsilon$-transition
(i.e., guess the fractional part before the first instruction, then check it never appears until the very last position).
To avoid this, we introduce a pool of fractional parts at the start of the run encoding, denoted by the symbol~``\pool''.
The idea is that \Timer must include at the beginning a number of relevant fractional parts that will be used later in the run encoding.
Then, we can eliminate time guessing by instead non-deterministically choosing 
a \emph{position} in the pool, resetting the clock at that point and then every time it equals 1.
If the last position in the word is the only position seen with the clock exactly at zero, then we know this fractional part was never used before.

\para{\boldmath The encoding of runs of $\M$}
We define a timed word encoding of runs of $\M$ in three parts.
First, we capture the regular properties of their projection onto~$\TimerAlphabet$.
Let $\projAA \colon (\TimerAlphabet \times \MonitorAlphabet \times \Qpos)^* \to \TimerAlphabet^*$ be the natural projection.
Define $\RegII\subseteq (\TimerAlphabet \times \MonitorAlphabet \times \Qpos)^*$ as the set of all finite timed words $w$ such that
$\projAA(w) = \pool^k \instruction_1 \instruction_2 \cdots \instruction_n$,
instruction $\instruction_1$ starts from the initial location, and each $\instruction_i$ is compatible with $\instruction_{i+1}$ in the sense defined earlier.

It remains to state the role of timestamps, on positions with a symbol in $\Instructions$,
in maintaining the valuations of $\M$.
Consider a finite timed word $w \in (\TimerAlphabet \times \MonitorAlphabet \times \Qpos)^*$.
We say that a fractional part $f \in \FracPart$ is \emph{active} for counter $c$ in $w$ if the last timed letter $(I, t)$ with $I \in\Instructions^c$ and $\fracpart t = f$ increments $c$ ($I \in \Instructions^c_\inc$) and appears after the last zero-test of $c$.
Conversely, $f$ is \emph{inactive} for $c$ in $w$ if since the last zero-test of $c$ there is no occurrence of $f$ with an instruction $\instruction \in \Instructions^c$ involving $c$ or if the last such occurrence is in $\Instructions^c_\dec$.
For any prefix $w$ of a run encoding, the corresponding valuation of counter $c$ is the number of distinct fractional parts that are active for $c$.
Let $\wordToVal(w)$ denote this valuation.

Recall the form of runs of $\M$:
$\run = (\state_0, \val_0) \xrightarrow{\instruction_1} \, (\state_1, \val_1) \xrightarrow{\instruction_2}
\cdots$, as an alternating sequence of configurations and instructions, with
$\val_0 = \zeroval$.
We inductively define $\run(w)$, a finite alternating sequence of configurations and instructions
associated with any $w \in \RegII$:
\begin{align*}
\run(w) = 
\begin{cases}
(\state_0, \val_0) &\text{ if } \projAA(w)\in \pool^\ast,\\
\run(w') \xrightarrow{\instruction}  (\state,\wordToVal(w)) &\text{ if } \projAA(w) = w' \cdot \instruction
\text{ and } \instruction = (\state',\op,\state).
\end{cases}
\end{align*}
Note that $\run(w)$ need not be a run of $\M$ for two reasons.
First, valuations need not be correlated with the instructions (e.g., a run may feature an incrementing instruction while the corresponding valuation remains unchanged or decreases).
Second, a decrementing instruction may occur even when the valuation of the corresponding counter is zero.
Conversely, if valuations are updated correctly and decrementing instructions occur only when the valuation is non-zero, then $\RegII$ ensures that $\run(w)$ is a valid run of $\M$.
To enforce validity, we introduce rules that must be satisfied by the timestamp $t$
of every instruction $\instruction$ occurring in $w$:

\begin{description}
\item[Rule 1:]
If $\instruction \in \Instructions^c_\inc$, then
the fractional part of $t$ must be inactive for $c$.
\item[Rule 2:]
If $\instruction \in \Instructions^c_\dec$, then
the fractional part of $t$ must be active for $c$.
\end{description}

It is easy to see that following both rules guarantees maintaining the correct counter valuations.
Moreover, it prevents one from adding a decrementing instruction when the counter valuation is zero, as there would be no active fractional part to pick.
Note that there are no constraints on zero-test instructions, their timestamps are irrelevant.

\begin{claim}
    \label{rem:correctness-of-active-based-encoding}
    If all instructions in $w \in \RegII$ satisfy Rules 1 and 2, then $\run(w)$ is a run of $\M$.
\end{claim}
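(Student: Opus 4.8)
The plan is to prove the claim by induction on the number $n$ of instructions appearing in $\projAA(w)$, establishing that $\run(w)$ is a valid run of $\M$ under the free-test semantics. Membership in $\RegII$ already guarantees that the control locations of $\run(w)$ form a compatible sequence starting at $\state_0$, so the only thing left to verify is that each consecutive pair of configurations respects the free-test update of the counter valuations. Concretely, writing $\run(w) = \run(w') \xrightarrow{\instruction} (\state, \wordToVal(w))$ whenever $\projAA(w) = \projAA(w')\cdot\instruction$, I must check that $\wordToVal(w)$ arises from $\wordToVal(w')$ by exactly the free-test effect of $\instruction$ on the affected counter, leaving all other counters untouched. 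For the base case, when $\projAA(w) \in \pool^\ast$ there are no instructions, hence no fractional part is active for any counter, so $\wordToVal(w) = \zeroval$ and $\run(w) = (\state_0, \zeroval)$ is precisely the initial configuration.

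For the inductive step, fix the last instruction $\instruction \in \Instructions^c$ with timestamp $t$, and set $f = \fracpart{t}$. The key observation, used in all three cases, is that appending a timed letter whose instruction lies in $\Instructions^c$ can change the activity status of fractional parts only for counter~$c$: it alters neither the last $\Instructions^{c'}$-occurrence nor the last zero-test of any counter $c' \neq c$, so $\wordToVal(w)(c') = \wordToVal(w')(c')$ for all such $c'$; and among fractional parts for $c$, only the status of $f$ itself can change. If $\instruction \in \Instructions^c_\inc$, then by Rule~1 $f$ is inactive for $c$ in $w'$, while after appending the increment the last $\Instructions^c$-letter with fractional part $f$ is this very increment, which lies after the last zero-test of $c$; thus $f$ becomes active and $\wordToVal(w)(c) = \wordToVal(w')(c) + 1$, matching the increment. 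If $\instruction \in \Instructions^c_\dec$, then by Rule~2 $f$ is active for $c$ in $w'$ (so in particular $\wordToVal(w')(c) \geq 1$), and appending the decrement makes the last such occurrence a decrement, turning $f$ inactive; hence $\wordToVal(w)(c) = \wordToVal(w')(c) - 1 \geq 0$, matching the decrement and staying in $\N$. If $\instruction \in \Instructions^c_\zt$, then this zero-test becomes the last zero-test of $c$, so no $\Instructions^c$-occurrence of any fractional part now lies after it; every fractional part is inactive and $\wordToVal(w)(c) = 0$, matching the free-test zero-test, consistently with the fact that no rule constrains zero-test timestamps.

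The main obstacle is the bookkeeping inside the inductive step: one must argue carefully, from the definitions of active/inactive and of $\wordToVal$, that a single appended instruction perturbs the count of active fractional parts by exactly the intended amount. The most delicate point is the zero-test, which resets the counter not by erasing individual increments but through the ``appears after the last zero-test'' clause in the definition of activity, so I will spell out explicitly that after appending the zero-test every earlier $\Instructions^c$-occurrence is disqualified. The remaining subtlety is confirming the independence of distinct counters, which follows because the activity of counter $c$ is defined solely through occurrences in $\Instructions^c$. Once these three cases are settled, the induction closes and $\run(w)$ is a valid run of $\M$.
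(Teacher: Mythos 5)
Your induction is correct and is essentially a careful formalization of the argument the paper itself leaves implicit (the paper only remarks that "it is easy to see" that Rules 1 and 2 maintain correct valuations and block decrements at zero): the base case, the locality of activity to the affected counter, the $\pm 1$ effect on the count of active fractional parts via Rules 1 and 2, and the zero-test resetting activity through the "after the last zero-test" clause are exactly the right points. One small wording issue: your blanket statement that "only the status of $f$ itself can change" is false for zero-tests (which deactivate \emph{all} fractional parts of $c$), but you correctly treat that case separately, so the proof stands.
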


\para{The timed game}

For each of the two rules, we now define a language of finite timed words
that requires that the rule is satisfied by the last letter, and another one that requires that
the rule is broken.
Let $\ErrLangOne$ be the language of finite timed words where the last letter is an increment instruction
that breaks Rule 1.
Moreover, let
$\NotErrLangOne$ be  the language of words where the last letter satisfies Rule 1
and
\emph{the last fractional part is in the pool}.
Note that any timed word ending in either a decrementing or zero-test instruction is in $\NotErrLangOne$, and is not in $\ErrLangOne$.
$\ErrLangOne$ is recognised by the 1-resetting reachability $\NTA_1$ given in Figure~\ref{figure:NTA Err1}, while
$\NotErrLangOne$
is recognised by the 1-resetting reachability $\NTA_1$ given in Figure~\ref{figure:NTA NotErr1}.

\begin{figure}
\begin{center}
    \begin{tikzpicture}[
    semithick,
    every state/.style={minimum size=12pt,inner sep=0},
    every node/.append style={font=\small},
    yscale=1,
    xscale=0.96,
    initial text={},
    every initial by arrow/.append style={inner sep=0, outer sep=0}
]
    \node[state,initial] (2) at (4.5,0) {};
    \node[state] (3) at (9,0) {};
    \node[state,accepting] (4) at (13.65,0) {};

    \path[->] (2) edge [loop above] node [left] {$\TimerAlphabet$} (2);
    \path[->] (2) edge node [above] {$\Instructions^c_\inc, \set{x}$} (3);
    \path[->] (3) edge [loop above] node [above] {$
    \begin{aligned}
    \TimerAlphabet \setminus \Instructions^c&\\[-2pt]
    \Instructions^c_\inc \cup \Instructions^c_\dec&, x>0
    \end{aligned}$} (3);

    \path[->] (3) edge node [above] {$\Instructions^c_\inc, x=0$} (4);
\end{tikzpicture}
\end{center}

\caption{An $\resNTA_1$ for $\ErrLangOne$ comprises four copies of the above automaton, one for each $c \in \Counter$.}
\label{figure:NTA Err1}
\end{figure}

The languages $\ErrLangOne$ and $\NotErrLangOne$ are readily seen to be disjoint.
Note however that the languages 
are not complements.
Indeed, a word that satisfies Rule 1 but whose last fractional part does not appear in the pool at the beginning belongs to neither language.

The definitions and automata for $\ErrLangTwo$ and $\NotErrLangTwo$ mirror those of $\NotErrLangOne$ and $\ErrLangOne$ respectively, with $\ErrLangTwo$ being the language of words breaking Rule 2 where 
\emph{the last fractional part is in the pool}, and $\NotErrLangTwo$ being the language of words where the last letter satisfies Rule 2.
Again, a word not ending in a decrement is automatically in $\NotErrLangTwo$ and not in $\ErrLangTwo$.

Let $\ErrLang = \RegII \cap \bigl(\ErrLangOne \cup \ErrLangTwo\bigr)$ and $\NotErrLang = \RegII \cap \NotErrLangOne \cap \NotErrLangTwo$.
We define \Timer's winning condition as the following language of infinite timed words:
\[
\Win = \Reach\left\{w \mid (\projB(w) \in \ok^\ast \err \land w \in \NotErrLang) \lor
(\projB(w) \in \ok^\ast \ok \land w \in \ErrLang)\right\}.
\]
%
Intuitively, \Timer wins if at any point \Monitor makes a mistake in its claim by playing $\err$ when the sequence given by \Timer is actually correct or playing $\ok$ when there is an error.
However the play continues after this point, \Timer will be winning.
On the other hand, \Monitor wins if either it plays $\err$ at a point in which the play is not in $\NotErrLang$, and then any continuation will be outside of $\Win$, or if it plays $\ok$ forever while the play is never in $\ErrLang$.
Note that the case where \Timer plays outside of $\RegII$ is covered both by not being in $\NotErrLang$,
so \Monitor can play $\err$ if this happens and win, or by
not being in $\ErrLang$,
so \Monitor can play $\ok$ if this happens and win as well.

\begin{lemma}
$\Win$ is recognised by a $\rresNTA_1$.
\end{lemma}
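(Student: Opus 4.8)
The plan is to show that the finite-word language
$S = \bigl(\projB^{-1}(\ok^\ast\err) \cap \NotErrLang\bigr) \cup \bigl(\projB^{-1}(\ok^\ast\ok) \cap \ErrLang\bigr)$
appearing under $\Reach$ in the definition of $\Win$ is recognised by a \emph{reachability} $\resNTA_1$ over finite timed words. Since a $\rresNTA_1$ language of infinite words is completely determined by the finite-word language of its accepting prefixes (the relation $L = \Reach(L')$ noted in \cref{sec:prelim}), this immediately yields a $\rresNTA_1$ for $\Win = \Reach(S)$ by making accepting locations absorbing. Throughout I would rely on three elementary facts about finite-word $\resNTA_1$: (i) closure under union, via a nondeterministic initial choice between the component automata; (ii) closure under intersection with any \emph{clock-free} (untimed) regular language, via the synchronous product with a \DFA, which adds no clock and preserves both the $1$-resetting form and reachability acceptance; and (iii) the observation that the conditions on $\projB(w)$ (namely $\ok^\ast\err$ and $\ok^\ast\ok$) and on $\projAA(w)$ (namely $\RegII$) are precisely such clock-free regular conditions, as they constrain only the untimed projections of $w$ onto $\MonitorAlphabet$ and $\TimerAlphabet$.

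Using union closure I would split $S$ along its two outer disjuncts, and then further along the inner structure $\NotErrLang = \RegII \cap \NotErrLangOne \cap \NotErrLangTwo$ and $\ErrLang = \RegII \cap (\ErrLangOne \cup \ErrLangTwo)$. The $\ErrLang$ side is unproblematic: distributing the union gives two pieces, each intersecting a single timed language ($\ErrLangOne$ or $\ErrLangTwo$, already known to be a $\resNTA_1$ by Figure~\ref{figure:NTA Err1} and its mirror) with the clock-free conditions $\RegII$ and $\projB^{-1}(\ok^\ast\ok)$, so by (ii) each piece stays within one clock and their union is again a $\resNTA_1$.

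The only genuine difficulty is the $\NotErrLang$ side, where $\NotErrLangOne$ and $\NotErrLangTwo$ are intersected: a naive product of the two one-clock automata recognising them would use two clocks and leave $\resNTA_1$, and this single-clock budget is the main obstacle. The key observation that resolves it is that the two rules are nontrivial on disjoint inputs: a word not ending in an increment instruction is automatically in $\NotErrLangOne$, and a word not ending in a decrement instruction is automatically in $\NotErrLangTwo$. Since the last letter of $w$ is a single symbol, it is an increment, a decrement, a zero-test, or a pool symbol, and in each case at most one of $\NotErrLangOne, \NotErrLangTwo$ imposes a timed constraint. I would therefore partition $\projB^{-1}(\ok^\ast\err) \cap \NotErrLang$ according to the type of the last letter — observable and encodable in the location — and drop the vacuous condition in each part: on words ending in an increment one checks only $\NotErrLangOne$ (one clock), on words ending in a decrement only $\NotErrLangTwo$ (one clock), and on words ending in a zero-test or pool symbol neither timed condition is needed (zero clocks), always in conjunction with the clock-free $\RegII$ and $\projB^{-1}(\ok^\ast\err)$ handled by product. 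Taking the union of these one-clock pieces yields a $\resNTA_1$ for the $\NotErrLang$ side; combining it with the $\ErrLang$ side gives a reachability $\resNTA_1$ for $S$, and hence a $\rresNTA_1$ for $\Win$.
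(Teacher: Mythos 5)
Your proposal is correct and follows essentially the same route as the paper's proof: union closure plus clock-free intersection with $\RegII$ (and the $\projB$ conditions) for the $\ErrLang$ part, and the key observation that one can branch on the type of the last letter so that only one of $\NotErrLangOne$, $\NotErrLangTwo$ ever needs to be checked, keeping the construction within one clock. Your write-up merely spells out details the paper leaves implicit (the $\Reach$ lifting and the regularity of the $\projB^{-1}$ conditions).
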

\begin{proof}
Both $\ErrLang$ and $\NotErrLang$ are recognisable by 1-resetting reachability $\NTA_{1}$s.
For $\ErrLang$ this follows from the closure of $\rresNTA_1$ under union of languages, and
intersection with $\RegII$ does not add clocks.
As for $\NotErrLang$, although intersection typically adds up the numbers of clocks,
we can branch based on the final letter to simulate either $\NotErrLangOne$ or $\NotErrLangTwo$, requiring only one clock.
\end{proof}
\begin{figure}
    \begin{center}\begin{tikzpicture}[
    semithick,
    every state/.style={minimum size=12pt,inner sep=0},
    every node/.append style={font=\small},
    yscale=1,
    xscale=0.96,
    initial text={},
    every initial by arrow/.append style={inner sep=0, outer sep=0}
]
    \node[state,initial] (2) at (-6,-2)  {};
    \node[state] (3) at (0,-3.2) {};
    \node[state] (4) at (-3,-3.2) {};
    \node[state,accepting] (5) at (3.15,-2) {};

    \path[->] (2) edge [loop above] node [left] {$\TimerAlphabet$} (2);

    \coordinate (corner1) at (-0.1,-2);
    \draw[->,rounded corners=2mm] (2) -- node[above] {$\pool, \set{x}$} (corner1) -- ($(3.north)+(-0.1,0)$);

    \path[->] (3) edge [loop right] node [right] {$
    \begin{aligned}
        &\ \\[-2.2pt]
        &\TimerAlphabet \setminus \Instructions^c\\[-2pt]
        &\Instructions^c_\inc \cup \Instructions^c_\dec, x>0\\[-2pt]
        &\Instructions^c_\dec, x=0\\[-2pt]
        &\Instructions^c_{\zt}
    \end{aligned}$} (3);

    \path[->,transform canvas={yshift=1mm}] (3) edge node[above] {$\Instructions^c_\inc,x=0$} (4);

    \path[->] (4) edge [loop left] node [left] {$
    \begin{aligned}
        \TimerAlphabet \setminus \Instructions^c\\[-2pt]
        \Instructions^c_\inc \cup \Instructions^c_\dec, x>0\\[-2pt]
        \Instructions^c_\inc, x=0\\[0.2pt]
    \end{aligned}$} (4);

    \path[->,transform canvas={yshift=-1mm}] (4) edge node [below] {$\begin{aligned}
        &\Instructions^c_\dec, x=0\\[-2pt]
        &\Instructions^c_{\zt}
    \end{aligned}$} (3);

    \coordinate (corner2) at (0.1,-2);
    \draw[->,rounded corners=2mm] ($(3.north) + (0.1,0)$) -- (corner2) -- node [above] {$\Instructions^c_\inc,x=0$} (5);
\end{tikzpicture}\end{center}

\caption{An $\resNTA_1$ recognising $\NotErrLangOne$ consists of four copies of the automaton above, one for each $c \in \Counter$, and an additional branch that checks that the last transition is in $\Instructions \setminus \Instructions_\inc$ (omitted).}
\label{figure:NTA NotErr1}
\end{figure}

\para{Correctness}

Before stating correctness of the reduction,
let us give some intuition about the pool at the beginning of the play.
It is in \Timer's best interest to accurately list as many fractional parts that will be used later, but only finitely many.
The finite part is obvious: if \Timer only plays $\pool$ forever then \Monitor just plays $\ok$ forever and wins, so \Timer must start the real run encoding at some point.
The incentive for \Timer to play many fractional parts comes from its winning condition: it wins if \Monitor makes a mistake.
For example, if \Timer plays an increment transition with some inactive fractional part $f$ and \Monitor answers with $\err$, \Timer wins iff the play is in $\NotErrLangOne$.
But $\NotErrLangOne$ accepts only if the last fractional part occurs in the pool.
So if $f$ is not in the pool, the play is not in $\NotErrLang$ (despite being a correct encoding of a run), and \Monitor wins by playing $\err$.
On the other hand, there is no possible disadvantage to adding more fractional parts to the pool, even some that will never be used later, and so this is \Timer's incentive for filling the pool as much as possible.


\begin{lemma}
    \label{lemma:correct encoding game}
Let $w\in\RegII$. We have the following implications:
\begin{enumerate}
\item $\run(w) \in \MRuns_\M \Rightarrow$ all prefixes $w'$ of $w$ satisfy $w' \notin \ErrLang$.
\item $\run(w) \notin \MRuns_\M \Rightarrow$ some prefix $w' = w'' \cdot  (\instruction,m,t)$ of $w$ satisfies $\run(w'') \in \MRuns_\M$, $\run(w') \notin \MRuns_\M$, and $w' \notin \NotErrLang$.
\end{enumerate}
\end{lemma}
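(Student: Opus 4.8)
The plan is to upgrade Claim~\ref{rem:correctness-of-active-based-encoding} to an exact, step-by-step equivalence between validity of $\run(w)$ and satisfaction of the two rules, and then read off both implications. First I would prove the following local fact, for every $w \in \RegII$ whose $\TimerAlphabet$-projection ends in an instruction $\instruction \in \Instructions^c$ at timestamp $t$ with $f = \fracpart{t}$, relating the valuation $\val_{i-1}$ before $\instruction$ (the value of $\wordToVal$ on the prefix preceding $\instruction$) to $\val_i = \wordToVal(w)$. Only occurrences of $\Instructions^{c'}$-letters affect the active fractional parts of a counter $c'$, so for $c' \neq c$ the active set is untouched and $\val_i(c') = \val_{i-1}(c')$. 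On counter $c$: (i) if $\instruction$ is a zero-test, every fractional part becomes inactive for $c$ (none can lie after the just-played zero-test), so $\val_i(c) = 0$, matching the free-test zero-test; (ii) if $\instruction$ is an increment, the active set of $c$ gains the single element $f$ exactly when $f$ was inactive, that is exactly when Rule~1 holds, and is unchanged otherwise; (iii) if $\instruction$ is a decrement, the active set of $c$ loses $f$ (and $\val_{i-1}(c) \geq 1$) exactly when $f$ was active, that is exactly when Rule~2 holds, and is unchanged otherwise. Consequently the step induced by $\instruction$ is consistent with the free-test semantics iff $\instruction$ satisfies its rule, and therefore $\run(w) \in \MRuns_\M$ iff every increment in $w$ satisfies Rule~1 and every decrement satisfies Rule~2.

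For the first implication I would argue by contraposition. If some prefix $w'$ of $w$ lies in $\ErrLang = \RegII \cap (\ErrLangOne \cup \ErrLangTwo)$, then the last letter of $w'$ is an increment breaking Rule~1 (if $w' \in \ErrLangOne$) or a decrement breaking Rule~2 (if $w' \in \ErrLangTwo$). Since the rule status of an instruction depends only on the prefix up to and including it, this same instruction violates its rule inside $w$; by the equivalence above the corresponding step is inconsistent, so $\run(w) \notin \MRuns_\M$, which is the contrapositive of the claim.

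For the second implication, assume $\run(w) \notin \MRuns_\M$. By the equivalence, $w$ contains a rule-violating instruction, necessarily an increment or a decrement since zero-tests are always consistent. Let $\instruction$ be the \emph{first} such violation and write $w' = w'' \cdot (\instruction, m, t)$ for the prefix of $w$ ending with it. Every instruction strictly preceding $\instruction$ satisfies its rule, so Claim~\ref{rem:correctness-of-active-based-encoding} yields $\run(w'') \in \MRuns_\M$, while the step induced by $\instruction$ is inconsistent, giving $\run(w') \notin \MRuns_\M$. Finally $w' \in \RegII$ (a prefix of a $\RegII$-word ending in an instruction is again in $\RegII$), so it remains to see that $w' \notin \NotErrLang = \RegII \cap \NotErrLangOne \cap \NotErrLangTwo$: if $\instruction$ breaks Rule~1 then its last letter fails the Rule~1 requirement of $\NotErrLangOne$, whence $w' \notin \NotErrLangOne$; if $\instruction$ breaks Rule~2 then its last letter fails the Rule~2 requirement of $\NotErrLangTwo$, whence $w' \notin \NotErrLangTwo$. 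Either way $w' \notin \NotErrLang$, as required.

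I expect the main obstacle to be the bookkeeping behind the local fact (i)--(iii): one must track precisely how the set of active fractional parts for the affected counter changes when a single timed letter at fractional part $f$ is appended, verifying that a zero-test wipes this set (so the free-test semantics is faithfully simulated) while an increment or decrement alters it by at most the one element $f$. The delicate boundary point is part (iii): checking that a decrement at an active $f$ forces $\val_{i-1}(c) \geq 1$, so that Rule~2 captures genuine validity (including the implicit non-zero guard of a free-test decrement) rather than a mere change in the counting.
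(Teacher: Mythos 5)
Your proposal is correct and follows essentially the same route as the paper's proof: the identical case analysis on increments, decrements and zero-tests, tracking how the set of active fractional parts evolves and noting that an active $f$ forces a positive counter value, underlies both arguments. You merely repackage this as a single exact step-equivalence (strengthening Claim~\ref{rem:correctness-of-active-based-encoding} to an iff) and then obtain part~1 by contraposition instead of the paper's induction on prefixes; the substance, including the correct handling of the pool caveat (which is why part~2 only yields $w'\notin\NotErrLang$ rather than $w'\in\ErrLang$), is the same.
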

\begin{proof}
\fbox{\textbf{1.}}
By induction on prefixes of $w$. The base case, for prefixes where \Timer has only played $\pool$ letters so far, is trivial.
Let $w'$ be a prefix of $w$.
Since $\MRuns_\M$ is prefix-closed and $\run(w')$ is a prefix of $\run(w)$, $\run(w') \in \MRuns_\M$.
Assume by induction hypothesis that all strict prefixes of $w'$ are not in $\ErrLang$.
Let $w' = w'' \cdot (\instruction,m,t)$ with $\instruction \in \Instructions$, $m \in \MonitorAlphabet$, and let $f = \fracpart{t} \in \FracPart$.
We look at the different cases for $\instruction$ to show that $w' \notin \ErrLang$.
\begin{itemize}
\item Suppose $\instruction \in \Instructions_\inc^c$.
We have $\run(w') \in \MRuns_\M$, $\run(w'') \in \MRuns_\M$, and $\run(w') = \run(w'') \xrightarrow{\instruction} (\state,\wordToVal(w'))$.
Thus $\wordToVal(w') = \wordToVal(w'')[c\increment]$, and from this we have that $f$ is inactive for $c$ in $w''$.
This does not mean that $w' \in \NotErrLangOne$ as $f$ could be missing from the pool, but it does mean that $w' \notin \ErrLangOne$ as $\ErrLangOne$ can only accept if the last fractional part is already active.
Thus $w' \notin \ErrLang$.

\item Suppose $\instruction \in \Instructions_\dec^c$.
Since the run is correct, we have that $\wordToVal(w'')(c) > 0$ and $\wordToVal(w') = \wordToVal(w'')[c\decrement]$, so $f$ must be active for $c$ in $w''$.
Then $w' \in \NotErrLangTwo$ which implies $w' \notin \ErrLang$.

\item If $\instruction \in \Instructions_\zt^c$ then $w'$ is trivially accepted by $\NotErrLangOne$ and $\NotErrLangTwo$ thus $w' \notin \ErrLang$.
\end{itemize}

\noindent\fbox{\textbf{2.}}
If $\run(w) \notin \MRuns_\M$, there is some prefix $w' = w'' \cdot (\instruction,m,t)$ of $w$ such that $\run(w') \notin \MRuns_\M$ and $\run(w'') \in \MRuns_\M$.
This is because at least all prefixes of the form $(\pool,m,t)^\ast$ are encoding the empty run, which belongs to $\MRuns_\M$.

We look at the different cases for $\instruction$ to show that $w' \notin \NotErrLang$.
Again, we let $f = \fracpart{t} \in \FracPart$.
\begin{itemize}
\item Suppose $\instruction \in \Instructions_\inc^c$.
We have $\run(w') \notin \MRuns_\M$, $\run(w'') \in \MRuns_\M$, and $\run(w') = \run(w'') \xrightarrow{\instruction} (\state,\wordToVal(w'))$.
This means that $\wordToVal(w')$ is not the correct valuation, that is $\wordToVal(w') \neq \wordToVal(w'')[c\increment]$.
From this, we deduce that $f$ is active for $c$ in $w''$, otherwise $\wordToVal(w')$ would be correct.
Therefore $w' \in \ErrLangOne$, which implies $w' \notin \NotErrLang$.

\item Suppose $\instruction \in \Instructions_\dec^c$.
There are two possible reasons for $\run(w')$ not to be in $\MRuns_\M$: either $\wordToVal(w')$ is wrong, or $\wordToVal(w'')(c) = 0$.
In the first case, we deduce similarly to the increment case that $f$ is inactive for $c$ in $w''$.
In the second case, it means that there are no active fractional part for $c$ in $w''$, therefore $f$ can only be inactive also.
In both cases, $f$ being inactive does not mean that $w' \in Err_2$, as $f$ might be missing from the pool.
However, this still means that $w' \notin \NotErrLangTwo$ because $\NotErrLangTwo$ only accepts if the last fractional part is active.
Then we have that $w' \notin \NotErrLang$.

\item If $\instruction \in \Instructions_\zt^c$ then we immediately get a contradiction because by definition $\wordToVal(w')$ will be correct so $\run(w')$ must be in $\MRuns_\M$.\qedhere
\end{itemize}
\label{apxproof:correct encoding game}
\end{proof}


Note that we cannot state $w' \in \NotErrLang$ for the first part and $w' \in \ErrLang$ for the second part, again because the pool may be missing the crucial fractional part needed for this.
However, this lemma is enough to prove correctness of the reduction:

\begin{theorem}
\label{thm:bounded}
$\M$ is bounded if and only if \Monitor has a winning controller.  
\end{theorem}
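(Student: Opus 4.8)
The plan is to prove the two directions of the equivalence separately, in both cases reducing the game‑theoretic statement to the combinatorics of runs of $\M$ via \cref{lemma:correct encoding game}.

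First I would treat the direction ``$\M$ bounded $\Rightarrow$ \Monitor has a winning controller''. Since $\M$ is deterministic and we work under the free‑test semantics, $\M$ has a single run, and boundedness means that this run visits only configurations whose valuations are bounded by some fixed $k \in \N$. I would then build a $1$-resetting \DTA controller for \Monitor with $4k$ clocks, namely $k$ per counter, each clock used to ``store'' one currently active fractional part by being reset when the corresponding incrementing instruction is played; the $1$-resetting discipline makes such a clock read $0$ exactly at later timestamps sharing that fractional part, so matches are detected by testing $x=0$. The controller additionally tracks the control location of $\M$ and checks the regular constraints of $\RegII$ in its finite memory. It outputs $\ok$ as long as each move of \Timer keeps $\run(w)$ a valid run of $\M$ (Rules~1 and~2 hold and $\RegII$ is respected), and outputs $\err$ at the first move that breaks this. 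While \Timer stays on the valid run, at most $k$ fractional parts are active per counter, so $4k$ clocks suffice both to confirm validity and to detect the first violation. Correctness is then immediate from \cref{lemma:correct encoding game}: if \Timer produces a valid run forever, its first item guarantees that no prefix lies in $\ErrLang$, so answering $\ok$ forever wins; and if \Timer ever deviates, its second item guarantees that at the first deviating prefix $w'$ we have $w' \notin \NotErrLang$, so \Monitor wins by playing $\err$ there.

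For the converse I would argue the contrapositive: if $\M$ is unbounded then no controller is winning for \Monitor. Fix an arbitrary \Monitor‑controller, a $1$-resetting \DTA with $m$ clocks. The crucial observation is that, since clock values always lie in $[0,1)$ and guards compare clocks only to integers, the only feature of a clock such a controller can read is whether it equals $0$ or is strictly positive; consequently its $m$ clocks can ``remember'' at most $m$ fractional parts, and it can only test whether the current fractional part coincides with one of those stored values. \Timer's strategy is to first fill the pool with a large finite set of fractional parts and then play the correct encoding of $\M$'s run. By \cref{lemma:correct encoding game}, as long as this encoding is valid and uses only pooled fractional parts, any $\err$ answer of \Monitor would make the current prefix lie in $\NotErrLang$ and lose, so \Monitor is forced to answer $\ok$. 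By unboundedness, \Timer can drive some counter $c$ to a valuation exceeding $m$, reaching a position with more than $m$ active fractional parts for $c$.

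At that position I would set the trap. By pigeonhole there is an active fractional part $f^*$ for $c$ matching none of the controller's $m$ clocks and, by having seeded enough never‑used parts into the pool, an inactive pooled fractional part $f'$ that also matches no clock. A decrement of $c$ at a timestamp with fractional part $f^*$ and one at fractional part $f'$ share the same source location and, at the instant they are read, leave every clock strictly positive; since guards distinguish only ``$=0$'' from ``$>0$'', the deterministic controller must emit the same answer to both. But the $f^*$‑decrement is valid (so the prefix lies in $\NotErrLang$) whereas the $f'$‑decrement breaks Rule~2 on a pooled part (so the prefix lies in $\ErrLangTwo \subseteq \ErrLang$). Hence, whatever the controller would answer, \Timer plays the decrement on which that answer is wrong: against $\ok$ it uses $f'$ and wins via $\ErrLang$, and against $\err$ it uses $f^*$ and wins via $\NotErrLang$. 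As the controller was arbitrary, \Monitor has no winning controller, which completes the contrapositive.

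I expect the main obstacle to be making this indistinguishability argument fully rigorous: one must verify that \Timer's pool really furnishes both an active and an inactive pooled fractional part lying outside the controller's finite ``memory'', and that no feature of the $1$-resetting \DTA beyond the ``zero versus positive'' test can separate the two candidate decrements (so that determinism forces a single, exploitable answer). The bounded direction is comparatively routine once the uniform bound $k$ has been extracted from determinism together with the free‑test semantics.
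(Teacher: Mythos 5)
Your forward direction is essentially the paper's construction ($k$ one-resetting clocks per counter storing active fractional parts, $\err$ at the first violation, correctness via \cref{lemma:correct encoding game} plus a routine accuracy induction), so that half is fine. The backward direction, however, has a genuine gap in how you set the trap. You propose to reach a position where counter $c$ has more than $m$ active fractional parts and then offer the controller two \emph{decrements} of $c$, one at an active invisible part $f^*$ and one at an inactive pooled invisible part $f'$. But \Timer does not get to choose which instruction to play: $\M$ is deterministic, the next instruction is forced by the current control location, and any word whose instruction sequence deviates from a run of $\M$ falls outside $\RegII$, hence outside both $\ErrLang$ and $\NotErrLang$, and \Monitor wins regardless of its answer. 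Worse, an unbounded run need not ever decrement $c$ at a high valuation --- to be unbounded it must keep incrementing $c$ --- so the decrement you rely on may simply never be available. The paper avoids this by setting the trap on an \emph{increment}: unboundedness guarantees an increment of $c$ at a point where $c$ already exceeds the controller's capacity, and the same dichotomy works there ($\ErrLangOne$ does not require the fractional part to be pooled, while $\NotErrLangOne$ does). Your pigeonhole-on-clocks indistinguishability argument goes through verbatim once the trap instruction is an increment, so the fix is local but necessary.

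A secondary remark: your indistinguishability step (``the controller can only test $x=0$ versus $x>0$'') is sound only because controllers in this section are $1$-resetting, so all clock values live in $[0,1)$. The paper's own counting argument is over \emph{guards} rather than clocks and uses convexity of the set of delays satisfying a conjunction of constraints; that version does not depend on the $1$-resetting discipline. Since the section fixes $1$-resetting controllers, your variant is acceptable, but it is worth being explicit that this is where the restriction is used.
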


\begin{proof}
\fbox{$\Rightarrow$}
Assume $\M$ is bounded by $k$.
We build a controller $\Controller$ for \Monitor that will correctly detect the first time a mistake happens in the encoding of the run.
This assumes that before that point the number of active fractional parts for a given counter never goes above $k$.
This controller is a 1-resetting \DTA using $4k$ clocks over input timed words with alphabet $\TimerAlphabet$ and outputs in $\MonitorAlphabet$.

Let $\Cl = \{x_i^c \mid 1 \leq i \leq k, c \in \Counter\}$ be the set of clocks of $\Controller$.
Intuitively, clocks $x_1^c, \ldots, x_k^c$ are assigned to counter $c$ and will ``store'' fractional parts that are active for $c$.
By ``storing'' a fractional part $f$, we mean that this clock has value 0 exactly at timestamps whose fractional part is $f$.
The state space of $\Controller$ keeps track of which clocks are considered active for their respective counters.
The initial state is the empty set.
$\Controller$ also ignores the initial pool of data and always outputs $\ok$ on those.

When reading an increment for $c \in \Counter$ with time $t$, $\Controller$ outputs $\err$ if any of the clocks indicated as active for $c$ by the state is at value exactly 0, and $\ok$ otherwise.
If it has output $\err$, it goes to a sink state that always output $\ok$.
Otherwise, it resets the first clock $x_i^c$ marked as inactive, and then it marks this clock as active.
If there is no such free clock, it goes to an error state that outputs whatever.

Similarly, when reading a decrement for $c$ with time $t$, $\Controller$ outputs $\err$ if no active clock has value 0, $\ok$ otherwise.
By construction, no two clocks can have value 0 at the same time.
If the output was $\err$, it goes to the same sink state as before.
Otherwise, it removes that clock from the set of active clocks in the state.

Reading a zero-test transition for counter $c$ makes $\Controller$ output $\ok$ and reset the set of active clocks for $c$ to $\emptyset$.


First we show that $\Controller$ is \emph{accurate} for runs bounded by $k$: 
\begin{lemma}\label{lemma:controller accurate}
For any play $w$ that conforms to $\Controller$ that never has more than $k$ active fractional parts for any counter, $\run(w) \in \MRuns_\M$ if and only if $\projB(w) \in \ok^\ast$.
Moreover, if $\projB(w) \in \ok^\ast$, the valuation of a counter is exactly the number of clocks marked as active by the state of $\Controller$, each of those clocks stores one of the active fractional parts, and those clocks are pairwise different.
\end{lemma}

\begin{nestedproof}
We show this by induction on $w$.
Again, this is trivial for any play of the form $(\pool, \ok, t)^\ast$.
Assuming the above holds for $w$ and that $\Controller$ has always output $\ok$ so far, let us consider some continuation $w' = w \cdot (\instruction,t)$ with $f = \fracpart{t}$.

Case 1a: If $\instruction$ is a $c$ incrementing instruction and $f$ is inactive for $c$ in $w$, then $\wordToVal(w') = \wordToVal(w)[c\increment]$ and therefore $\run(w') \in \MRuns_\M$.
If $\wordToVal(w')(c) > k$ then the proof is finished and $\Controller$'s behaviour after this point does not matter.
Otherwise, $\wordToVal(w)(c) < k$ and by induction hypothesis there are exactly $\wordToVal(w)(c)$ distinct fractional parts active for $c$ in $w$ and in exactly as many clocks of $\Controller$ assigned to $c$.
Moreover, $f$ is in none of them.
In that case, $\Controller$ outputs $\ok$, and stores $f$ in one free clock assigned to $c$, which we know there is at least one.
Then all properties are satisfied by $w'$.

Case 1b: If $\instruction$ is a $c$ incrementing instruction and $f$ is active for $c$ in $w$, then $\run(w') \notin \MRuns_\M$.
Again by induction hypothesis $f$ must be in one of the clocks of $\Controller$ assigned to $c$ and marked as active.
Therefore, $\Controller$ outputs $\err$ and we are done.

Cases 2a and 2b for a decrementing instruction with an active or inactive fractional part respectively are very similar.

Case 3 for a zero-test instruction is also easy: as zero-test instructions are always available under free-test semantics, $w' \in \MRuns_\M$ is immediate, $\Controller$ always outputs $\ok$, and all clocks are marked as inactive which coincide with the valuation being set to 0.
\end{nestedproof}

Returning to the proof of Theorem \ref{thm:bounded}, 
we now show that $\Controller$ is a winning controller for \Monitor.
Let $w$ be a play.

If $\run(w) \in \MRuns_\M$, then we know that there is at most $k$ active fractional parts for any counter in all prefixes of $w$ by boundedness assumption, and therefore that $\projB(w) \in \ok^\ast$ by Lemma~\ref{lemma:controller accurate}.
Moreover, by Lemma~\ref{lemma:correct encoding game}, all prefixes of $w$ are not in $\ErrLang$.
Clearly, $w$ is not in $\Win$.

If $\run(w) \notin \MRuns_\M$, then by Lemma~\ref{lemma:correct encoding game} there exists some prefix $w' = w'' \cdot (\instruction,m,t)$ such that $\run(w'') \in \MRuns_\M$, $\run(w') \notin \MRuns_\M$, and $w' \notin \NotErrLang$.
Therefore, by Lemma~\ref{lemma:controller accurate}, $\projB(w'') \in \ok^\ast$ and $\Controller$ outputs $m = \err$ on $w'$.
$w''$ is not in $\Win$ for the same reason as before, and since $w' \notin \NotErrLang$, $w' \notin \Win$ either.
Then any continuation, including $w$, has $\projB$ of the form $\ok^\ast \cdot \err \cdot \ok^+$ and therefore is not in $\Win$.

\medskip

\noindent\fbox{$\Leftarrow$}
Let $\run$ be an unbounded run of $\M$.
Assume towards a contradiction that there exists some controller $\Controller$ that is winning for \Monitor.
Let $G \subset \Gu(\Cl)$ be the finite set of guards appearing in $\Controller$, and $k = |G|$.
Given some guard $g \in G$ and a clock valuation $\val$, let $T(\val,g) = \{t \in \Rpos \mid \val+t \models g\}$.
As $g$ is a conjunction of constraints, $T(\val,g)$ is either empty, a singleton, or an interval.
Thus, if there are two distinct $t,t' \in T(\val,g)$ then any $t < t'' < t'$ is also in $T(\val,g)$.

At some point, $\run$ goes from $k+1$ to $k+2$ for the valuation of some counter $c$ using some incrementing instruction $\instruction$.
Let $w$ be a timed word that correctly encodes $\run$ until just before $\instruction$, with fractional parts $f_1,\dots,f_{k+1}$ active at the end of $w$.
For simplicity, we take $f_1 < \dots < f_{k+1}$, and assume that $w$ has those fractional parts in the pool.
Moreover, we also put in the pool an inactive fractional part between every $f_i$ and $f_{i+1}$.

Let us now consider what $\Controller$ does on this word.
Necessarily $\Controller$ must have output only $\ok$ so far, otherwise \Timer would win.
On a new action $(\instruction,t)$, it must necessarily output $\err$ if $\fracpart{t} \in \{f_1,\dots,f_{k+1}\}$ as those are active fractional parts, and therefore should not be used for a new increment.
But on any other timestamp, if the corresponding fractional part is in the pool, it must output $\ok$ to not lose immediately.
From $\Controller$'s current state, there are less than $k$ transitions outputting $\err$.
So at least one of them can be fired with at least two distinct active factional parts.
But as we have seen earlier, this means any timestamp between those two can also fire this transition.
We have at least one inactive fractional part in the pool between the two active ones, and there is a timestamp with this fractional part that can fire the transition, outputting $\err$.
Thus, we get a contradiction.
\label{apxproof:bounded}
\end{proof}

Note that the existence of a controller and a strategy are not equivalent here: 
\Monitor always has a winning strategy, but only has a winning controller if $\M$ is bounded.


\section{Conclusion}\label{sec:conc}

The main contribution of this paper is to solve the problem left open 
in~\cite{Clemente0P-20,Piorkowski-PhD_22} by proving that all the variants of timed synthesis problems are undecidable
for winning sets specified already by $\NTA_1$.
The only exception is \cref{thm:Radek} in
\cref{sec:boundedness} where we need to extend
(reachability) $\NTA_1$ by a very limited form of $\varepsilon$-transitions.
While we believe that these $\varepsilon$-transitions may be eliminated, we consider the
current result as a satisfactory solution of the open problem:
all our undecidability results translate from timed setting to data setting
where winning sets are specified using nondeterministic one-register automata ($\NRA_1$),
and neither $\varepsilon$-transitions nor guessing are needed there.

One of the motivations for studying timed/data synthesis in~\cite{Clemente0P-20,Piorkowski-PhD_22}
was a potential application to solving the \emph{deterministic separability} question:
given two nondeterministic \NTA (resp.~\NRA) over finite timed (resp.~data)
words, with disjoint languages $L_1, L_2$,
is there a \DTA (\DRA) whose language \emph{separates} $L_1$ from $L_2$,
namely includes one of them and is disjoint from the other.
Decidability of resource-bounded timed/data synthesis is used 
in~\cite{Clemente0P-20,Piorkowski-PhD_22} to obtain decidability of resource-bounded
deterministic separability, where one a priori bounds the number of clocks/registers
in a separating automaton.
Decidability status of the unrestricted deterministic separability still remains open.
Also a related problem of deterministic membership, where given 
a nondeterministic \NTA, one asks if its language is accepted by some 
deterministic \DTA, is undecidable even for $\NTA_1$ \cite{Clemente20}.
On the other hand it decomes decidable if the number of clocks in a deterministic 
timed automaton is a priori bounded \cite{Clemente20}.

Knowing that $\NTA_1$ winning sets yield ubiquitous undecidability of synthesis problems,
a natural follow-up questions is to ask if the situation changes when one restricts to 
subclasses of $\NTA_1$, for instance to \emph{reachability} $\NTA_1$
winning sets.
Since deterministic separability over finite words reduces to timed games with
reachability $\NTA$ winning conditions, this could help solve deterministic separability
of $\NTA_1$ languages.

Finally, we recall that we do not know if the games studied in this paper are determined,
namely if one of the players has always a winning strategy.

\newpage


\bibliography{bib.bib}

\end{document}